\DeclarePairedDelimiter{\floor}{\lfloor}{\rfloor}
\DeclarePairedDelimiter{\norm}{\lVert}{\rVert}
\newcommand{\R}{\mathbb{R}}
\newcommand{\N}{\mathbb{N}}
\newcommand{\id}{\text{id}}
\newcommand\Define[1]{\textbf{#1}}
\DeclareMathOperator{\HS}{HS}
\newcommand{\mult}{\mathop{\circ}}
\newcommand{\commu}{\mathrel{\lvert}}
\newcommand{\bigovee}{\mathop{\vphantom{\sum}\mathchoice%
        {\vcenter{\hbox{\huge $\ovee$}}}%
        {\vcenter{\hbox{\Large $\ovee$}}}%
        {\ovee}{\ovee}}\displaylimits}
\newcounter{counter}
\theoremstyle{definition}
\newtheorem{theorem}[counter]{Theorem}
\newtheorem{proposition}[counter]{Proposition}
\newtheorem{lemma}[counter]{Lemma}
\newtheorem{definition}[counter]{Definition}
\newtheorem{corollary}[counter]{Corollary}
\newtheorem{example}[counter]{Example}
\newtheorem{remark}[counter]{Remark}
\title{The three types of normal sequential effect algebras}
\author{Abraham Westerbaan}
\email{bram@westerbaan.name}
\affiliation{Radboud Universiteit Nijmegen}
\author{Bas Westerbaan}
\email{bas@westerbaan.name}
\affiliation{Radboud Universiteit Nijmegen}
\affiliation{University College London}
\author{John van de Wetering}
\email{john@vandewetering.name}
\affiliation{Radboud Universiteit Nijmegen}
\begin{document}

\maketitle

\begin{abstract}
	A sequential effect algebra (SEA) is an effect algebra equipped with a \emph{sequential product} operation modeled after the L\"uders product $(a,b)\mapsto \sqrt{a}b\sqrt{a}$ on C$^*$-algebras.
    A SEA is called \emph{normal}
    when
    it has all suprema of directed sets,
    and the sequential product interacts suitably with these suprema.
    %these suprema are preserved by the sequential product on the right;
    %   and an element commutes with such a supremum
    %    provided it commutes with all elements in the directed set.
The effects on a Hilbert space
    and the unit interval of a von Neumann or JBW algebra
    are examples of normal SEAs
    that are in addition \emph{convex}, i.e.~possess a
    suitable action of the real unit interval on the algebra.
    Complete Boolean algebras form normal SEAs too,
    which are convex only when~$0=1$.

We show that any normal SEA~$E$
    splits as a direct sum $E= E_b\oplus E_c \oplus E_{ac}$
    of a complete Boolean algebra~$E_b$,
    a convex normal SEA~$E_c$,
    and a newly identified type of normal SEA~$E_{ac}$
    we dub \emph{purely almost-convex}.

Along the way we show, among other things,
that a SEA which contains only idempotents
must be a Boolean algebra;
and we establish a spectral theorem
using which we settle for the class of normal SEAs
a problem of Gudder regarding the uniqueness of square roots.
After establishing our main result,
    we propose a simple extra axiom for normal SEAs that
    excludes the seemingly pathological
    a-convex SEAs.
We conclude the paper by a study of
    SEAs with an associative sequential product.
We find that associativity forces normal SEAs satisfying our new
    axiom to be commutative, shedding light on the question of why the
    sequential product in quantum theory should be non-associative.
\end{abstract}

\section{Introduction}

Understanding the properties and foundations of quantum theory requires contrasting it with hypothetical alternative physical theories and mathematical abstractions. By studying these alternatives it becomes clearer which parts of quantum theory are special to it, and which are present in any reasonable physical theory.

A framework that has been used extensively to study such alternatives is that of \emph{generalised probabilistic theories} (GPTs)~\cite{barrett2007information}. GPTs have built into their definition the classical concepts of probability theory, and so the state and effect spaces of hypothetical physical systems are modelled by \emph{convex} sets.
While convexity is a useful and powerful property, it precludes the study of physical theories that have a more exotic notion of probability. For instance, deterministic or possibilistic `probabilities' studied in contextuality~\cite{abramsky2011sheaf,santos2020possibilistic} or systems where probabilities can vary spatially~\cite{moliner2017space}.

To study systems with this broader notion of probability, a more general structure than convex sets is needed.
This paper is about \emph{effect algebras}, which were introduced in 1994 by Foulis and Bennett~\cite{foulis1994effect}.

Effect algebras generalise and abstract the unit interval of effects in a C$^*$-algebra.
The study of effect algebras has become a flourishing field on its own~\cite{gudder1996examples,gudder1996effect,chajda2009every,ravindran1997structure,jenvca2001blocks,dvurevcenskij2010every,foulis2010type} 
    and covers a variety of topics~\cite{staton2018effect,roumen2016cohomology,jacobs2015effect,Jencova2019propertiesof}.
An effect algebra that has an action of the real unit interval on it is called a \emph{convex} effect algebra~\cite{gudder1999convex}. Such effect algebras essentially recover the standard notion of an effect space in GPTs.

Effect algebras, or even just convex effect algebras are very general structures. So, as in GPTs where additional assumptions are often needed on the physical theory to make interesting statements, we will impose some additional structure on effect algebras.

The effects of a quantum system, i.e.~the positive subunital operators on some Hilbert space or, more generally, positive subunital elements of a C$^*$-algebra or Jordan algebra, have some interesting algebraic structure
(aside from the effect algebra operations). The one we focus on in this paper is the \emph{sequential product}, also known as the \emph{L\"uders product} $(a,b)\mapsto \sqrt{a}b\sqrt{a}$. This product models the act of first measuring the effect $a$ and then measuring the effect $b$.

In order to study the sequential product in the abstract, Gudder and Greechie introduced in 2002~\cite{gudder2002sequential} the notion of a \emph{sequential} effect algebra (SEA), that models the sequential product on an effect algebra as a binary operation satisfying certain properties.
SEAs have been studied by several authors,
    see e.g.~\cite{gudder2005open,gudder2005uniqueness,gudder2008characterization,jun2009remarks,wetering2018characterisation,wetering2018sequential,shen2009not,giski2015entropy,giski2017introduction,jun2009sequential,tkadlec2008atomic,habil2008tensor,basmaster,weihua2009uniqueness}.

Adopting the language of GPTs (and specifically that of reconstructions of quantum theory) we can view the assumed existence of a sequential product on our effect algebras as a physical \emph{postulate} (indeed, such a sequential product has been used for a reconstruction of quantum theory before~\cite{wetering2018sequential}). 
But as is the case for GPTs, we will also need some structural `background' assumption. 
Indeed, in most work dealing with GPTs it is assumed that the sets of states and effects are closed in a suitable topology. 
This models the operational assumption that when we can arbitrarily closely approximate an effect, that this effect is indeed a physical effect itself. Such assumptions are routinely used to, for instance, remove the possibility of infinitesimal effects.

In a (sequential) effect algebra we however have no natural notion of topology and hence no direct way to require such a closure property.
But as an effect algebra is a partially ordered set, we can require that this order is \emph{directed complete}, meaning that any upwards directed subset has a supremum. 
Indeed, this condition is routinely used in theoretical computer science
    to give meaning to algorithms with loops or recursion
    in a branch called \emph{domain theory}~\cite{abramsky1994domain}.
In the field of operator algebras, one possible characterisation
    of von Neumann algebra's is that they are
    C$^*$-algebras that are bounded directed-complete
    and have a separating set of normal states~\cite{kadison1956operator,bramthesis,rennela2014towards}.

Motivated by the structure of von Neumann algebras, we call a sequential effect algebra \emph{normal} when every directed set has a supremum,
    the sequential product preserves these suprema in the second argument,
        and an effect commutes with such a suprema provided
        it commutes with all elements in the directed set.
The set of effects on a Hilbert space is a convex normal SEA.
More generally, the unit interval
    of any JBW-algebra (and so in particular that of any von Neumann algebra)
    is also a convex normal SEA~\cite{wetering2019commutativity}.
As a rather different example,
    any complete Boolean algebra is a normal SEA,
    which is not convex.

A priori, a (normal) SEA is a rather abstract object and there could potentially be rather exotic examples of them. The main result of this paper is to show that for normal SEAs this is, in a sense, \emph{not} the case.

We will show that any normal SEA
    is isomorphic to a direct sum~$E_b\oplus E_c \oplus E_{ac}$,
    where~$E_b$ is a complete Boolean algebra,
    $E_c$ is a convex normal SEA
    and~$E_{ac}$ is a normal SEA that is \emph{purely a-convex}, a new type of effect algebra we will define later on.
We will show there is no overlap:
    for instance, there is no normal SEA that is both purely a-convex and Boolean.

In much the same way as a C$^*$-algebra is the union of its commutative subspaces, we show that a purely a-convex SEA is a union of convex SEAs. Hence, normal SEAs come essentially in two main types: Boolean algebras that model classical deterministic logic, and convex SEAs that fit into the standard GPT framework.
Hence, rather than having convex structure as a starting assumption, we show that it can be derived as a consequence of our other assumptions.

Along the way we will establish several smaller results that might be of independent interest. 
We show any SEA where all elements are sharp (i.e.~idempotent) is a Boolean algebra, and that consequently any SEA with a finite number of elements is a Boolean algebra. 
We introduce a new axiom for SEAs that exclude the seemingly pathological purely a-convex SEAs. Combining this with our main result shows that normal SEAs satisfying this additional axiom neatly split up into a Boolean algebra and a convex normal SEA. 
Finally, we study SEAs where the sequential product is associative. 
We find that associative normal SEAs satisfying our additional axiom must be commutative, and we are able to completely classify the associative normal purely a-convex \emph{factors}, i.e.~SEAs with trivial center.  

This work relies on recent advances made in the representation theory
    of directed-complete \emph{effect monoids}~\cite{first}.
An effect monoid is an effect algebra with an additional
    associative (not necessarily commutative) multiplication
    operation that is additive in both arguments.
Crucially, any commutative normal SEA is a directed-complete
    effect monoid.

Section~\ref{sec:prelims} contains the basic definitions and recalls the necessary results from~\cite{first}.
Then in Section~\ref{sec:boolean} we show that a (normal) SEA where every element is idempotent must be a (complete) Boolean algebra.
In Section~\ref{sec:almostconvex} we prove our main technical results that show that any normal SEA splits up into a Boolean algebra and an a-convex normal SEA.
Then in Section~\ref{sec:purea-convex} we improve this result by showing that an a-convex normal SEA splits up into a convex part and a purely a-convex part, and we introduce a new axiom that excludes the seemingly pathological purely a-convex normal SEAs.
In Section~\ref{sec:assoc} we study the consequences of our representation theorem for the existence of non-commutative associative sequential products. 
Finally, in Section~\ref{sec:conclusion} we speculate on possible future avenues and consequences of our results.

\section{Preliminaries}\label{sec:prelims}

\begin{definition}
    An \Define{effect algebra}
    (EA)~\cite{foulis1994effect} is a set~$E$ with
    distinguished element~$0 \in E$,
    partial binary operation~$\ovee$ (called \Define{sum})
    and (total) unary operation~$a \mapsto~a^\perp$ (called \Define{complement}),
    satisfying the following axioms,
    writing~$a\perp b$ whenever~$a \ovee b$ is defined
        and defining~$1 := 0^\perp$.
\begin{itemize}
\item Commutativity: if $a\perp b$, then $b \perp a$ and $a\ovee b = b \ovee a$.
\item Zero: $a\perp 0$ and $a\ovee 0 = a$.
\item Associativity: if $a\perp b$ and $(a\ovee b)\perp c$, then
    $b\perp c$,
$a\perp (b \ovee c)$, and $(a\ovee b) \ovee c = a\ovee (b\ovee c)$.
\item The complement
    $a^\perp$ is the unique element with $a \ovee a^\perp = 1$.
\item If $a\perp 1$, then $a=0$.
\end{itemize}
For~$a,b \in E$ we write~$a \leq b$ whenever there is a~$c \in E$
    with~$a \ovee c = b$.
This turns~$E$ into a poset with minimum~$0$ and maximum~$1$.
The map~$a \mapsto a^\perp$ is an order anti-isomorphism.
Furthermore, $a \perp b$ if and only if~$a \leq b^\perp$.
If~$a \leq b$, then the element~$c$ with~$a \ovee c = b$
    is unique and is denoted by~$b \ominus a$.
\end{definition}

\begin{remark}
    We pronounce $a\perp b$ as `$a$ is summable with $b$'.
    In the literature this is more often referred to as orthogonality,
    but we avoid this terminology,
    as it clashes with the orthogonality 
    for sequential effect algebras
    we will see in Definition~\ref{defn:sea}.
    We use the symbol $\ovee$ to denote the effect algebra sum
    following for instance~\cite{jacobs2012coreflections,kentathesis}. The symbol~$\oplus$ is also commonly used in the literature.
\end{remark}

\begin{example}\label{ex:orthomodularlattice}
    Let~$B$ be a Boolean algebra (or more generally, an orthomodular
        poset~\cite{holland1975current,maeda,loomis}.)
    Then $B$ is an effect algebra with the partial addition defined by 
    $x\perp y \iff x\wedge y = 0$ and in that case  $x\ovee y = x\vee y$.
    The complement, $(\ )^\perp$, is given by the complement, $(\ )^\perp$
    (or the orthocomplement for an orthomodular lattice.)
    The lattice order coincides with the effect algebra order (defined
    above). See e.g.~\cite[Prop.~27]{basmaster}
    or~\cite[\S5]{foulis1994effect}.
\end{example}

\begin{example}\label{ex:cstaralgebra}
Let~$G$ be an ordered abelian group
    (such as an ordered vector space, for example a C$^*$-algebra,)
    and let $u\in G$ be any positive element.
    Then any interval~$[0,u]_G =
    \{a\in G~;~ 0\leq a \leq u\}$ forms an effect algebra, 
    where the effect algebra sum $a\ovee b$ of $a$ and $b$ 
    from the interval is defined when $a+b \leq u$ and 
    in that case coincides
    with the group addition: $a\ovee b = a+b$.
    The complement is
    given by $a^\perp = u-a$.
    The effect algebra order on~$[0,u]_G$ coincides
    with the regular
    order on~$G$.  
    
    In particular,
    the set of effects $[0,1]_C$
    of a unital C$^*$-algebra~$C$ forms an effect algebra
    with $a\perp b\iff a+b\leq 1$,
    and~$a^\perp = 1-a$.
\end{example}

\begin{remark}
In GPTs, the set of effects of a system is often modelled by the unit interval of an ordered vector space. Hence, the above example demonstrates how these effect spaces fit into the notion of an effect algebra. The corresponding addition of effects in a GPT is often called \emph{coarse-graining}.
\end{remark}

\begin{example}\label{eaprod}
If~$E$ and~$F$ are effect algebras, then
    its Cartesian product~$E \oplus F$
    with component-wise operations
    is again an effect algebra.\footnote{This is in fact 
    a categorical product
        with the obvious projectors and as morphisms
        maps~$f$ with: $f(1)=1$ and~$a \perp b$ implies~$f(a) \perp f(b)$
        and~$f(a)\ovee f(b) = f(a\ovee b)$.}
\end{example}

We will often implicitly use the following basic facts about effect algebras.
\begin{proposition}
    In any effect algebra we have (see e.g.~\cite{dvurecenskij2013new} or
        \cite[\S175V]{basthesis})
    \begin{enumerate}
        \item \emph{(involution)} $a^{\perp\perp} = a$;
        \item \emph{(positivity)} if $a\ovee b = 0$, then~$a=0$ and~$b=0$;
        \item \emph{(cancellation)} if~$a \ovee b = a \ovee c$, then~$b = c$;
        \item $a \leq b$ iff $b^\perp \leq a^\perp$ \emph{and}
        \item $a \perp b$ iff $a \leq b^\perp$.
    \end{enumerate}
\end{proposition}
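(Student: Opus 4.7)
The plan is to derive the five items in order from the effect algebra axioms, using each as a stepping stone for the next. The main tool throughout is the associativity axiom, which re-associates a triple sum $(x \ovee y) \ovee z = x \ovee (y \ovee z)$, applied alongside the uniqueness clause in the complement axiom.

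For \emph{involution}, commutativity turns $a \ovee a^\perp = 1$ into $a^\perp \ovee a = 1$, so the uniqueness of the complement of $a^\perp$ forces $a^{\perp\perp} = a$. For \emph{positivity}, assume $a \ovee b = 0$. The zero axiom supplies $0 \perp 1$, hence $(a \ovee b) \perp 1$. Feeding this and $a \perp b$ into associativity yields $b \perp 1$, and the fifth axiom then forces $b = 0$; symmetrically $a = 0$.

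For \emph{cancellation}, assume $a \ovee b = a \ovee c$ and put $e := (a \ovee b)^\perp$. One application of associativity to $a \perp b$ and $(a \ovee b) \perp e$ gives $a \ovee (b \ovee e) = 1$, whence $b \ovee e = a^\perp$ by uniqueness of complements. A second application to $b \perp e$ and $(b \ovee e) \perp a$ gives $b \ovee (e \ovee a) = 1$, so $b = (e \ovee a)^\perp$. The identical computation for $c$ gives $c = (e \ovee a)^\perp$, and hence $b = c$. For item (4), if $a \ovee c = b$ the same manoeuvre with $b^\perp$ in place of $e$ produces $c \ovee b^\perp = a^\perp$, witnessing $b^\perp \leq a^\perp$; involution then supplies the converse. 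For item (5), if $a \perp b$ the analogous re-association yields $b \ovee (a \ovee b)^\perp = a^\perp$, so $b \leq a^\perp$, and (4) together with involution converts this into $a \leq b^\perp$. Conversely, from $a \ovee c = b^\perp$ one uses commutativity to rewrite the sum as $c \ovee a = b^\perp$, so that associativity applied to $c \perp a$ and $(c \ovee a) \perp b$ immediately delivers $a \perp b$ as the first conclusion of the axiom.

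The only real obstacle is bookkeeping: each use of associativity demands that both premises $x \perp y$ and $(x \ovee y) \perp z$ hold, and the choice of which two summands to group first dictates which conclusion emerges. The small trick in the converse direction of (5) is precisely to swap $a \ovee c$ to $c \ovee a$ via commutativity before applying the axiom, so that the conclusion $y \perp z$ of associativity becomes exactly $a \perp b$. Beyond this, the arguments are routine and follow the standard treatments in \cite{dvurecenskij2013new} and \cite[\S175V]{basthesis}.
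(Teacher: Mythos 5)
Your proof is correct: each of the five items follows as you describe, and the careful tracking of which summands are grouped before applying the associativity axiom (together with the uniqueness clause of the complement axiom) is exactly what is needed. The paper itself gives no proof of this proposition, deferring to the cited references, and your derivation matches the standard argument found there, so there is nothing further to compare.
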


Some effect algebras are more closely related to ordered vector spaces (like the example of a C$^*$-algebra above or the effect spaces coming from a GPT). We call these effect algebras convex:
\begin{definition}\label{def:convex}
  A \Define{convex action}
  on an effect algebra~$E$ is 
   a map $\cdot: [0,1]\times E\rightarrow E$, 
   where $[0,1]$ is the regular unit interval, 
   obeying
    the following axioms for all $a,b\in E$ and~$\lambda, \mu \in
    [0,1]$:
    \begin{itemize}
        \item $\lambda\cdot (\mu\cdot a) = (\lambda\mu)\cdot a$.
    \item If $\lambda+\mu \leq 1$, then $\lambda \cdot a \perp
    \mu \cdot a$ and $\lambda \cdot a \ovee \mu \cdot a =
    (\lambda+\mu)\cdot a$.
        \item $1\cdot a = a$.
        
        \item $\lambda\cdot (a\ovee b) \ = \ \lambda\cdot a \ovee \lambda\cdot b$.
    \end{itemize}
A \Define{convex effect algebra} \cite{gudder1999convex}
is an effect algebra endowed with such a convex action\footnote{Usually, a convex effect algebra is defined simply as an effect algebra $E$ where for every $a\in E$ and $\lambda\in[0,1]$ there is an element $\lambda a\in E$ satisfying the axioms above. However, in this paper we face the possibility of an effect algebra having many inequivalent convex actions. This is why we are explicit in defining a convex effect algebra as an effect algebra equipped with a convex action.}.
  We will say that an effect algebra~$E$ is \Define{convex}
  when there is at least one convex action on~$E$.
\end{definition}

\begin{example}
    Let $V$ be an ordered real vector space (such as the space of self-adjoint elements of a C$^*$-algebra). Then any interval $[0,u]_V$ where $u\geq 0$ is a convex effect algebra with the obvious action of the real unit interval. 
    Conversely, for any convex effect algebra $E$, we can find an ordered real vector space $V$ and $u\in V$ such that $E$ is isomorphic as a convex effect algebra to $[0,u]_V$~\cite{gudder1998representation}.
\end{example}

\begin{remark}
    Convex effect algebras have been well-studied, see e.g.~\cite{jacobs2011probabilities,gudder2018convex,wetering2018characterisation,wetering2018reconstruction,Jencova2019propertiesof}. 
  In the literature on effectus theory, 
  convex effect algebras are also 
  called \emph{effect modules}~\cite{cho2015introduction,kentathesis}.
In Definition~\ref{def:almostconvex} 
  we introduce \emph{a-convex} (almost convex) 
  effect algebras, by dropping the last axiom.
\end{remark}

\begin{definition}
    Let $E$ be a partially ordered set (such as an effect algebra).
    A subset~$S$ of~$E$ is called \Define{directed}
    when it is non-empty,
    and for any  $a,b\in S$ there exists a $c\in
    S$ such that $a,b\leq c$. We say that~$E$ is \Define{directed complete} when
    every directed subset~$S$ of~$E$ has a supremum, $\bigvee S$.
\end{definition}

\begin{example}
Any complete Boolean algebra is a directed-complete
effect algebra.
\end{example}

\begin{example}
    Let~$\mathscr{A}$ be a unital C$^*$-algebra.
Then $[0,1]_{\mathscr{A}}$ is a directed-complete effect algebra
    if and only if~$\mathscr{A}$
    itself  is bounded-directed complete, that is: if every
    bounded set of self-adjoint elements has a least upper bound.
Such C$^*$-algebras are called \emph{monotone complete}
    or \emph{monotone closed}, \cite[\S2]{kadison1971equivalence},
    and include all von Neumann algebras.
A commutative unital C$^*$-algebra,
    being of the form $C(X)$ for some compact Hausdorff space~$X$,
    is bounded-directed complete
    if and only if~$X$ is extremally disconnected~\cite[Sections~1H \& 3N.6]{gillman2013rings}.
\end{example}

\begin{remark}\label{remark:infima}
	Because the complement acts as an order anti-isomorphism,
	a directed-complete effect algebra also has infima for all
	\Define{filtered}, i.e.~downwards directed, sets. In
	particular, if we have a decreasing sequence $a_1\geq a_2\geq
	a_3\geq \cdots$ in a directed-complete effect algebra, then
	this has an infimum $\bigwedge_n a_n$.
\end{remark}

\subsection{Sequential effect algebras}

We now come to the definition of the central object of study of this paper, 
the sequential effect algebra. Before we give the formal definition, let us give some motivation for the assumptions we will require.
The sequential product $a\mult b$ of two effects $a$ and $b$ represents the sequential measurement of first $a$ and then $b$.
An important difference between classical and quantum systems is that in a classical system we can measure without disturbance, and hence the order of measurement is not important: $a\mult b = b\mult a$ for all effects $a$ and $b$.
In a quantum system this is generally not the case, and the order of measurement is important.
However, what is interesting in quantum theory is that some measurements are \emph{compatible}, meaning that the order of measurement for those measurements is not important.%
\footnote{The word `compatible' is used for this purpose when it comes to sequential effect algebras~\cite{gudder2002sequential}. However, in other parts of the literature `compatible' is used for a different notion, and the term `non-disturbing' might be used instead~\cite{Heinosaari2019nofreeinformation}.} 
We will use the symbol $a\commu b$ to denote that $a$ and $b$ are compatible effects. 
For such compatible effects $a$, $b$ and $c$ we can expect that doing `classical operations' or `classical post-processing' on these effects retains compatibility. For instance, if $a\commu b$, and we negate the outcome of $a$ to get $a^\perp$, we still expect $a^\perp \commu b$, as negation can be done classically without interacting with the quantum system. Similarly, we would expect $a\commu b\ovee c$, as addition of effects corresponds to classically coarse-graining the measurement outcomes.

Let us now give the formal definition:
\begin{definition}
  \label{defn:sea}
    A \Define{sequential effect algebra}
    (SEA)~\cite{gudder2002sequential}~$E$
    is an
     effect algebra with an additional (total)
    binary operation~$\mult$,
        called the \Define{sequential product},
        satisfying the axioms listed below,
        where $a,b,c\in E$.
        Elements~$a$ and~$b$ are said to \Define{commute},
            written~$a \commu b$,
            whenever~$a \mult b = b \mult a$.
    \begin{enumerate}
        \item[S1.]
            $a\mult (b\ovee c) = a\mult b \ovee a \mult c$
            whenever~$b\perp c$.
        \item[S2.]
            $1\mult a = a$.
        \item[S3.] 
            $a\mult b = 0 \implies b\mult a =0$.
        \item[S4.]
            If $a\commu b$, then $a\commu b^\perp$ and $a\mult
                (b\mult c) = (a\mult b)\mult c$ for all $c$.
        \item[S5.] 
                If $c\commu a$ and $c\commu b$ then also $c\commu
	    a\mult b$ and if furthermore $a\perp  b$, then $c\commu a\ovee
	    b$.
    \end{enumerate}
    A SEA~$E$ is called \Define{normal}
        when~$E$ is directed complete, and
    \begin{enumerate}[resume]
        \item[S6.]
            Given directed~$S\subseteq E$ we have
            $a\mult \bigvee S = \bigvee_{s\in S} a\mult s$,
            and $a\commu \bigvee S$ when $a\commu s$ for all~$s\in S$.
    \end{enumerate}
    We say~$E$ is \Define{commutative} whenever~$a \commu b$ for
        all~$a,b \in E$.
    An element~$a \in E$ is \Define{central}
        if it commutes with every element in~$E$.
        The \Define{center}~$Z(E)$ of~$E$ is the set of all central elements.
    We call~$p \in E$ an \Define{idempotent}
        whenever~$p^2=  p\mult p = p$
    (or equivalently~$p \mult p^\perp = 0$).
    We call~$E$ \Define{Boolean}
        if every element is an idempotent\footnote{
        We will see that a Boolean SEA is a Boolean algebra,
         see Proposition~\ref{prop:SEAsharpisBoolean}.}.
    We say~$a,b \in E$ are \Define{orthogonal},
        provided that~$a \mult b =0$.
\end{definition}

  The discussion above Definition~\ref{defn:sea}
  should make clear why we have axioms S2, S4 and S5. 
Axiom S3 is less operationally motivated, but is a useful property as it makes the notion of orthogonality well-behaved. 
Axiom S6 is our way of stating continuity of the product without having access to an actual topology. 
Indeed, if $E$ is the unit interval of a von Neumann algebra, then Axiom S6 says  that the product is ultraweakly continuous in the second argument.
Axiom S1 can be understood operationally as follows.
Consider an ensemble of systems all prepared in the same state and that we measure them all with the effect $a$. We then measure some of the states with $b$ and others with $c$. By adding up the probabilities of success of these measurement outcomes we model the effect $a\mult (b\ovee c)$ as we have `$a$ and then $b$ or $c$'. Alternatively, we could measure one half of the states with $a\mult b$ and the other with $a\mult c$ and add up the outcomes to get $(a\mult b) \ovee (a\mult c)$. But looking at the interactions we have with the system, these two protocols are obviously equivalent, and hence we should have $a\mult (b\ovee c) = (a\mult b) \ovee (a\mult c)$.
Note that we do not in general expect this property to hold in the other argument: $(a\ovee b)\mult c \neq (a\mult c) \ovee (b\mult c)$. Indeed, the addition is a classical operation, and there is no way in general to perform the measurement $(a\ovee b)\mult c$ as it requires the measurement `$a$ or $b$' to be performed on a single system.

\noindent The motivating example of a (normal) sequential effect algebra is the unit interval of a C$^*$-algebra:

\begin{example}
    \label{ex:canonical-sea}
If $\mathscr{A}$ is a C$^*$-algebra, then the EA~$[0,1]_{\mathscr{A}}$ is a SEA with
    the sequential product defined by~$a \mult b := \sqrt{a} b \sqrt{a}$,
    see~\cite{gudder2002sequential}.
    If $\mathscr{A}$ is furthermore bounded-directed complete
        (for instance if it is a von Neumann algebra),
    then $[0,1]_\mathscr{A}$ is a normal SEA
    with the same sequential product.
    It is also possible to define a sequential product using only the Jordan algebra structure. In particular, any JB-algebra is a convex SEA, while any JBW-algebra is a convex normal SEA. For the details we refer to~\cite{wetering2019commutativity}.
\end{example}
%\aw{What about the unit interval of Jordan algebras? We mention
%JBW-algebras in the abstract, but make no follow-up.}

We will use the following properties without further reference in the remainder of the paper.
\begin{proposition}[Cf.~\S3 of \cite{gudder2002sequential}]\label{prop:SEAbasicproperties}
    Let $E$ be a SEA with $a,b,c,p \in E$
        with~$p$ idempotent.
    \begin{enumerate}
        \item
            \label{prop:SEAbasicproperties-1}
            $a\mult 0 = 0 \mult a = 0$ and $a\mult 1 = 1\mult a = a$.
        \item 
            \label{prop:SEAbasicproperties-2}
            $a\mult b \leq a$.
        \item 
            \label{prop:SEAbasicproperties-3}
            If $a\leq b$, then $c\mult a\leq c\mult b$.
        \item
            \label{prop:SEAbasicproperties-4}
            $p\leq a$ iff $p\mult a =p$ iff $a\mult p = p$
            iff $a^\perp\circ p = 0$
            iff $p \circ a^\perp = 0$.
        \item
            \label{prop:SEAbasicproperties-5}
            $a\leq p$ iff $p\mult a=a$ iff $ a\mult p = a$
            iff $a\circ p^\perp = 0$ iff $p^\perp\circ a=0$.
        \item
            \label{prop:SEAbasicproperties-6}
            $p^\perp$ is idempotent.
        \item
            \label{prop:SEAbasicproperties-7}
            If $p \perp a$, then~$a$ is idempotent if and only if~$p \ovee a$
                is idempotent.
    \end{enumerate}
\end{proposition}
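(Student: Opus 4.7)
The plan is to prove the seven items in an order that lets the early ones bootstrap the later ones, slotting (6) in between (3) and (4). First I would establish the small identities in (1): $a \mult 0 = 0$ comes from applying S1 to $0 = 0 \ovee 0$ and cancelling, then S3 gives $0 \mult a = 0$. These together show $a \commu 0$, so S4 yields $a \commu 0^\perp = 1$ and hence $a \mult 1 = 1 \mult a = a$ via S2. Part (2) is then immediate from S1 applied to $1 = b \ovee b^\perp$, and (3) from S1 applied to $b = a \ovee (b \ominus a)$. For (6), expanding $p = p \mult 1 = p \mult p \ovee p \mult p^\perp$ together with $p \mult p = p$ and cancellation forces $p \mult p^\perp = 0$; S3 then yields $p^\perp \mult p = 0$, and a symmetric expansion of $p^\perp = p^\perp \mult 1$ gives $p^\perp \mult p^\perp = p^\perp$.

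For (4), I would arrange the five equivalent conditions as a cycle. From $p \leq a$, monotonicity (3) sandwiches $p = p \mult p \leq p \mult a \leq p \mult 1 = p$, forcing $p \mult a = p$; expanding $p = p \mult a \ovee p \mult a^\perp$ then gives $p \mult a^\perp = 0$, S3 turns this into $a^\perp \mult p = 0$, whence $p \commu a^\perp$ and so by S4 $p \commu a$, so that $a \mult p = p \mult a = p$. The remaining implications close easily via (2) and S3. Part (5) reduces to (4) applied to $p^\perp \leq a^\perp$, using that $p^\perp$ is idempotent by (6) and translating each condition through complementation.

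Part (7) is the most delicate. In the forward direction, $p \perp a$ and $p$ idempotent put $a \leq p^\perp$, so (5) gives $p \mult a = a \mult p = 0$, i.e.\ $a \commu p$. Two applications of S5 (with $c = p$ and with $c = a$) then show $p \commu (p \ovee a)$ and $a \commu (p \ovee a)$. Using S1 we compute $p \mult (p \ovee a) = p$ and $a \mult (p \ovee a) = a$; commuting these to the right and applying S1 once more to the outer product yields $(p \ovee a) \mult (p \ovee a) = p \ovee a$. For the converse, let $q := p \ovee a$ be idempotent; then $p \leq q$ and (4) give $q \mult p = p$, so $q = q \mult q = q \mult p \ovee q \mult a = p \ovee q \mult a$ forces $q \mult a = a$. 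The same S5 argument yields $a \commu q$, hence $a \mult a = a \mult q = q \mult a = a$. The main obstacle throughout is the fundamental asymmetry of S1, which distributes only in the second argument: every time we want to rearrange a product from the left we must first earn commutation via S4 or S5, and these axioms need prior commutation as input, which is what dictates the somewhat circuitous routes in (1), (4) and (7).
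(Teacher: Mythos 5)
Your proof is correct and takes essentially the same route as the paper's: the same S1/cancellation computations for (1)--(3) and (6), the same cycle of implications through $p\mult a^\perp = 0$, S3 and S4 for (4)/(5), and the same S5-driven commutation argument for the forward half of (7). The only divergences are minor: you prove (4) directly and obtain (5) by complementation where the paper does the reverse (your ordering, with (6) established first, is if anything cleaner, though note that the conditions $p\mult a = a$ and $a\mult p = a$ of (5) are not literal complement-translates of conditions in (4), so a one-line patch via $a = a\mult p \ovee a\mult p^\perp$ and S4 is still needed --- just as the paper needs one in the mirrored direction), and your converse of (7) argues directly from $q\mult a = a$ and $a \commu q$ rather than via the paper's trick of writing $a^\perp$ as a sum of summable idempotents and invoking the forward direction.
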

\begin{proof}
Concerning~\ref{prop:SEAbasicproperties-1}:
Since~$0=0\ovee 0$ by~S1,
we have $a\mult 0 = a\mult 0\,\ovee\, a\mult 0$,
and so~$a\mult 0 = 0$, by cancellativity of the addition in an effect algebra.
Then~$0\mult a=0$ too, by~S3.
In particular, $a\commu 0$, and so~$a\commu 0^\perp = 1$, by~S4.
With~S2, we get $a\mult 1 = 1\mult a = a$.

Point~\ref{prop:SEAbasicproperties-3}:
when~$a\leq b$, we have $b=a\ovee (b\ominus a)$,
    and so~$c\mult b = c\mult a\,\ovee\, c\mult(b\ominus a)
    \geq c\mult a$,
    by~S1. 
    Taking~$b=1$,
    we get~$c\mult a \leq c\mult 1 = c$,
    by~\ref{prop:SEAbasicproperties-1},
    and so we have~\ref{prop:SEAbasicproperties-2}.

To show pt.~\ref{prop:SEAbasicproperties-5},
    we will prove that we have the implications~$
    a \leq p \Rightarrow
    p^\perp \mult a = 0 \Rightarrow
    a \mult p^\perp = 0 \Rightarrow
    a \mult p = a \Rightarrow
    p \mult a = 0 \Rightarrow
    a \leq p$ and thus they are all equivalent.
    So first suppose that~$a\leq p$.
    Then~$p^\perp \mult a \leq p^\perp \mult p = 0$, 
    by~\ref{prop:SEAbasicproperties-3},
    so~$p^\perp \mult a = 0$.
Now suppose instead that $p^\perp \mult a=0$, which
is  equivalent to $a \mult p^\perp=0$ by~S3,
    which in turn is equivalent to~$a=a\mult 1 = a\mult (p\ovee p^\perp) = a\mult p$ by~S2.
Since then~$a\commu p^\perp$,
we get $a\commu p$, by~S4,
and so~$p \mult a = a\mult p = a$.
Since~$p\mult a=a$ on its own entails that~$a\leq p$,
by~\ref{prop:SEAbasicproperties-2},
we are back where we started, and therefore done.

For~\ref{prop:SEAbasicproperties-4},
note that~$p\leq a$ iff
$a^\perp \leq p^\perp$
iff $a^\perp \circ p = 0$
iff $p\circ a^\perp = 0$
by~\ref{prop:SEAbasicproperties-5}.
Moreover, since~$p\mult a=p$ is equivalent to~$p\mult a^\perp=0$ by~S1,
    and $a\mult p = p$ entails $p\leq a$ by~\ref{prop:SEAbasicproperties-2},
the only thing left to show is that $p\leq a$ implies 
    $a\mult p = p$.
So suppose that~$p\leq a$.
We already know that~$a^\perp \mult p = 0 = p \mult a^\perp = 0$
and~$p \mult a = p$.  
Since thus~$a\commu p^\perp$,
and so~$a\commu p$ by~S4,
we get~$a\mult p = p\mult a= p$.
We continue with pt.~\ref{prop:SEAbasicproperties-6}.
Note that~$a$ is idempotent iff~$a \mult a^\perp = 0$.
Thus~$p \mult p^\perp = 0$
    and so~$p^\perp \mult p = 0$ by S3.
    Hence~$p^\perp$ is indeed idempotent.

Finally, we move to point \ref{prop:SEAbasicproperties-7}.
Assume~$a$ is idempotent.
Clearly~$p \leq p \ovee a$
    and so~$(p \ovee a) \mult p = p$ by~\ref{prop:SEAbasicproperties-4}.
    Similarly~$(p \ovee a) \mult a = a$.
    Thus~$(p \ovee a) \mult (p \ovee a) = ((p \ovee a) \mult p)
    \ovee ((p \ovee a) \mult a) = p \mult a$, as desired.
Conversely, assume~$p \ovee a$ is an idempotent.
Note~$p^\perp$ and~$(p \ovee a)^\perp$ are summable idempotents and
    so by the previous point~$a^\perp = p^\perp \ovee (p \ovee a)^\perp$
    is idempotent as well. Thus~$a$ is indeed idempotent.
\end{proof}

The next five lemmas were originally
    proven in~\cite{first} for effect monoids,
    and we will need them for our results in the context of SEAs.

\begin{lemma}\label{lem:summableunderidempotent}
Let $E$ be a SEA with $p,a,b \in E$ and $p$ idempotent.
    If~$a,b\leq p$ and~$a\ovee b$ exists,
then~$a\ovee b\leq p$.
\end{lemma}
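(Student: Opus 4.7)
The plan is to reduce everything to the characterisation of the relation $a \leq p$ given in Proposition~\ref{prop:SEAbasicproperties}.\ref{prop:SEAbasicproperties-5}: for an idempotent~$p$, we have $a \leq p$ if and only if $p \mult a = a$. So to prove $a \ovee b \leq p$, it suffices to verify $p \mult (a\ovee b) = a \ovee b$.

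First I would use the hypotheses $a \leq p$ and $b \leq p$ together with Proposition~\ref{prop:SEAbasicproperties}.\ref{prop:SEAbasicproperties-5} to get $p\mult a = a$ and $p\mult b = b$. Then, since $a\ovee b$ is assumed to exist, we have $a \perp b$, so axiom S1 applies and yields
\begin{equation*}
    p\mult (a\ovee b) \ =\ p\mult a \,\ovee\, p \mult b \ =\ a\ovee b.
\end{equation*}
Applying Proposition~\ref{prop:SEAbasicproperties}.\ref{prop:SEAbasicproperties-5} once more, in the other direction, we conclude $a\ovee b\leq p$.

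There is no real obstacle here: the statement is essentially the observation that the downset $\{x \in E : x \leq p\}$ of an idempotent is closed under existing finite sums, which follows from the two already-established facts that (i)~$a\leq p$ is equivalent to being fixed by $p\mult(-)$ and (ii)~$p\mult(-)$ is additive. The only thing to be careful about is to use the distributivity axiom S1 in the correct argument, namely the right-hand side of $\mult$, where distributivity over $\ovee$ does hold unconditionally (given summability).
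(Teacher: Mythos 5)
Your proof is correct and follows essentially the same route as the paper: both reduce the claim to the characterisation of $a\leq p$ in Proposition~\ref{prop:SEAbasicproperties}.\ref{prop:SEAbasicproperties-5} and then apply the additivity axiom S1 in the second argument. The only cosmetic difference is that the paper uses the equivalent form $p^\perp\mult a=0$ where you use $p\mult a=a$.
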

\begin{proof}
Because $a\leq p$, we have $p^\perp \mult a = 0$.
Since similarly, $p^\perp\mult b=0$,
    we have $p^\perp\mult (a\ovee b)
    = 0$, and so $a\ovee b \leq p$.
\end{proof}

\begin{lemma}\label{lem:selfsummable}
    In a SEA, $a \mult a^\perp$ is summable with itself for any element~$a$.
\end{lemma}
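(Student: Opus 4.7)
The plan is to show that $y := a\mult a^\perp$ satisfies $y \leq a^\perp$ and $y \leq a$, and then to extract $y\perp y$ directly from the effect algebra's associativity axiom.

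First, I would establish that $a \mult a^\perp = a^\perp \mult a$. Trivially $a \commu a$ (since $a\mult a = a\mult a$), so by axiom S4 we get $a \commu a^\perp$. Set $y := a\mult a^\perp = a^\perp \mult a$. Now apply S1 to the identity $a = a\mult(a \ovee a^\perp)$: this gives $(a\mult a) \ovee y = a$, so in particular $(a\mult a) \perp y$. Apply S1 similarly to $a^\perp = a^\perp\mult(a\ovee a^\perp)$: this gives $y \ovee (a^\perp\mult a^\perp) = a^\perp$, which tells us $y \leq a^\perp$, equivalently $y \perp a$.

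Now the two facts I have in hand are: $(a\mult a) \perp y$ and, writing $a$ as $(a\mult a)\ovee y$, the relation $((a\mult a)\ovee y) \perp y$ (the latter being just $a\perp y$). These are precisely the hypotheses of the effect algebra associativity axiom applied to the triple $(a\mult a,\, y,\, y)$, whose conclusion includes $y\perp y$. That gives the desired summability.

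I do not expect any real obstacle here; the only subtlety is recognising that one needs to use S4 to swap $a\mult a^\perp$ and $a^\perp\mult a$ in order to make the two applications of S1 produce the \emph{same} middle term $y$, and then spotting that associativity of $\ovee$ does the rest without any further appeal to the sequential product axioms.
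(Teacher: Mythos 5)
Your proof is correct and follows essentially the same route as the paper's: both use S4 to identify $a\mult a^\perp$ with $a^\perp\mult a$ and then expand $1=a\ovee a^\perp$ via S1 into $a^2\ovee(a\mult a^\perp)\ovee(a^\perp\mult a)\ovee(a^\perp)^2$, concluding summability of the two equal middle terms. You merely make explicit the appeal to the effect-algebra associativity axiom that the paper leaves implicit in writing the four-fold sum.
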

\begin{proof}
Note that since~$a\commu a$,
we have~$a\commu a^\perp$ by S4,
or in other words  $a\mult a^\perp = a^\perp \mult a$.
Since
    $1=a\ovee a^\perp 
    = a\mult (a\ovee a^\perp) \,\ovee\, a^\perp\mult (a\ovee a^\perp)
    = a\mult a\,\ovee\, a\mult a^\perp\,\ovee\, a^\perp \mult a
    \,\ovee\, a^\perp \mult a^\perp =
    a^2 \ovee 2(a \mult a^\perp )\ovee (a^\perp)^2$,
    we see that~$a\mult a^\perp$ is indeed summable with itself.
\end{proof}

\begin{lemma}\label{lem:addition-normal}
    Let $E$ be a directed-complete effect algebra, and let $S\subseteq E$ be some directed non-empty subset. Let $a\in E$ be such that $a\perp s$ for all $s\in S$. Then $a\perp \bigvee S$ and $a\ovee \bigvee S = \bigvee_{s\in S} a\ovee s$.
\end{lemma}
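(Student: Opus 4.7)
The plan is to translate summability into the order inequality $s\leq a^\perp$, use directed-completeness to get the first claim, and then prove the equation by a standard sandwich argument using $\ominus$.

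First I would observe that for each $s\in S$, the assumption $a\perp s$ is equivalent to $s\leq a^\perp$. Hence $a^\perp$ is an upper bound for $S$, so $\bigvee S\leq a^\perp$, which is equivalent to $a\perp\bigvee S$. This gives the first assertion and also ensures that the expression $a\ovee\bigvee S$ makes sense.

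Next I would check that the family $T:=\{a\ovee s\mid s\in S\}$ is directed: given $s_1,s_2\in S$, take $s\in S$ with $s_1,s_2\leq s$; since $a\perp s$, we have $a\ovee s_i\leq a\ovee s$ (if $s_i\ovee c = s$ then $(a\ovee s_i)\ovee c = a\ovee s$). Hence $\bigvee T$ exists by directed-completeness. The same monotonicity, applied to $s\leq\bigvee S$, yields $a\ovee s\leq a\ovee\bigvee S$ for every $s\in S$, and therefore
\begin{equation*}
\bigvee_{s\in S}(a\ovee s) \ \leq\ a\ovee\bigvee S.
\end{equation*}

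For the reverse inequality, set $b:=\bigvee_{s\in S}(a\ovee s)$. Since $a\leq a\ovee s\leq b$ for any $s\in S$, the difference $b\ominus a$ is defined. For each $s\in S$, from $a\ovee s\leq b$ I write $(a\ovee s)\ovee c = b$ for some $c$; by associativity this is $a\ovee(s\ovee c)=b$, whence $s\ovee c=b\ominus a$ by uniqueness of complements/cancellation, giving $s\leq b\ominus a$. Thus $b\ominus a$ is an upper bound for $S$, so $\bigvee S\leq b\ominus a$, which rearranges to $a\ovee\bigvee S\leq a\ovee(b\ominus a)=b$. Combining both inequalities gives the desired equality.

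There is no real obstacle here; the only subtlety is making sure the directed supremum $\bigvee_{s\in S}(a\ovee s)$ actually exists before manipulating it, which is handled by the directedness check of $T$ above, and being careful that $\ovee$ is monotone in its arguments only when the relevant sums are defined — which is precisely what $a\perp\bigvee S$ buys us.
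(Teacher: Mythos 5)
Your proof is correct and is essentially the paper's argument unpacked: the paper simply observes that $b\mapsto a\ovee b$ is an order isomorphism from $[0,a^\perp]_E$ onto $[a,1]_E$ (with inverse $b\mapsto b\ominus a$) and hence preserves suprema, whereas you verify the two inequalities behind that fact by hand. Both rest on the same translation of summability into $s\leq a^\perp$ and on monotonicity of $\ovee$ and $\ominus$, so there is nothing to add.
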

\begin{proof}
    The map $b\mapsto a\ovee b$, giving an order 
    isomorphism from $[0, a^\perp]_E$ to $[a, 1]_E$ 
    (with inverse $b\mapsto b\ominus a$,)
    preserves (and reflects) suprema.
    In particular, $\bigvee S$,
    the supremum of~$S$ in~$E$,
    which is the supremum of~$S$ in~$[0,a^\perp]_E$ too,
    is mapped to $\bigvee_{s\in S} a\ovee s$,
    the supremum of the $a\ovee s$ in $E$, and in~$[a,1]_E$ too.
\end{proof}

\begin{lemma}
\label{lem:archemedeanomegadirectedcomplete}
The only element~$a$ of a directed-complete
effect algebra~$E$
for which the $n$-fold sum~$na$ exists for all~$n$
is zero.
\end{lemma}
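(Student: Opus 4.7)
The plan is to exploit the chain $a \leq 2a \leq 3a \leq \cdots$ generated by the hypothesis, take its supremum, and show that this supremum is a fixed point of the operation $b \mapsto a \ovee b$, so that cancellativity forces $a = 0$.

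First I would observe that if $na$ exists for every $n$, then the set $S = \{na : n \in \N\}$ is totally ordered (hence directed), because $(n+1)a = na \ovee a \geq na$. By directed completeness, $s := \bigvee_{n} na$ exists in $E$. Moreover, the hypothesis that $(n+1)a$ exists means precisely that $a \perp na$ for every $n \geq 1$ (and trivially for $n=0$ as well, since $a \perp 0$).

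Next I would apply Lemma~\ref{lem:addition-normal} to the element $a$ and the directed set $S$: since $a \perp na$ for every $n$, we conclude that $a \perp s$ and
\[
a \ovee s \;=\; a \ovee \bigvee_{n} na \;=\; \bigvee_{n} (a \ovee na) \;=\; \bigvee_{n} (n+1)a \;=\; s,
\]
where the last equality holds because $\{(n+1)a : n \in \N\}$ is a cofinal subset of $S$. Thus $a \ovee s = s = 0 \ovee s$, and by cancellativity of the effect algebra sum we obtain $a = 0$.

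I don't foresee any real obstacle: the argument is essentially a one-line application of Lemma~\ref{lem:addition-normal} once one writes down the ascending chain. The only subtlety to be careful about is verifying $a \perp na$ for every $n$ (so that Lemma~\ref{lem:addition-normal} applies), which follows directly from the assumption that the $(n+1)$-fold sum exists.
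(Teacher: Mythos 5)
Your proof is correct and is essentially identical to the paper's: the paper also computes $a\ovee \bigvee_n na = \bigvee_n (a\ovee na) = \bigvee_n (n+1)a = \bigvee_n na$ (implicitly using Lemma~\ref{lem:addition-normal}) and concludes $a=0$ by cancellation. Your write-up just makes the summability check $a\perp na$ and the cofinality step explicit.
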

\begin{proof}
We have $a\ovee \bigvee_n na
    =\bigvee_n a\ovee na
    = \bigvee_n (n+1)a = \bigvee_n na$,
    and so~$a=0$.
\end{proof}

\begin{lemma}\label{lem:nonilpotents}
    Let $E$ be a normal SEA and suppose $a\in E$
    satisfies~$a^2 = 0$.
        Then~$a = 0$.
\end{lemma}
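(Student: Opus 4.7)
The plan is to show, by induction on $n$, that the $n$-fold sum $na$ exists in $E$ and satisfies $1 = na\ovee(a^\perp)^n$; once this is established for every $n\in\N$, Lemma~\ref{lem:archemedeanomegadirectedcomplete} forces $a=0$. The idea is to mine $na$ out of the descending chain of powers $(a^\perp)^n$ via a telescoping identity.

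The engine behind this induction is a subsidiary identity $(a^\perp)^n\mult a = a$, which I would itself prove by induction on $n$. For the base case $n=1$: $a\commu a$ holds trivially, so~S4 gives $a\commu a^\perp$; then S1 applied to $a = a\mult(a\ovee a^\perp)$ collapses, via $a^2=0$, to $a = a\mult a^\perp$, whence also $a^\perp\mult a = a$. For the inductive step, iterated~S5 starting from $a^\perp\commu a^\perp$ yields $(a^\perp)^n\commu a^\perp$, so~S4 licenses the rearrangement
\[
(a^\perp)^{n+1}\mult a \ =\ (a^\perp)^n\mult(a^\perp\mult a) \ =\ (a^\perp)^n\mult a \ =\ a.
\]

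With this identity in hand, applying~S1 to $(a^\perp)^n = (a^\perp)^n\mult(a\ovee a^\perp)$ yields $(a^\perp)^n = a\ovee(a^\perp)^{n+1}$. Combining with the outer induction hypothesis $(a^\perp)^n = (na)^\perp$, we read off $a\leq(na)^\perp$, i.e.\ $a\perp na$, so $(n+1)a = a\ovee na$ exists; re-associating $1 = na\ovee(a^\perp)^n = na\ovee a\ovee(a^\perp)^{n+1} = (n+1)a\ovee(a^\perp)^{n+1}$ then closes the induction, and Lemma~\ref{lem:archemedeanomegadirectedcomplete} finally delivers $a=0$. The whole argument is essentially formal bookkeeping with S1--S5; the one point requiring care is establishing $(a^\perp)^n\commu a^\perp$ so that S4 may be used to rearrange $(a^\perp)^{n+1}\mult a$, and axiom~S6 plays no role beyond what is already packaged inside Lemma~\ref{lem:archemedeanomegadirectedcomplete}.
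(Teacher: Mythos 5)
Your proof is correct, but it takes a genuinely different route from the paper's. The paper first observes that $a^2=0$ forces $a=a\mult a^\perp$, invokes Lemma~\ref{lem:selfsummable} to conclude that $a$ is self-summable, and then checks that $(a\ovee a)^2=0$ as well (using S5 to move the sum to the second argument); iterating this doubling argument shows $2^n a$ exists for all $n$, hence $ma$ exists for every $m$, and Lemma~\ref{lem:archemedeanomegadirectedcomplete} finishes. You instead grow the multiple one step at a time, via the telescoping identity $(a^\perp)^n = a\ovee(a^\perp)^{n+1}$, which rests on the subsidiary fact $(a^\perp)^n\mult a=a$ (itself an induction using S4/S5 on the mutually commuting powers of $a^\perp$); this yields the sharper conclusion $(na)^\perp=(a^\perp)^n$ at every stage before the same Archimedean lemma delivers $a=0$. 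Your route bypasses Lemma~\ref{lem:selfsummable} entirely and exhibits an explicit witness for the summability $a\perp na$, at the cost of running two nested inductions; the paper's doubling trick is shorter given that Lemma~\ref{lem:selfsummable} is already on the table. Both arguments use directed completeness only through Lemma~\ref{lem:archemedeanomegadirectedcomplete}, as you correctly note.
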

\begin{proof}
    Since $a^2 = 0$ we have $a = a\mult 1 = a\mult (a\ovee a^\perp) = a\mult a^\perp$, and hence (see
    Lemma~\ref{lem:selfsummable}) $a$ is summable with itself. But
    furthermore~$(a\ovee a)^2 = 4a^2 = 0$, and so $(a\ovee a)^2=0$.

    Continuing in this fashion,
        we see that~$2^n a$ exists for every~$n \in \N$
        and~$(2^n a)^2 = 0$.
        Hence, for any~$m \in \N$
        the sum~$m a$ exists 
    so that by Lemma~\ref{lem:archemedeanomegadirectedcomplete},
    we have $a = 0$.
\end{proof}

\begin{proposition}
Assume~$E$ is a SEA with idempotent~$p \in E$.
Write~$p \mult E := \{ p\mult a; \ a \in E\}$
    for the \Define{left corner} by~$p$.
The sequential product of~$E$ restricts to~$p \mult E$
    and in fact, with partial sum and zero of~$E$
    and complement~$a \mapsto p \ominus a$,
    the set~$p \mult E$ is an SEA.
If~$E$ is normal, then~$p \mult E$ is normal as well.
\end{proposition}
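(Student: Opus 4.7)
The first step is to identify~$p\mult E$ with the order-interval~$[0,p]_E$: by Proposition~\ref{prop:SEAbasicproperties} point~\ref{prop:SEAbasicproperties-2} every element of~$p\mult E$ lies below~$p$, while point~\ref{prop:SEAbasicproperties-5} gives $a = p\mult a \in p\mult E$ for every~$a\leq p$. The claim that~$\mult$ restricts is then immediate, since $a,b\leq p$ forces $a\mult b \leq a \leq p$.

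Next I would verify the effect-algebra axioms on~$[0,p]_E$. Commutativity, associativity, the zero axiom, cancellativity, and uniqueness of the complement all transfer directly from those of~$E$, once one notes (via Lemma~\ref{lem:summableunderidempotent}) that whenever $a,b\in [0,p]_E$ admit a sum~$a\ovee b$ in~$E$, this sum already lies in~$[0,p]_E$. The new unit is $p = p\ominus 0$, and $a \ovee (p\ominus a) = p$ by construction. For the axiom that only~$0$ is summable with the unit: if $a\in [0,p]_E$ is summable with~$p$ (in the restricted sense), then $a\ovee p \geq p$, while $a\ovee p \leq p$ by the same lemma, so cancellativity in~$E$ yields $a=0$.

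For the SEA axioms, S1, S3, and~S5 involve only the partial sum, zero, and sequential product---all inherited from~$E$---and hence transfer verbatim; S2 on~$[0,p]_E$ asserts $p\mult a = a$ for~$a\leq p$, which is again point~\ref{prop:SEAbasicproperties-5}. The one subtle axiom is~S4, since the complement has changed from~$(\cdot)^\perp$ to~$p\ominus(\cdot)$. Here I would apply~S1 in~$E$ to the decomposition $a = a\mult p = a\mult b \ovee a\mult(p\ominus b)$ to obtain $a\mult(p\ominus b) = a\ominus (a\mult b)$, and symmetrically $(p\ominus b)\mult a = a\ominus (b\mult a)$; when $a\mult b = b\mult a$ these expressions coincide, giving $a\commu (p\ominus b)$ inside~$p\mult E$. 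The associativity clause of~S4 follows directly from~S4 in~$E$, since it only involves~$\mult$ and elements of~$[0,p]_E$.

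Finally, for normality: any directed $S \subseteq [0,p]_E$ has~$p$ as an upper bound, so its supremum taken in~$E$ still lies in~$[0,p]_E$ and serves as the supremum in the new algebra. Hence $[0,p]_E$ is directed-complete, and axiom~S6 for~$p\mult E$ follows immediately from~S6 in~$E$. The main---but mild---obstacle is the bookkeeping around the replaced complement in~S4; everything else reduces to a routine transfer of axioms.
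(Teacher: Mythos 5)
Your overall strategy matches the paper's: identify $p\mult E$ with the order interval $[0,p]_E$, observe that the product restricts because $a\mult b\leq a\leq p$, note that all axioms except the complement clause of S4 transfer essentially verbatim, and obtain normality from the fact that suprema in a principal downset agree with those computed in~$E$. All of that is sound. The trouble is that the one step the paper singles out as non-trivial --- showing $a\commu b$ implies $a\commu p\ominus b$ --- is exactly where your argument breaks. You derive $a\mult(p\ominus b)=a\ominus(a\mult b)$ correctly from S1 applied to $a=a\mult p=a\mult\bigl(b\ovee(p\ominus b)\bigr)$. But the ``symmetric'' identity $(p\ominus b)\mult a=a\ominus(b\mult a)$ does not follow the same way: the symmetric application would read $\bigl(b\ovee(p\ominus b)\bigr)\mult a=b\mult a\ovee(p\ominus b)\mult a$, i.e.\ additivity of $\mult$ in the \emph{first} argument, which S1 does not provide and which fails in general (as stressed after Definition~\ref{defn:sea}: $(x\ovee y)\mult z\neq x\mult z\ovee y\mult z$; in $[0,1]_{B(\mathscr{H})}$ it would assert $\sqrt{x+y}\,z\sqrt{x+y}=\sqrt{x}\,z\sqrt{x}+\sqrt{y}\,z\sqrt{y}$, which is false). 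So as written, the key step of S4 is unjustified.

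The identity you want is true, but needs a different route. The paper's: since $a\leq p$ we have $a\mult p^\perp=0$, hence $a\commu p^\perp$; S5 then gives $a\commu b\ovee p^\perp=(p\ominus b)^\perp$, and S4 in~$E$ turns this into $a\commu p\ominus b$. Alternatively, you can repair your computation: from $b\leq p$ one gets $p\commu b$, hence $p\commu b^\perp$ and $p\ominus b=p\mult b^\perp=b^\perp\mult p$; then $(p\ominus b)\mult a=(b^\perp\mult p)\mult a=b^\perp\mult(p\mult a)=b^\perp\mult a$ by the associativity clause of S4, and since $a\commu b$ implies $a\commu b^\perp$ in~$E$, this equals $a\mult b^\perp=a\ominus(a\mult b)$ by S1 --- matching your expression for $a\mult(p\ominus b)$. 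Either patch closes the gap; the rest of your proof is fine.
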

\begin{proof}
By Prop.~\ref{prop:SEAbasicproperties}
    we have~$p \mult E = \{ a; \ a\in E\; a \leq p \}$
    and so~$p \mult E$  is an effect algebra.
For~$a,b \in p \mult E$,
    we have~$a \mult b \leq a \leq p$ and so~$a \mult b \in p \mult E$.
As the sequential product, zero and addition of~$p \mult E$ and~$E$ coincide
    almost all axioms for an SEA hold trivially.
    Only the first part of~S4 (which involves the orthocomplement) remains.
So assume~$a,b \in p\mult E $ with~$a \commu b$.
We have to show that~$a \commu p \ominus b$.
Note~$a \mult p^\perp = 0$ and so~$a \commu p^\perp$.
    Thus by S5, we have~$a \commu b \ovee p^\perp$ and as $b\ovee p^\perp = (p \ominus b)^\perp$ we have by~S4 (for~$E$),~$a \commu p \ominus b$.
        Thus~$p \mult E$ is indeed an SEA.

Now assume~$E$ is normal.
As a principal downset of~$E$,
    the suprema computed within~$p \mult E$
        are the same as computed in~$E$
        and so~$p \mult E$ is directed complete.
As additionally the sequential product of~$p \mult E$
    is the restriction of that of~$E$,
    the axiom S6 holds trivially.
\end{proof}

\begin{proposition}\label{prop:central-splits}
    Let $p$ be a central idempotent in a SEA $E$. Then $E\cong p\mult E \oplus p^\perp \mult E$.
\end{proposition}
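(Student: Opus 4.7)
The plan is to exhibit mutually inverse SEA morphisms $\phi\colon E\to p\mult E\oplus p^\perp\mult E$ and $\psi$ in the reverse direction, defined by $\phi(a)=(p\mult a,\,p^\perp\mult a)$ and $\psi(b,c)=b\ovee c$. Note that $p^\perp$ is central as well (apply~S4 to $a\commu p$), which I would use freely. The first step is to record the decomposition
\[
a\;=\;a\mult 1\;=\;a\mult(p\ovee p^\perp)\;=\;a\mult p\,\ovee\,a\mult p^\perp\;=\;p\mult a\,\ovee\,p^\perp\mult a,
\]
obtained from~S1 together with centrality; this immediately gives $\psi\circ\phi=\id$. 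Well-definedness of $\psi$ is easy: if $b\leq p$ and $c\leq p^\perp$, then $c\leq p^\perp\leq b^\perp$, so $b\perp c$. The identity $\phi\circ\psi=\id$ then follows from Proposition~\ref{prop:SEAbasicproperties}, parts~\ref{prop:SEAbasicproperties-4} and~\ref{prop:SEAbasicproperties-5}: $p\mult b=b$ for $b\leq p$, $p\mult c=0$ for $c\leq p^\perp$, and symmetrically for $p^\perp$.

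Next I would verify that $\phi$ preserves the effect-algebra structure: $\phi(0)=(0,0)$ and $\phi(1)=(p,p^\perp)$ (the unit of the direct sum) are immediate, and preservation of summable sums is just~S1 applied componentwise. Hence $\phi$ is at least an effect-algebra isomorphism.

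The substantive step is to check that $\phi$ preserves the sequential product, which reduces to showing
\[
p\mult(a\mult b)\;=\;(p\mult a)\mult(p\mult b),
\]
together with the symmetric identity for $p^\perp$. I would prove this by first using~S4 applied to $p\commu a$ to rewrite $p\mult(a\mult b)=(p\mult a)\mult b$, then expanding $b=(p\mult b)\ovee(p^\perp\mult b)$ and distributing via~S1:
\[
(p\mult a)\mult b\;=\;(p\mult a)\mult(p\mult b)\,\ovee\,(p\mult a)\mult(p^\perp\mult b).
\]
The second summand vanishes because $p\mult a\leq p$ forces $(p\mult a)\mult p^\perp=0$ by Proposition~\ref{prop:SEAbasicproperties}(\ref{prop:SEAbasicproperties-5}), and hence $(p\mult a)\mult(p^\perp\mult b)\leq(p\mult a)\mult p^\perp=0$. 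This cancellation is the step I expect to be the main obstacle---not conceptually deep, but it requires threading~S1,~S4, centrality and the downset characterizations of Proposition~\ref{prop:SEAbasicproperties} together in the correct order. Once it is in hand, the identity for $p^\perp$ follows by the same argument with $p$ and $p^\perp$ interchanged, and the remaining verifications are routine bookkeeping.
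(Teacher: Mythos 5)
Your proposal is correct and follows essentially the same route as the paper: the same map $a\mapsto(p\mult a,\,p^\perp\mult a)$, the same key decomposition $a=p\mult a\ovee p^\perp\mult a$, and the same use of~S4 to reduce multiplicativity to $p\mult(a\mult b)=(p\mult a)\mult b$. The only (harmless) difference is in the last step: the paper finishes with the chain $(p\mult a)\mult b=((p\mult a)\mult p)\mult b=(p\mult a)\mult(p\mult b)$ using $p\mult a\commu p$, whereas you expand $b=(p\mult b)\ovee(p^\perp\mult b)$ and kill the cross term via $(p\mult a)\mult(p^\perp\mult b)\leq(p\mult a)\mult p^\perp=0$; both are valid.
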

\begin{proof}
    Note that the map $a\mapsto (p\mult a, p^\perp \mult a)$ is
    additive and unital. It is order reflecting, because if $p\mult
    a \leq p\mult b$ and $p^\perp \mult a \leq p^\perp \mult b$
    then $a=a\mult p \ovee a\mult p^\perp = p\mult a \ovee p^\perp
    \mult a \leq p\mult b \ovee p^\perp\mult b = b$, using the
    centrality of $p$ (and hence $p^\perp$ by S4). It is obviously
    surjective because for $a\in p\mult E$ we have $p\mult a = a$,
    and similarly for $b\in p^\perp \mult E$, and hence $a\ovee b
    \mapsto (a,b)$.

    To show that it preserves the sequential product we note that 
    \begin{equation*}
    p\mult (a\mult b) \ =\  (p\mult a)\mult b \ =\  (p\mult (p\mult
    a))\mult b  \ =\  ((p\mult a)\mult p)\mult b \ = \  (p\mult
    a)\mult (p\mult b)
    \end{equation*}
    and similarly for $p^\perp$.
\end{proof}

\begin{definition}
    We call an idempotent $p\in E$ \Define{Boolean} when $p\mult E$ is Boolean, i.e.~when all $a\leq p$ in $E$ are idempotent.
\end{definition}

The following definition and result will be crucial for our arguments, as it relates sequential effect algebras to commutative effect monoids (see next section).

\begin{definition}
    Let $S\subseteq E$ be a subset of elements of a SEA $E$.
        The \Define{commutant of $S$}
            is defined as $S^\prime := \{a\in E;\ 
                    a \commu s \text{ for all } s \in S \}$.
    The \Define{bicommutant of~$S$} is
        defined simply as~$S'' := (S')'$.
\end{definition}

\begin{remark}
    Let $\mathscr{H}$ be some Hilbert space, and
    let $E=[0,1]_{B(\mathscr{H})}$.
    For any $a\in E$, the bicommutant~$\{a\}''$ 
    is the set of effects of the least 
    commutative von Neumann algebra of~$B(\mathscr{H})$ containing~$a$
    by the bicommutant theorem.
In general this is false,
    consider for instance~$E=[0,1]^2$ (with component-wise standard product)
    and~$a = (0,1)$
    --- then~$\{a\}'' = E$, while~$\{0\}\times [0,1]$
    is a smaller commutative subalgebra containing $a$.
\end{remark}

\begin{proposition}[Cf.~\cite{wetering2018characterisation}, Proposition III.12]\label{prop:doublecommutant}
Let $S\subseteq E$ be a set of mutually commuting elements in
    a normal SEA $E$, then $S^{\prime\prime}$ is a commutative normal SEA.
\end{proposition}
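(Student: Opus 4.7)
The plan is to exploit a simple antimonotonicity property of the commutant: if $X \subseteq Y$, then $Y' \subseteq X'$. Since $S$ consists of mutually commuting elements, we have $S \subseteq S'$, and applying $(-)'$ yields $S'' \subseteq S'$. This single inclusion is the engine driving commutativity: given $a, b \in S''$, we have $b \in S'$ (because $S'' \subseteq S'$), and $a \in S'' = (S')'$ means $a$ commutes with every element of $S'$, so in particular $a \commu b$.

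Next I would verify that $S''$ is a sub-SEA of $E$. Trivially $0, 1 \in S''$ by Proposition~\ref{prop:SEAbasicproperties}(\ref{prop:SEAbasicproperties-1}). Fix an arbitrary $c \in S'$. If $a \in S''$, then $c \commu a$ and axiom S4 gives $c \commu a^\perp$, so $a^\perp \in S''$. If $a, b \in S''$, then $c \commu a$ and $c \commu b$, and axiom S5 yields both $c \commu a \mult b$ and, when $a \perp b$, $c \commu a \ovee b$; hence $S''$ is closed under $\mult$ and under the partial sum. Since the effect-algebra and sequential-product operations on $S''$ are simply restricted from $E$, axioms S1--S5 hold in $S''$ for free.

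For directed completeness and S6, let $D \subseteq S''$ be directed, and let $u := \bigvee D$ denote the supremum computed in $E$. For each $c \in S'$ and each $d \in D$ we have $c \commu d$, so the second clause of axiom S6 (applied in $E$) gives $c \commu u$; hence $u \in S''$. Moreover, any upper bound of $D$ in $S''$ is an upper bound in $E$, and therefore lies above $u$, so $u$ is also the supremum of $D$ within $S''$. Since both suprema and the sequential product in $S''$ coincide with their counterparts in $E$, the full content of S6 for $S''$ is directly inherited from S6 for $E$.

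There is no serious obstacle here: the entire proof hinges on the observation that mutual commutativity forces $S \subseteq S'$, hence $S'' \subseteq S'$, and everything else is routine bookkeeping with axioms S4, S5, and S6. The only place one must be slightly careful is in checking that the supremum of a directed $D \subseteq S''$ computed in $E$ actually realises the supremum in $S''$, which is immediate once one has verified $u \in S''$.
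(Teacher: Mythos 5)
Your proof is correct and follows essentially the same route as the paper's: both derive commutativity from $S\subseteq S'$ hence $S''\subseteq S'$, and both verify closure of the (bi)commutant under complement, product, partial sum, and directed suprema via S4, S5, and S6. The only cosmetic difference is that the paper first proves $T'$ is a sub-normal-SEA for an arbitrary subset $T$ and then instantiates $T=S'$, whereas you verify the closure properties for $S''$ directly.
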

\begin{proof}
    To start off, assume~$S \subseteq E$ an arbitrary subset of~$E$.
    Clearly~$0,1 \in S'$.
    If~$a,b\in S'$, then for any~$s \in S$,
        we have~$a\commu s$ and $b\commu s$ so that $a\mult b \commu s$
            and~$a^\perp \commu s$.
    If~$a \perp b$, then~$a \ovee b \commu s$.
    Thus~$S'$ is closed under partial sum, complement and sequential product.
    As a direct axiom of a normal SEA,
        the set~$S'$ is also closed under directed suprema.
    Hence~$S'$ is a sub-normal-SEA of~$E$.

    As~$S$ was arbitrary, we see that~$(S')' =  S''$ is a sub-normal-SEA of~$E$ as well.
    Assume the elements of~$S$ are pairwise commuting.
    Then~$S \subseteq S'$ so that~$S'' \subseteq S'$.
    By definition~$S''$ commutes with all elements from~$S'$
        and so in particular with all elements from~$S''$ itself.
        Hence~$S''$ is commutative.
\end{proof}

% \TODO{Is there a particular reason we
% prove Lemma~\ref{lem:selfsummable},
% Lemma~\ref{lem:archemedeanomegadirectedcomplete},
% and
% Lemma~\ref{lem:nonilpotents}
% directly instead of reducing them
% to the commutative effect monoid case
% by the observation
% that every commutative subset of a SEA
% is contained in a commutative sub-SEA,
% which is a commutative effect monoid?}
% % Bas: no, just nice for the exposition.

\subsection{Effect monoids}
Before we can proceed with the theory of (normal) SEAs
we must discuss effect monoids.
Roughly speaking, effect monoids are the classical counterparts
to SEAs, being endowed with an associative and biadditive 
multiplication $\cdot$.
Note that we do not require an effect monoid to be commutative,
although commutativity often follows automatically,
for example, in the (for us relevant) case that the effect monoid is
directed complete.
\begin{definition}\label{def:effectmonoid}
    An \Define{effect monoid} (EM) is an effect algebra $(M,\ovee, 0, (\ )^\perp,
    \,\cdot\,)$ with an additional (total) binary operation $\cdot$,
such that the following conditions hold 
    for all~$x,y,z \in M$.
\begin{itemize}
\item Unit: $x\cdot 1 = x = 1\cdot x$.
\item Distributivity: If $y\perp z$, then $x\cdot y\perp  x\cdot z$,\quad
    $y\cdot x\perp z\cdot x$
        with~$x\cdot (y\ovee z) = (x\cdot y) \ovee (x\cdot z)$
        and~$ (y\ovee z)\cdot x = (y\cdot x) \ovee (z\cdot x).$
In other words:
        $\,\cdot\,$ is bi-additive.
\item Associativity: $x \cdot (y\cdot z) = (x \cdot y) \cdot z$.
\end{itemize}
We call an effect monoid~$M$ \Define{commutative} if $x\cdot y = y \cdot x$
    for all~$x,y \in M$;
an element $p$ of~$M$  \Define{idempotent}
whenever~$p^2 = p \cdot p = p$;
elements~$x$, $y$  of~$M$ \Define{orthogonal}
when $x\cdot y = y\cdot x = 0$.
    An effect monoid is \Define{Boolean}
        if all its elements are idempotents.

% We call~$M$ \Define{sharp} if every element is idempotent.
% We call~$M$ \Define{mixed} when for every non-zero idempotent~$p \in M$,
%         we can find a non-zero~$s \leq p$ which is not idempotent.
\end{definition}

\begin{example}\label{ex:booleanalgebra}
Any Boolean algebra  $(B,0,1,\wedge,\vee,(\ )^\perp)$
    is an effect algebra by Example~\ref{ex:orthomodularlattice},
and, moreover,  a Boolean commutative effect monoid with
    multiplication defined by $x \cdot y= x \wedge y$.
    Conversely, any Boolean effect monoid is a Boolean algebra~\cite[Proposition~47]{first}.
\end{example}

\begin{example}\label{ex:CX}
    The unit interval $[0,1]_R$ of any (partially)
    ordered unital ring~$R$
    (in which the sum $a+b$ and product $a\cdot b$
    of positive elements~$a$ and~$b$ are again positive)
    is an effect monoid.

    For example, let $X$ be a compact Hausdorff space.
    The space of continuous complex-valued functions~$C(X)$
    is a commutative unital C$^*$-algebra
    (and conversely by the Gelfand theorem, any
    commutative C$^*$-algebra with unit is of this form) and hence its
    unit interval $[0,1]_{C(X)} = \{f:X\rightarrow [0,1]\}$ is a
    commutative effect monoid.

    In~\cite[Ex.~4.3.9]{kentathesis} and~\cite[Cor.~51]{basmaster}
        two different non-commutative effect monoids are constructed.
The latter also provides an example
of an effect monoid that is not a SEA
    (since $e_3 \cdot e_2=0$, while $e_2\cdot e_3=e_5$).
\end{example}

Note that an effect monoid that additionally satisfies the implication
$a\cdot b=0\implies b\cdot a = 0$
is a SEA.

\begin{example}\label{prop:commutativemonoidissequential}
	A commutative SEA is a commutative effect monoid and vice versa.
    Moreover, a normal commutative SEA is a directed-complete commutative effect monoid and vice versa\footnote{While the first claim is obvious, the astute reader might notice that the second claim requires showing that the product in a directed-complete commutative effect monoid is always normal. This is done in~\cite[Theorem~43]{first}.}.
\end{example}

\begin{remark}
	In Ref.~\cite{habil2008tensor}, ``distributive'' sequential effect algebras were introduced. These are the same thing as effect monoids satisfying the condition $a\cdot b = 0 \iff b\cdot a = 0$.
\end{remark}

\begin{example}\label{cornersexample}
    Let~$M$ be an effect monoid and let~$p \in M$ be some idempotent.
    The subset~$pM := \{p \cdot e; \ e\in M\}$
        is called the \Define{left corner} by~$p$
        and is an effect monoid with~$(p\cdot e)^\perp := p \cdot e^\perp$
            and all other operations inherited from~$M$.
        The map~$e \mapsto (p\cdot e, p^\perp \cdot e)$
        is an isomorphism~$M \cong p M \oplus
            p^\perp M$~\cite[Corollary~21]{first}.
Analogous facts hold for the right corner~$Mp := \{e \cdot p; \ e\in M\}$.
\end{example}

\subsubsection{Representation theorem for directed-complete effect monoids}\label{sec:representation-theorem}
We will need the following representation theorem
for directed-complete effect monoids from~\cite{first}.
(In fact,
a representation theory for the more general class
of $\omega$-directed complete
effect monoids is established there,
which we do not need here).
\begin{theorem}{\cite{first}}\label{thm:first}
In every directed-complete effect monoid~$M$
there is an idempotent~$p$
    such that~$p M$ is a convex effect algebra
        and~$p^\perp M$ is Boolean.
    Furthermore, there is an extremally-disconnected
        compact Hausdorff space~$X$
        such that~$pM \cong [0,1]_{C(X)}$.
\end{theorem}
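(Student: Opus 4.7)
The plan is to reduce to the commutative case, extract a splitting idempotent $p^\perp$ whose corner is Boolean, show the complementary corner $pM$ is convex, and then apply a Stone-type duality to $pM$ to obtain the $[0,1]_{C(X)}$ representation.

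First I would establish that every directed-complete effect monoid is automatically commutative; this is a nontrivial structural fact that relies on order-completeness to rule out asymmetry between $a \cdot b$ and $b \cdot a$. Granting commutativity, for each $a \in M$ the sequence $(a^n)_n$ is decreasing, since $a^{n+1} = a\cdot a^n \leq a^n$ by biadditivity and $a \leq 1$, so by Remark~\ref{remark:infima} the infimum $a^\infty := \bigwedge_n a^n$ exists. Using biadditivity together with the fact that the product in a directed-complete commutative effect monoid preserves directed suprema (cf.~Example~\ref{prop:commutativemonoidissequential}), one checks that $a^\infty$ is an idempotent---indeed the largest idempotent below $a$. Consequently the idempotents of $M$ form a complete Boolean algebra, and I would define $p^\perp$ as the supremum of those idempotents $q$ for which $qM$ is Boolean; a directed-suprema argument shows $p^\perp$ itself has this property. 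By the corner decomposition of Example~\ref{cornersexample} we then obtain $M \cong pM \oplus p^\perp M$ with $p^\perp M$ Boolean.

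The delicate step is to show $pM$ is convex. By maximality of $p^\perp$, the corner $pM$ contains no nonzero Boolean sub-corner, and I would exploit this to construct, via a bisection/fixed-point argument, an element of $pM$ playing the role of $\tfrac12 p$; iterating gives a copy of the dyadic rationals in $pM$, which extends by directed suprema to a full action of $[0,1]$ on $pM$. For the final representation, let $X$ be the Stone space of the complete Boolean algebra of idempotents of $pM$; completeness of this Boolean algebra translates via Stone duality into extremal disconnectedness of $X$. Each $a \in pM$ is sent to a continuous function $\hat a \colon X \to [0,1]$ via a spectral prescription: for each dyadic $\lambda$, the clopen $\{\hat a > \lambda\}$ corresponds to an explicit ``spectral idempotent'' in $pM$ built from $a$ by applying $(-)^\infty$ to appropriate dyadic truncations. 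Verifying that $a \mapsto \hat a$ is an isomorphism of effect monoids $pM \cong [0,1]_{C(X)}$---with surjectivity using extremal disconnectedness of $X$ to lift dyadic step functions to suprema in $pM$---is the main obstacle: the entire spectral and convex apparatus must be built purely from the order and algebraic axioms, with directed-completeness serving as the sole substitute for any topology or norm.
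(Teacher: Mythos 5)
This theorem is imported from~\cite{first}; the present paper gives no proof of it, so your attempt can only be measured against the argument in that reference. Your outline does track the broad shape of that argument (floors $\bigwedge_n a^n$ as largest idempotents below $a$, a Boolean corner, a convex corner, a Stone-space representation of the convex corner), but it has two genuine gaps. First, you begin by ``establishing'' that every directed-complete effect monoid is commutative and then lean on commutativity throughout. In~\cite{first} commutativity is a \emph{consequence} of the representation theorem, not an ingredient: the decomposition $M\cong pM\oplus p^\perp M$ with $p^\perp M$ a Boolean algebra and $pM\cong[0,1]_{C(X)}$ is precisely what proves commutativity, since both summands are commutative. Unless you can supply an independent proof of commutativity --- and you offer none beyond calling it ``a nontrivial structural fact'' --- your plan is circular at its very first step. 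The argument has to be run without assuming commutativity, e.g.\ by first showing that idempotents are central so that left and right corners agree.

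Second, the step you yourself flag as delicate --- convexity of $pM$ --- is not actually supplied. Knowing that $pM$ contains no nonzero Boolean corner does not by itself produce a half of $p$, and ``a bisection/fixed-point argument'' is not a construction. The device used in~\cite{first} (and mirrored in Theorem~\ref{thm:a-convexthm} of this paper) is to take a \emph{maximal self-summable} element $a$ via Zorn's lemma, show that $p:=a\ovee a$ is an idempotent whose complementary corner is Boolean (for $s\le p^\perp$ the element $s\cdot s^\perp$ is self-summable and summable with $a$, contradicting maximality unless $s\cdot s^\perp=0$), and then observe that $a$ is a half of $p$, whence $b=b\cdot p=(b\cdot a)\ovee(b\cdot a)$ halves every $b\in pM$; division by $2^n$ is unique because floors of such elements vanish, and the $[0,1]$-action follows by density and normality. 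Your choice of $p^\perp$ as the supremum of all Boolean idempotents yields the same idempotent in the end, but it deprives you of the explicit half that makes the convexity proof go through. The final spectral embedding of $pM$ into $C(X)$ is sketched plausibly, but it presupposes the convex action you have not yet constructed.
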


\begin{corollary}[Spectral Theorem for normal SEA]\label{seaspectral}
    Let~$E$ be a normal SEA and let~$a \in E$.
    Then there is a extremally-disconnected compact Hausdorff space~$X$
        and complete Boolean algebra~$B$
        such that~$\{a\}'' \cong [0,1]_{C(X)} \oplus B$.
\end{corollary}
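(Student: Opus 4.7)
The plan is to unpack $\{a\}''$ via the preceding machinery and then apply Theorem~\ref{thm:first}. First, since $a$ trivially commutes with itself, the singleton $\{a\}$ is a set of mutually commuting elements, so by Proposition~\ref{prop:doublecommutant} the bicommutant $\{a\}''$ is a commutative normal SEA. By Example~\ref{prop:commutativemonoidissequential} (the second claim), any commutative normal SEA is a directed-complete commutative effect monoid; so I can apply the representation theorem to $M := \{a\}''$.

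By Theorem~\ref{thm:first}, there is an idempotent $p \in M$ such that $pM$ is a convex effect algebra with $pM \cong [0,1]_{C(X)}$ for some extremally-disconnected compact Hausdorff space $X$, and such that $p^\perp M$ is Boolean. Since $M$ is commutative, $p$ is central in $M$, so Example~\ref{cornersexample} (or equivalently Proposition~\ref{prop:central-splits} applied inside the SEA $M$) gives $M \cong pM \oplus p^\perp M$ as effect monoids, and because the effect-monoid structure on a commutative SEA agrees with its SEA structure, this is an isomorphism of SEAs as well.

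It remains to identify $p^\perp M$ with a complete Boolean algebra $B$. We know $p^\perp M$ is a Boolean effect monoid, hence a Boolean algebra by the converse statement mentioned in Example~\ref{ex:booleanalgebra} (the lattice operations being $x \wedge y = x \cdot y$ and $x \vee y = (x^\perp \cdot y^\perp)^\perp$, and the effect-algebra order coinciding with the lattice order as in Example~\ref{ex:orthomodularlattice}). For completeness: as a principal downset in the directed-complete effect algebra $M$, the corner $p^\perp M$ is itself directed-complete, and since in a Boolean algebra every subset $S$ has the same supremum as the directed set of finite joins from $S$, directed-completeness upgrades to full completeness. Setting $B := p^\perp M$ then yields $\{a\}'' \cong [0,1]_{C(X)} \oplus B$, as required.

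I do not expect any serious obstacle here: the one spot that needs a moment's care is verifying that $p^\perp M$ really is a \emph{complete} (not merely directed-complete) Boolean algebra, but as sketched this is immediate from the fact that directed-completeness of a Boolean algebra implies completeness via finite joins, together with the identification of the effect-algebra order with the lattice order on a Boolean effect monoid.
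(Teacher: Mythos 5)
Your proposal is correct and follows exactly the route the paper takes: its proof of Corollary~\ref{seaspectral} is literally ``combine Example~\ref{prop:commutativemonoidissequential}, Proposition~\ref{prop:doublecommutant} and Theorem~\ref{thm:first}''. You merely spell out the details the paper leaves implicit (the corner decomposition via Example~\ref{cornersexample} and the upgrade from directed-complete to complete Boolean algebra via finite joins), and those details are accurate.
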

\begin{proof}
    Combine Example~\ref{prop:commutativemonoidissequential},
    Proposition~\ref{prop:doublecommutant}
    and Theorem~\ref{thm:first}.
\end{proof}

\noindent The previous allows us to answer Problem 20
    of~\cite{gudder2005open} for the
special case of normal SEAs.

\begin{corollary}
    Any element~$a$ of a normal SEA~$E$ has a unique square root.
\end{corollary}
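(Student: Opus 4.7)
The plan is to combine the Spectral Theorem for normal SEAs (Corollary~\ref{seaspectral}) with the observation that any square root of $a$ must commute with $a$, so that we can always work inside a commutative normal SEA, where the representation theorem for directed-complete effect monoids pins down square roots uniquely.

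First I would construct a canonical square root inside $\{a\}''$. By Corollary~\ref{seaspectral} we have an isomorphism $\{a\}'' \cong [0,1]_{C(X)} \oplus B$ for an extremally disconnected compact Hausdorff space $X$ and a complete Boolean algebra $B$; let $a$ correspond to $(f,p)$. Because $B$ is Boolean, $p^2 = p$, and because $X$ is (extremally disconnected) compact Hausdorff, the pointwise non-negative square root $\sqrt{f}$ lies in $C(X)$. Define $c \in \{a\}''$ to correspond to $(\sqrt{f},p)$; then $c \mult c = (\sqrt{f}\cdot \sqrt{f}, p \cdot p) = (f,p) = a$, giving existence. Moreover, $c$ is the \emph{only} square root of $a$ within $\{a\}''$: in the Boolean summand any candidate $q$ already satisfies $q^2 = q$, so $q^2 = p$ forces $q=p$, while in $[0,1]_{C(X)}$ the square root is forced pointwise.

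For uniqueness within all of $E$, suppose $b \in E$ satisfies $b^2 = a$. Since $b \commu b$, axiom S5 gives $b \commu b^2 = a$, so $b \in \{a\}'$. By the definition of bicommutant, $c \in \{a\}''$ commutes with every element of $\{a\}'$, and in particular $b \commu c$. Hence $\{b,c\}$ is mutually commuting, and by Proposition~\ref{prop:doublecommutant} the bicommutant $\{b,c\}''$ is a commutative normal SEA — equivalently, a directed-complete commutative effect monoid (Example~\ref{prop:commutativemonoidissequential}). Applying Theorem~\ref{thm:first} to $\{b,c\}''$ yields a further decomposition $\{b,c\}'' \cong [0,1]_{C(Y)} \oplus B'$ in which $b$ and $c$ are both square roots of $a$; by the same pointwise-plus-Boolean argument as above, such a square root is unique inside the direct sum, so $b = c$.

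The main obstacle is exactly this bridging step: existence of a square root is essentially free inside $\{a\}''$, but an arbitrary square root $b$ is only manifestly in $\{a\}'$, not in $\{a\}''$, so one cannot conclude uniqueness from $\{a\}''$ alone. The resolution is to use the canonical $c \in \{a\}''$ as a pivot — it automatically commutes with every element of $\{a\}'$ and so with $b$ — which lets us pass to the second commutative bicommutant $\{b,c\}''$, where the representation theorem settles uniqueness at once.
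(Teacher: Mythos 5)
Your proposal is correct and follows essentially the same route as the paper: construct the canonical root inside $\{a\}''$ via the spectral decomposition into a Boolean part and a $[0,1]_{C(X)}$ part, observe that any other root $b$ commutes with $a$ and hence with the canonical root (which lies in $\{a\}''$), and then settle uniqueness by applying the representation theorem once more to the commutative bicommutant generated by both roots. The paper uses $\{\sqrt{a},a,c\}''$ where you use $\{b,c\}''$, but since $a=b^2$ lies in the latter anyway this is the same argument.
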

\begin{proof}
    Note that~$a \in \{a\}''$.
    There is an idempotent~$p \in \{a\}''$
        such that~$p \mult \{a\}''$ contains only idempotents
        and~$p^\perp \mult \{a\}'' \cong [0,1]_{C(X)}$
            for some extremally-disconnected compact Hausdorff space~$X$.
    Write~$a_i := p \mult a$
        and~$a_c := p^\perp \mult a$.
    Then there is a unique~$b \in p^\perp \mult \{a\}''$ with~$b^2 = a_c$ (because this is true in $C(X)$).
    Define~$\sqrt{a} := a_i \ovee b$.
    As $a_i$ is idempotent, it is easy to see that~$\sqrt{a}^2 = a$
        and that in fact~$\sqrt{a}$ is the unique such element
        within~$\{a\}''$.

    To prove uniqueness, assume $c^2 = a$ for some $c\in E$. As $\{c\}''$ is a sub-algebra of mutually commuting elements we have $a=c^2 \in \{c\}''$ and hence $a\commu c$.
        As~$\sqrt{a} \in \{a\}''$,
        we must then also have~$c \commu \sqrt{a}$.
    Consider~$B := \{\sqrt{a},a, c\}''$.
    Reasoning as before, $a$ has a unique root in~$B$,
        hence~$c = \sqrt{a}$.
\end{proof}
% \TODO{I'm pretty sure we can prove that an $\omega$-normal SEA
% has a unique square root too, by establishing
% a Borel function calculus. 
% If you think this is worth the effort , I will add it. AW}
% \TODO{J: Nah, I don't think that is worth it. It also distracts the focus on the directed-complete case we are considering.}
% % Bas: I wouldn't mind. Or would we have enough material for a part IV
% % on omega-normal SEAs? :')

Note that if a SEA is not normal, it does not need to have unique square roots~\cite{shen2010n}.

\subsection{An interesting sequential effect algebra}

Before we continue, let us construct a concrete example of a sequential effect algebra that is not directed complete, in order to serve as a foil of some of the other properties we will prove later. First, the following is a construction for effect monoids.
\begin{example}\label{ex:effect-monoids}
    Let $V$ be an ordered vector space. Let $R$ be the space of linear functions $f:V\rightarrow V$. For $f,g\in R$ we set $f\leq g$ when $f(v)\leq g(v)$ for all $v\geq 0$ in~$V$. This makes $R$ into an ordered vector space. 
    The set $M := [0,\id]_R := \{f\in R~;~ 0\leq f\leq \id\}$ equipped with the addition $f\ovee g := f+g$ that is defined when $f\leq \id-g$ is a convex effect algebra~\cite{gudder1999convex}. Furthermore, defining a product via  composition, $f\cdot g := f\circ g$, makes it an effect monoid.
    Indeed, this product obviously distributes over the addition and has $\id$ as the identity. That $0\leq f\cdot g \leq \id$ follows because for all $v\in V_+$
we have $0\leq f(g(v)) \leq f(v) \leq v = \id(v)$, since $0\leq g(v),f(v)\leq v = \id(v)$ by assumption.
\end{example}
We can use this construction to construct a specific effect monoid that also happens to be a non-commutative sequential effect algebra.
\begin{example}\label{ex:assoc-SEA}
    Consider~$V := \R^2$ with the positive cone defined 
    by $(a,b) > 0$ iff  $a+b > 0$.
    Let~$R$ and $M$ be defined as in Example~\ref{ex:effect-monoids}.
    Of course $R$ is just the space of $2\times 2$ real matrices.
    With some straightforward calculation it can then be verified
    that
\begin{equation*}
    A  =  \begin{pmatrix}
        a&b\\
        c&d
    \end{pmatrix} \in M \quad \iff\quad  A=0 \ \text{or}\ A=\id\  \text{or}\ 1>a+c=b+d>0.
\end{equation*}
    Define a map $\tau: M\rightarrow [0,1]$ by $\tau(\begin{psmallmatrix}a&b\\c&d\end{psmallmatrix}) = a+c=b+d$. Then it is straightforward to check that $\tau$ is monotone ($A\leq B \implies \tau(A)\leq \tau(B)$), multiplicative ($\tau(A\cdot B) = \tau(A)\tau(B)$), and $A=0$ iff~$\tau(A)=0$. As a result $A\cdot B = 0$ iff $A=0$ or $B=0$. Hence~$M$ satisfies $A\cdot B = 0$ iff $B\cdot A = 0$, making $M$ 
        a SEA.
\end{example}
This example is interesting because of how close its structure is to that of a SEA coming from a C$^*$-algebra while still being different in crucial ways. First of all, it is convex and  it contains no non-zero infinitesimal elements: if $n A\in M$ for all $n$, then $A=0$. As far as we are aware, this is the first example of a non-commutative convex SEA without infinitesimal elements that is not somehow related to examples from quantum theory.
It might hence serve as a counter-example to reasonable sounding hypotheses or characterisations of (convex) SEAs.

\begin{remark}
An effect algebra that does not contain infinitesimal elements is often called \emph{Archimedean}. 
For a convex effect algebra $E\cong [0,1]_V$ there is however also a stronger notion that is referred to as being Archimedean. 
Namely, that for any element of the corresponding vector space $a\in V$, if $na\leq 1$ for all $n\in \N$ then $a\leq 0$. 
We will refer to a convex effect algebra that has this property as \emph{strongly Archimedean}.
Having no non-zero infinitesimals is equivalent to the \emph{order unit semi-norm} $\norm{a}:=\inf\{\lambda\in \R~;~-\lambda 1 \leq a \leq \lambda 1\}$ being a norm, while being strongly Archimedean also requires the positive cone of $V$ to be closed in this norm.
\end{remark}

Coming back to Example~\ref{ex:assoc-SEA}, let $W$ denote the space spanned by $M$ in $R$. 
Then $W$ is a (non-Archimedean) ordered vector space where its order-unit semi-norm is in fact a norm. This norm satisfies $\norm{A} := \inf\{\lambda \in \R_{>0}~;~-\lambda \id \leq A\leq \lambda \id\} = \lvert \tau(A)\rvert$ and as a consequence we see that the sequential product is continuous in this norm.
This is significant because in Ref.~\cite{wetering2018sequential} it was shown that a convex finite-dimensional strongly Archimedean SEA where the sequential product is continuous in the norm must be order-isomorphic to the unit interval of a \emph{Euclidean Jordan algebra}, a type of structure closely related to C$^*$-algebras, and hence quantum theory.
It was already noted in~\cite{wetering2018sequential} that this result does not hold if the SEA contains infinitesimal elements, but it was left open whether the stronger notion of being Archimedean was necessary, or if having no non-zero infinitesimals was sufficient.
As Example~\ref{ex:assoc-SEA} is not order-isomorphic to the unit interval of a Euclidean Jordan algebra we indeed see that the mere absence of infinitesimal elements does not suffice for the result.

Example~\ref{ex:assoc-SEA} has some further noteworthy properties. Its product is non-commutative, but unlike the standard sequential product on for instance a C$^*$-algebra, is also associative. 
Finally, it is `close' to being directed complete, in the sense that any non-empty directed set $S$ has a minimal upper bound, but, unless $\bigvee \tau(S) = 1$, such upper bounds are not unique, and hence $S$ does not have a supremum.

\section{Boolean sequential effect algebras}\label{sec:boolean}
Naming sequential effect algebras where every element is idempotent `Boolean' of course suggests that such an effect algebra must actually be a Boolean algebra. This is indeed the case, and as far as we know has not been observed before. Let us therefore prove this before we continue on to our main results. Note that these results are a strict generalization of those in Ref.~\cite{tkadlec2008atomic}.

\begin{lemma}
\label{lem:amultbidempotent}
Let~$a$ and~$b$ be elements of a SEA such that~$a\mult b$ is idempotent.
Then~$a\mult b \leq b$.
Moreover,
if~$a\mult b^\perp$ is idempotent too,
then~$a\mult b = b \mult a$.
\end{lemma}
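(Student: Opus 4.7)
Set $p := a \mult b$. Since $p$ is idempotent by hypothesis and $p \leq a$ by Proposition~\ref{prop:SEAbasicproperties}(\ref{prop:SEAbasicproperties-2}), Proposition~\ref{prop:SEAbasicproperties}(\ref{prop:SEAbasicproperties-4}) gives $p \mult a = p$ and in particular $a \commu p$. Axiom~S4, applied with $p$ in the first slot to the commuting pair $(p, a)$, then yields $p \mult (a \mult b) = (p \mult a) \mult b = p \mult b$. The left-hand side equals $p^2 = p$, so $p \mult b = p$, and one further invocation of Proposition~\ref{prop:SEAbasicproperties}(\ref{prop:SEAbasicproperties-4}) gives $p \leq b$.

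For the second part, set $q := a \mult b^\perp$. By S1 we have $p \ovee q = a \mult 1 = a$. Applying the first part of the lemma to the pair $(a, b^\perp)$, which is legitimate because $q$ is idempotent by assumption, we obtain $q \leq b^\perp$. Since $p$ is an idempotent with $p \leq b$ and $q$ is an idempotent with $q \leq b^\perp$, Proposition~\ref{prop:SEAbasicproperties}(\ref{prop:SEAbasicproperties-4}) yields $b \mult p = p$ and, via $b^\perp \mult q = q$ together with S1, $b \mult q = 0$. Distributing $b \mult (-)$ over $p \ovee q$ through S1 then finishes the argument: $b \mult a = b \mult p \,\ovee\, b \mult q = p \,\ovee\, 0 = p = a \mult b$.

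The only genuinely non-obvious move is the first: since $a \commu b$ is not available, one cannot rearrange terms inside $a \mult b$ directly, but the auxiliary idempotent $p = a \mult b$ itself commutes with $a$ because it lies below it, and that is exactly the commutation S4 needs to produce the telescoping identity $p^2 = p \mult b$. Everything else is bookkeeping with Proposition~\ref{prop:SEAbasicproperties}.
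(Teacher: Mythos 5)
Your proof is correct and takes essentially the same route as the paper's: both establish $a\mult b\leq b$ by noting that the idempotent $p=a\mult b$ lies below $a$, hence commutes with it, and then telescoping $p=p^2=(p\mult a)\mult b=p\mult b$ via S4, and both obtain commutativity by splitting $a=a\mult b\ovee a\mult b^\perp$ and applying $b\mult(-)$ with S1. The one small citation slip is your derivation of $b\mult q=0$: S1 is additive only in the \emph{second} argument, so you cannot distribute over $b\ovee b^\perp$ in the first slot; but this does not matter, since $b\mult q=0$ is itself one of the equivalent conditions for $q\leq b^\perp$ listed in Proposition~\ref{prop:SEAbasicproperties}.\ref{prop:SEAbasicproperties-4} (namely $a^\perp\mult p=0$ with $a=b^\perp$), which is exactly how the paper gets it.
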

\begin{proof}
Since~$a\mult b\leq a$,
and~$a\mult b$ is an idempotent,
we have $a\mult b \commu a$
and $a\mult b = (a\mult b) \mult a$,
by 
point~\ref{prop:SEAbasicproperties-4} of
Proposition~\ref{prop:SEAbasicproperties}.
Then, using~S4, we see that
    $a\mult b = (a\mult b)\mult (a\mult b)
    = ((a\mult b)\mult a)\mult b
    = (a\mult b) \mult b$,
    and so~$a\mult b\leq b$,
    by point~\ref{prop:SEAbasicproperties-4} of
Proposition~\ref{prop:SEAbasicproperties}.

Now suppose that~$a\mult b^\perp$ is an idempotent too.
    Since then~$a\mult b^\perp \leq b^\perp$,
    we have $b\mult (a\mult b^\perp)=0$,
    and so~$b \mult (a\mult b)= b\mult a$.
On the other hand, $a\mult b = b\mult (a\mult b)$,
because~$a\mult b\leq b$,
so altogether we get $a\mult b = b\mult (a \mult b) = b\mult a$.
\end{proof}
\begin{corollary}
\label{cor:booleans-central}
A Boolean idempotent of a SEA is central.
\end{corollary}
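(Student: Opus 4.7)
The plan is to apply the preceding Lemma~\ref{lem:amultbidempotent} directly. Let $p$ be a Boolean idempotent of a SEA $E$ and let $a \in E$ be arbitrary; I want to show $p \mult a = a \mult p$.

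First I would observe that $p \mult a \leq p$ by Proposition~\ref{prop:SEAbasicproperties}\ref{prop:SEAbasicproperties-2}, and likewise $p \mult a^\perp \leq p$. Since $p$ is Boolean, every element below $p$ is idempotent, so both $p \mult a$ and $p \mult a^\perp$ are idempotent.

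Now the second part of Lemma~\ref{lem:amultbidempotent}, applied with the roles ``$a$'' $:= p$ and ``$b$'' $:= a$, yields $p \mult a = a \mult p$, which is exactly the statement that $p$ commutes with $a$. As $a$ was arbitrary, $p \in Z(E)$.

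There is no real obstacle here: the work was done in Lemma~\ref{lem:amultbidempotent}, and this corollary is just the observation that the hypothesis of that lemma (both $p \mult a$ and $p \mult a^\perp$ being idempotent) is automatic when $p$ is a Boolean idempotent, since both products lie below the Boolean idempotent $p$.
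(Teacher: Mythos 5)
Your proof is correct and is essentially identical to the paper's: both note that $p \mult a$ and $p \mult a^\perp$ lie below $p$ (hence are idempotent since $p$ is Boolean) and then invoke the second part of Lemma~\ref{lem:amultbidempotent}.
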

\begin{proof}
Let $E$ be a SEA and let $a\in E$
be a Boolean idempotent~$a$.
Then $a$ commutes with every~$b\in E$
    by Lemma~\ref{lem:amultbidempotent},
    because~$a\mult b$ and~$a\mult b^\perp$
    are idempotents,
    on account of being
     below the Boolean idempotent~$a$.
\end{proof}
% \begin{proposition}
% In a normal SEA~$E$  the supremum
% of a directed set of (Boolean) idempotents
% is a (Boolean) idempotent.\TODO{Is this result still necessary?}
% \end{proposition}
% \begin{proof}
% Let~$D\subseteq E$ be a directed set of idempotents.
% Writing~$p:=\bigvee D$,
% note that~$p\circ d=d$ for all~$d\in D$,
% because~$d$ is an idempotent with~$d\leq p$.
% Thus, by~S6, we have $p\mult p=p\mult\bigvee D 
% = \bigvee_{d\in D} p\mult d = \bigvee_{d\in D} d = p$.
% Hence~$p=\bigvee D$ is an idempotent.
% 
% Next, suppose that all~$d\in D$ are Boolean,
% and let~$a \leq p:=\bigvee D$ be given.
% To prove that~$p$ is Boolean,
% we must show that~$a$ is idempotent.
% Since~$a = a\mult p = \bigvee_{d\in D} a\mult d$,
% it suffices to show that~$a\mult d$ is idempotent
% for each~$d\in D$ (as we have just seen
% that the supremum of idempotents is an idempotent).
% So let~$d\in D$ be given. Since~$d$ being Boolean
% is central by Corollary~\ref{cor:booleans-central},
% we have~$a\mult d = d\mult a \leq d$,
% and so~$a\mult d$ is idempotent
%     (being below the Boolean idempotent~$d$).
% Whence~$\bigvee D$ is Boolean.
% \end{proof}

\begin{proposition}
    Let $p$ and~$q$ be idempotents of a SEA~$E$. Then $p\mult q$ is an idempotent if and only if $p$ and $q$ commute.
    Moreover,
    in that case $p\mult q$ is the infimum of~$p$ and~$q$ in~$E$.
\end{proposition}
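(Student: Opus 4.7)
The plan is to handle the equivalence via Lemma~\ref{lem:amultbidempotent}: we already know that $p \mult q$ and $p \mult q^\perp$ being both idempotent is enough to force $p \mult q = q \mult p$, so the forward direction reduces to showing that $p \mult q$ idempotent implies $p \mult q^\perp$ is idempotent. The infimum statement is then a short calculation using S4 together with the characterisations of $r \leq p$ and $r \leq q$ in Proposition~\ref{prop:SEAbasicproperties}.

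For the forward direction, I would observe that from S1 (and Proposition~\ref{prop:SEAbasicproperties}\eqref{prop:SEAbasicproperties-1}),
\[ p \;=\; p \mult 1 \;=\; p \mult q \;\ovee\; p \mult q^\perp,\]
so $p \mult q$ and $p \mult q^\perp$ are summable with sum equal to the idempotent $p$. Since $p \mult q$ is assumed idempotent, point~\eqref{prop:SEAbasicproperties-7} of Proposition~\ref{prop:SEAbasicproperties} then gives that $p \mult q^\perp$ is idempotent too. Applying Lemma~\ref{lem:amultbidempotent} with $a=p$ and $b=q$ now yields $p \mult q = q \mult p$, i.e.\ $p \commu q$.

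For the converse, assume $p \commu q$. Since $q \commu q$ trivially and $q \commu p$, axiom~S5 gives $q \commu p \mult q$, and then S4 (together with the idempotency of $p$ and $q$) lets us compute
\[ (p \mult q) \mult (p \mult q) \;=\; p \mult \bigl(q \mult (p \mult q)\bigr) \;=\; p \mult \bigl((p \mult q) \mult q\bigr) \;=\; p \mult (p \mult q) \;=\; p \mult q,\]
so $p \mult q$ is idempotent.

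Finally, for the infimum claim, note that $p \mult q \leq p$ by Proposition~\ref{prop:SEAbasicproperties}\eqref{prop:SEAbasicproperties-2} and $p \mult q \leq q$ by Lemma~\ref{lem:amultbidempotent} (since $p \mult q$ is idempotent). If $r \leq p$ and $r \leq q$, then by Proposition~\ref{prop:SEAbasicproperties}\eqref{prop:SEAbasicproperties-5} we have $p \mult r = r$ and $q \mult r = r$, and S4 (valid because $p \commu q$) gives $(p \mult q) \mult r = p \mult (q \mult r) = p \mult r = r$, so $r \leq p \mult q$ by Proposition~\ref{prop:SEAbasicproperties}\eqref{prop:SEAbasicproperties-5} again. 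The only point that requires a moment of care is the forward direction, where the correct bookkeeping of which complements to take (working with $p \mult q^\perp$ rather than $(p \mult q)^\perp$) is what lets us invoke Lemma~\ref{lem:amultbidempotent} cleanly.
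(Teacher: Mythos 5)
Your proposal is correct and follows essentially the same route as the paper: the forward direction reduces to showing $p\mult q^\perp$ is idempotent (which the paper leaves implicit and you justify via Proposition~\ref{prop:SEAbasicproperties}.\ref{prop:SEAbasicproperties-7}) and then invokes Lemma~\ref{lem:amultbidempotent}, the converse is the same direct computation, and the infimum argument matches the paper's (you associate via $p\commu q$ where the paper uses $r\commu p$, an immaterial difference).
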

\begin{proof}
    If $p$ and $q$ commute, then it is straightforward to show that $(p\mult q)^2 = p\mult q$.

Now for the other direction, suppose $p\mult q$ is idempotent.
Then~$p\mult q^\perp $ is idempotent too,
and so~$p$ and~$q$ commute, by Lemma~\ref{lem:amultbidempotent}.

Finally, we must show that~$p\mult q$ is the infimum of~$p$ and~$q$ in~$E$.
By Lemma~\ref{lem:amultbidempotent}
we know that~$p \mult q$ is a lower bound of~$p$ and~$q$.
To show that~$p\mult q$ is the greatest lower bound,
let~$r\leq p,q$ be given.
Since~$p$ and~$q$ are idempotents,
we have $r\mult p = r = r\mult q$
and~$r\commu p$,
    so $r\mult(p\mult q) = (r\mult p)\mult q
    = r\mult q = r$.
    It follows that $r\leq p\mult q$, and so we conclude that indeed 
    $p\mult q = p\wedge q$.
\end{proof}

Recall that a SEA is Boolean when every element is idempotent.

\begin{proposition}\label{prop:SEAsharpisBoolean}
    Let $E$ be a Boolean SEA.
    Then $E$ is a Boolean algebra.
    Furthermore, for all $a,b \in E$, $a\mult
    b = a\wedge b$. If $E$ is normal, then~$E$ is complete as a Boolean
    algebra.
\end{proposition}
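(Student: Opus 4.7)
The plan is to first extract commutativity of $E$ from Corollary~\ref{cor:booleans-central}, then appeal to the effect-monoid side of the dictionary already set up in the preliminaries. Since every element of the Boolean SEA $E$ is an idempotent, any element $p\in E$ trivially has the property that $p\mult E = \{a\in E;\ a\leq p\}$ consists only of idempotents, so $p$ is a Boolean idempotent in the sense of the definition preceding the commutant. By Corollary~\ref{cor:booleans-central}, $p$ is therefore central. As $p$ was arbitrary, every element of $E$ lies in $Z(E)$, so $E$ is a commutative SEA.

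Once $E$ is commutative, the identification $a\mult b = a\wedge b$ is almost immediate: for any $a,b\in E$, both $a$ and $b$ are idempotent, so $a\mult b$ is an idempotent as well (being below the idempotent $a$), and the preceding proposition gives $a\mult b = a\wedge b$. To obtain the full Boolean-algebra structure, I would invoke Example~\ref{prop:commutativemonoidissequential}, which says that a commutative SEA is the same thing as a commutative effect monoid; then $E$ is a Boolean effect monoid, and by Example~\ref{ex:booleanalgebra} (and the cited \cite[Proposition~47]{first}) every Boolean effect monoid is a Boolean algebra, with meet given by the monoid product. This simultaneously yields both the Boolean-algebra structure and the formula $a\mult b = a\wedge b$.

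For the normal case, the only thing still to prove is that $E$ is a \emph{complete} Boolean algebra. Normality provides that $E$ is directed complete. In any Boolean algebra that has binary joins and directed suprema, arbitrary suprema exist: given $S\subseteq E$, the collection of finite joins of elements of $S$ is upward directed and has the same upper bounds as $S$, so its directed supremum is $\bigvee S$. This promotes directed completeness to completeness, concluding the proof.

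I do not expect any serious obstacle; essentially all the content has been front-loaded into Corollary~\ref{cor:booleans-central} and Example~\ref{ex:booleanalgebra}. The only substantive step is the initial observation that in a Boolean SEA every idempotent is automatically a Boolean idempotent, which is what makes Corollary~\ref{cor:booleans-central} applicable globally and forces $E$ to be commutative.
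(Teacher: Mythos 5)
Your proof is correct and follows essentially the same route as the paper: establish commutativity (the paper does this directly via the proposition that two idempotents commute iff their product is idempotent, while you route through Corollary~\ref{cor:booleans-central}, which rests on the same Lemma~\ref{lem:amultbidempotent}), identify $a\mult b$ with $a\wedge b$, and then pass to commutative effect monoids and invoke \cite[Proposition~47]{first}. The only cosmetic difference is that you spell out the promotion of directed completeness to completeness, which the paper delegates to the cited reference.
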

\begin{proof}
    Let $a,b\in E$. As $a$, $b$ and $a\mult b$ are idempotent by
    assumption, by the previous proposition they commute and $a\mult
    b  = a\wedge b$. We conclude that $E$ is commutative, and hence
    it is a (directed-complete) Boolean effect monoid. Ref.~\cite[Proposition~47]{first} then shows that it is a (complete) Boolean algebra.
\end{proof}

Using our later results it will turn out that any SEA containing only a finite number of elements must be a Boolean algebra (see Corollary~\ref{cor:finite-Boolean}).

\section{Almost-convex sequential effect algebras}\label{sec:almostconvex}

In this section we study what we call \emph{almost-convex} SEAs (abbreviated to a-convex). We do this because they naturally arise in a structure theorem for normal SEAs that we prove at the end of this section.

The content of this section is rather technical, with items~\ref{def:floor}--\ref{prop:a-convex-from-phi} serving as preparation for the main results: Theorem~\ref{thm-a-convex-thm} that characterises convex normal SEAs among a-convex normal SEAs in several different ways, and Theorems~\ref{thm:a-convexthm} and~\ref{thm:SEAsplitupinconvexandsharp} that together show that any normal SEA splits up into a direct sum of a Boolean algebra and an a-convex effect algebra.

In the next section we will show that an a-convex normal SEA factors
into a convex normal SEA and a `purely a-convex' normal SEA 
(Proposition~\ref{prop:a-convexsplitinconvexandsharp}).
\begin{definition}\label{def:almostconvex}
An \Define{a-convex action} (\emph{almost-convex action}) on an effect algebra~$E$
  is a map $\cdot: [0,1]\times E\rightarrow E$,
    where $[0,1]$ is the standard real unit interval, satisfying
    the following axioms for all $a,b\in E$ and~$\lambda, \mu \in
    [0,1]$:
    \begin{itemize}
        \item $\lambda\cdot (\mu\cdot a) = (\lambda\mu)\cdot a$.
    \item If $\lambda+\mu \leq 1$, then $\lambda \cdot a \perp
    \mu \cdot a$ and $\lambda \cdot a \ovee \mu \cdot a =
    (\lambda+\mu)\cdot a$.
        \item $1\cdot a = a$.
    \end{itemize}
  An \Define{a-convex effect algebra}
is an effect algebra~$E$ endowed
with such an a-convex action.
  When we say that an effect algebra~$E$ is a-convex
we mean that there's at least one a-convex action on~$E$
  (but note that in general, there might be more then one a-convex action on a given effect algebra.)
\end{definition}
\begin{remark}
Recall from Definition~\ref{def:convex} that a convex action on an effect algebra
  is an a-convex action that in addition satisfies
  the requirement that
for all summable $a$ and $b$, and $\lambda\in [0,1]$
    \begin{equation*}
    \lambda\cdot (a\ovee b) \ = \ \lambda\cdot a \ovee \lambda\cdot b.
    \end{equation*}
\end{remark}

The definition of a-convexity is strictly weaker then that of convexity. Before we demonstrate this with an explicit example, we first recall the following method for constructing new (sequential) effect algebras 
by taking the `disjoint union' of effect algebras:

\begin{definition}
  \label{def:horizontal-sum}
    Let $I$ be some indexing set, and let~$E_\alpha$ be an effect algebra for each $\alpha\in I$. The \Define{horizontal
    sum} \cite{foulis1994effect} of the $E_\alpha$ is then defined as $\HS(E_\alpha, I) :=
    \left(\coprod_{\alpha\in I} E_\alpha\right)/_{\sim}$, the
    disjoint union modulo the identification of all the zeros and
    and all the ones. 
    Explicitly, denoting an element of $\coprod_{\alpha\in I} E_\alpha$ by $(a,\alpha)$ where $\alpha \in I$ and $a\in E_\alpha$ we set $(a,\alpha) \sim (b,\beta)$ iff $a=b=1$ or
    $a=b=0$ or $a=b$ and $\alpha=\beta$. 
    We define the complement as $(a,\alpha)^\perp = (a^\perp, \alpha)$. 
    The summability relation is defined as follows. For elements $(a,\alpha)$ and $(b,\beta)$ we set $(a,\alpha)\perp (b,\beta)$ iff $\alpha=\beta$ and $a\perp b$, or if $a=0$ or $b=0$. The sum $(a,\alpha)\ovee (b,\beta)$ for summable elements $(a,\alpha)$ and $(b,\beta)$ is defined by case distinction as follows. If $a=0$, then $(a,\alpha)\ovee (b,\beta) = (b,\beta)$. If $b=0$, then $(a,\alpha)\ovee (b,\beta) = (a,\alpha)$. If $a\neq 0$ and $b\neq 0$ then necessarily $\alpha=\beta$ for summable elements and we set $(a,\alpha)\ovee (b,\alpha) = (a\ovee b,\alpha)$.
    This complement, summability relation, and addition makes the horizontal sum an
    effect algebra.
\end{definition}
This definition might look a bit arbitrary, but note that the horizontal sum of two effect algebras is in fact the coproduct in the category of effect algebras with unital morphisms~\cite{jacobs2012coreflections}.
The horizontal sum of two sequential effect algebras does not have to again be a sequential effect algebra. Necessary and sufficient conditions were found in Ref.~\cite[Theorem 8.2]{gudder2002sequential} for a horizontal sum of sequential effect algebras to allow a sequential product. Let us give a simple example of such a SEA coming from a horizontal sum.
\begin{example}\label{ex:horizontal-interval}
    Let $H$ be the horizontal sum of the unit interval with
    itself. I.e.\ $H$ is the disjoint union of the unit interval
    with itself, that we will call the left~$[0,1]_L$ and the
    right~$[0,1]_R$ interval, where $0_L = 0_R$ and $1_L=1_R$ are
    identified.  This is an effect algebra where addition is only
    defined when elements are both from the left respectively the
    right interval, and the complement is $\lambda_L^\perp =
    (1-\lambda)_L$ (same for right). It is easy to see that this
    effect algebra is directed complete (as each of the unit intervals is). It is also a normal SEA
    with the product $\lambda_A\mult \mu_B = (\lambda\mu)_A$ where
    $A,B\in \{L,R\}$.  We can give $H$ two different a-convex
    structures, determined by either setting $\lambda\cdot 1 =
    \lambda_L$ or $\lambda\cdot 1 = \lambda_R$. For every other
    element there is a unique choice given by $\lambda\cdot \mu_A
    = (\lambda\mu)_A$ for $A\in \{L,R\}$.
    It is straightforward to check that either choice for $\lambda\cdot
    1$ gives an a-convex action on $H$. This however does not
    make $H$ a convex effect algebra, because (supposing without
    loss of generality that $\lambda\cdot 1 = \lambda_L$) we get
    $\lambda \cdot (\frac12_R \ovee \frac12_R) = \lambda\cdot 1 =
    \lambda_L$, while $\lambda \cdot \frac12_R \ovee \lambda \cdot
    \frac12_R = (\lambda \frac12)_R \ovee (\lambda \frac12)_R =
    \lambda_R$.
\end{example}

\noindent On a normal SEA~$E$ an a-convex action
    yields an additive map~$\varphi\colon [0,1] \to E$
            given by~$\varphi(\lambda) = \lambda\cdot 1$.
We will prove its converse, but that will require some preparation.

Recall that a normal SEA also has infima of decreasing sequences (Remark~\ref{remark:infima}).

\begin{definition}\label{def:floor}
    Let $E$ be a normal SEA. For $a\in E$ we define its \Define{floor} to be $\floor{a} := \bigwedge_n a^n$.
\end{definition}

\begin{lemma}
    Let $E$ be a normal SEA. For $a\in E$, we have $\floor{a} \leq a$. Furthermore, $\floor{a}$ is the largest idempotent below $a$.
\end{lemma}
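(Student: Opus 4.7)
The plan is to handle the two claims separately. The first, $\floor{a} \le a$, will be immediate from the definition, since $\floor{a} = \bigwedge_{n\geq 1} a^n$ is a lower bound of the set of powers and so in particular $\floor{a} \le a^1 = a$.

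For the second claim I will first establish that $\floor{a}$ is itself idempotent and then show that every idempotent below $a$ lies below $\floor{a}$. The key technical tool I will need is a \emph{dual} of axiom S6 for filtered infima: for any $x \in E$ and any filtered $T \subseteq E$ with infimum $m$, the identity $x \mult m = \bigwedge_{t \in T} x \mult t$ should hold. I would derive this by applying S6 to the directed family $\{t^\perp : t \in T\}$, which has supremum $m^\perp$: since by S1 we have $x \mult y^\perp = x \ominus x \mult y$, axiom S6 gives $x \ominus x \mult m = \bigvee_{t \in T} (x \ominus x \mult t)$, and because $y \mapsto x \ominus y$ is an order anti-isomorphism on $[0,x]$ this rewrites as $x \ominus x \mult m = x \ominus \bigwedge_{t \in T} x \mult t$; cancellation in the effect algebra then yields the desired identity.

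With this lemma in hand, I expect the idempotency of $\floor{a}$ to drop out as follows. First, note that $\floor{a} \in \{a\}''$: since the bicommutant is a normal sub-SEA containing $a$, it contains every $a^n$, hence every $(a^n)^\perp$, and therefore (being closed under directed suprema) the supremum $\bigvee_n (a^n)^\perp = \floor{a}^\perp$, and so $\floor{a}$ itself. By Proposition~\ref{prop:doublecommutant} the bicommutant $\{a\}''$ is commutative, so in particular $a \commu \floor{a}$. Applying the dual S6 with $x = a$ and $T = \{a^n\}$ gives $a \mult \floor{a} = \bigwedge_n a^{n+1} = \floor{a}$. Combining $a \commu \floor{a}$ with the associativity axiom S4, I will then show inductively that $\floor{a} \mult a^n = \floor{a}$ for every $n$. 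A second application of the dual S6, this time with $x = \floor{a}$, will yield $\floor{a}^2 = \floor{a} \mult \bigwedge_n a^n = \bigwedge_n \floor{a} \mult a^n = \bigwedge_n \floor{a} = \floor{a}$.

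Finally, for the maximality statement, I will take an idempotent $p \le a$ and argue directly: by point~\ref{prop:SEAbasicproperties-4} of Proposition~\ref{prop:SEAbasicproperties} we have $a \mult p = p$ and $p \commu a$, so S4 applied inductively gives $a^n \mult p = p$ for every $n$; hence $p = a^n \mult p \le a^n$ by point~\ref{prop:SEAbasicproperties-2} of the same proposition, and so $p \le \bigwedge_n a^n = \floor{a}$. The main obstacle in the whole argument is the derivation of the dual of S6; once that is in place, everything else reduces to routine manipulation of the SEA axioms.
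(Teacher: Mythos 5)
Your proof is correct and follows essentially the same route as the paper: establish $a\commu\floor{a}$, compute $a\mult\floor{a}=\bigwedge_n a^{n+1}=\floor{a}$, deduce $\floor{a}\mult a^n=\floor{a}$ and hence $\floor{a}^2=\floor{a}$, and for maximality push an idempotent $p\leq a$ below every power $a^n$. The only difference is that you explicitly derive the preservation of filtered infima in the second argument (the ``dual'' of S6) from S1 and S6, a step the paper uses tacitly; your derivation of it is valid.
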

\begin{proof}
  To start, 
  since~$a$ commutes with each~$a^n$,
  we have~$a \commu  \floor{a} = \bigwedge_n a^n$,
  and $\floor{a}\mult a = a\mult \floor{a} = \bigwedge_n a\mult a^n 
  = \bigwedge_n a^{n+1}
    = \bigwedge_n a^n = \floor{a}$.
Then~$\floor{a}\mult a^n = \floor{a}$ for all $n$,
  and so $\floor{a}^2 = \bigwedge_n \floor{a} \mult a^n = \floor{a}$.
Hence~$\floor{a}$ is an idempotent.

Let~$p$ be an idempotent below~$a$;  we must show that~$p\leq \floor{a}$.
Since~$p\leq a$, we have $p\mult a = a\mult p=  p$.
Note that~$p\mult a^2 = p\mult(a\mult a)=(p\mult a)\mult a=p\mult a=p$,
using here that~$a\commu p$. By a similar reasoning
we get $p\mult a^n =p$ for all~$n$.
  Thus~$p\mult \floor{a} = \bigwedge_n p\mult a^n 
	= p$, and so $p\leq \floor{a}$.
\end{proof}

\begin{lemma}\label{lem:uniquedivision}
Given an element~$a$ of a normal SEA~$E$ with~$\floor{a}=0$
and~$n\in \N_{>0}$,
there is a unique~$a'\in E$ with~$a=na'$.
  Moreover, $a'\in \{a\}''$.
\end{lemma}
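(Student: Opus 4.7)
The plan is to leverage the spectral decomposition of the commutative bicommutant~$\{a\}''$. By Proposition~\ref{prop:doublecommutant} and Example~\ref{prop:commutativemonoidissequential}, $\{a\}''$ is a directed-complete commutative effect monoid, so by Theorem~\ref{thm:first} it decomposes as $\{a\}''\cong [0,1]_{C(X)} \oplus B$ for some extremally-disconnected compact Hausdorff space~$X$ and complete Boolean algebra~$B$. Writing $a=(a_c,a_B)$ under this isomorphism and observing that floors are computed coordinate-wise with $\floor{p} = p$ for each $p\in B$ (every Boolean element is idempotent), the hypothesis $\floor{a} = 0$ forces $a_B=0$. Here I use that the floor computed inside $\{a\}''$ agrees with the floor in~$E$, which holds because $\{a\}''$ is closed under directed suprema in~$E$ (and hence under filtered infima, by complementation).

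For existence, take $a' := (a_c/n, 0) \in \{a\}''$. Since $a_c/n$ is plainly a legitimate element of $[0,1]_{C(X)}$, this is well-defined and satisfies $na' = a$. It also establishes the ``moreover'' clause.

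For uniqueness, let $b\in E$ satisfy $nb=a$. I would first show $\{b\}' \subseteq \{a\}'$: if $x\commu b$, then repeated application of S5 (exploiting that $kb\perp b$ for each $k<n$, since the full sum $nb$ exists) yields $x\commu kb$ for every $k\leq n$, and in particular $x\commu a$. Taking commutants gives $\{a\}''\subseteq\{b\}''$, so $a'\in\{b\}''$. Now $\{b\}''$ is itself a directed-complete commutative effect monoid; applying Theorem~\ref{thm:first} to it and rerunning the floor argument shows that $a$ sits entirely in the convex component of the corresponding decomposition. Since in a Boolean algebra $kp$ exists for $k\geq 2$ only when $p=0$, any solution $x\in\{b\}''$ of $nx=a$ must itself lie in the convex part, where division by~$n$ in~$C(Y)$ is uniquely solvable. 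Both $a'$ and $b$ are such solutions, hence $b=a'$.

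The main subtlety I anticipate is reconciling the two decompositions, of $\{a\}''$ and $\{b\}''$, across the inclusion $\{a\}''\subseteq \{b\}''$; however, the only fact one actually needs is that $\floor{a} = 0$ persists when computed inside $\{b\}''$, which is immediate from $\{b\}''$ being a sub-normal-SEA of~$E$.
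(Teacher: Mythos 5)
Your proof is correct and follows essentially the same route as the paper's: decompose the bicommutant via the representation theorem for directed-complete effect monoids, use $\floor{a}=0$ to kill the Boolean summand, and perform the (unique) division by $n$ in the convex summand $[0,1]_{C(X)}$. The only cosmetic difference is in the uniqueness step, where the paper works inside $\{b,a'\}''$ (noting directly that $b\commu nb=a$ puts $b$ in $\{a\}'$, hence $b\commu a'$), whereas you establish $\{a\}''\subseteq\{b\}''$ via S5 and rerun the decomposition there; both reduce to the same facts about Boolean algebras and $C(X)$.
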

\begin{proof}
Concerning uniqueness, suppose for now that there is an~$a'\in \{a\}''$
with~$na'=a$, and let~$b\in E$ with~$a=nb$ be given.
Since~$b\commu nb=a$, we have~$b\in \{a\}'$,
  and thus~$b\commu a'$, using here that~$a'\in \{a\}''$.
It follows that~$b$ and~$a'$ are both part
  of the commutative normal SEA  $\{b,a'\}''$.
Now,
the representation theory for directed-complete effect monoids
(cf.~Section~\ref{sec:representation-theorem})
gives us that
  $nb=na'$ implies~$b=a'$ (indeed, this implication
  holds for commutative $C^*$-algebras,
  and trivially for Boolean algebras). Hence,~$a'$ is unique.

For the existence of~$a'\in\{a\}''$ 
we can assume without loss of generality that~$E=\{a\}''$,
and, moreover,
that~$E = B\oplus C$
for some complete Boolean algebra~$B$,
and some convex normal effect monoid~$C$,
again by the representation theory for directed-complete effect monoids.
  Then~$a=(b,c)$ for some~$b\in B$ and~$c\in C$.
  We claim that~$b=0$.
  For this, note that~$b$ must be an idempotent since~$B$ is Boolean, 
  and so~$\floor{b}=b$.
Since $0=\floor{a}=\floor{(b,c)}=(\floor{b},\floor{c})=(b,\floor{c})$,
we have~$b=0$.
Finally, define~$a':=(0,\frac{1}{n}\cdot c)$,
  using here that~$C$ is convex, and observe that~$na'=a$.
\end{proof}

Note that the condition~$\floor{a} = 0$
    is necessary in the previous Lemma. Indeed, in
    Example~\ref{ex:horizontal-interval} taking $a=1$, we get $2(\frac12_L) = a
    = 2(\frac12_R)$ while $\frac12_L\neq \frac12_R$.
\begin{proposition}
\label{prop:add-into-nsea}
Let~$\varphi\colon [0,1]\to E$
be an additive map into a normal SEA~$E$.
\begin{enumerate}
\item
\label{prop:add-into-nsea-1}
$\varphi$ is normal:
we have $\varphi(\bigvee D)=\bigvee_{\lambda\in D} \varphi(\lambda)$
for every directed subset~$D$ of~$[0,1]$.
\item
\label{prop:add-into-nsea-2}
$\varphi(\lambda)\commu \varphi(\mu)$
        for all~$\lambda,\mu\in [0,1]$.
% \item
% \label{prop:add-into-nsea-3}
% $\varphi(\kappa \lambda)\mult \varphi(\mu)
%     = \varphi(\kappa)\mult\varphi(\lambda\mu)$
%     for all~$\kappa,\lambda,\mu\in[0,1]$.
\item
\label{prop:add-into-nsea-5}
    $\floor{\varphi(\lambda)} = 0 $ for any~$\lambda \in [0,1)$.
\item
\label{prop:add-into-nsea-4}
If~$\varphi(\lambda) = \psi(\lambda)$
    for some~$\lambda \in (0,1)$
        and additive~$\psi\colon [0,1] \to E$,
        then~$\varphi = \psi$.
\end{enumerate}
\end{proposition}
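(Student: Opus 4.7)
The plan is to establish the four items in the order stated, since each depends on the previous. To start, additivity makes $\varphi$ monotone: if $\lambda\leq \mu$ then $\varphi(\mu) = \varphi(\lambda)\ovee \varphi(\mu-\lambda)\geq \varphi(\lambda)$. For item~\ref{prop:add-into-nsea-1}, set $s := \bigvee D$ and $b := \bigvee_{\lambda\in D}\varphi(\lambda)$; monotonicity gives $b\leq \varphi(s)$. For the reverse, I would first show $\bigwedge_n \varphi(\tfrac{1}{n}) = 0$: denoting this infimum by $a$, since $\varphi(1) = n\varphi(\tfrac{1}{n})$ is defined for every $n$ and summability is inherited downward, every $n$-fold sum $na$ exists and is bounded by $\varphi(1)$, so Lemma~\ref{lem:archemedeanomegadirectedcomplete} yields $a = 0$. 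Then for each $\lambda\in D$ we have $\varphi(s)\ominus b\leq \varphi(s)\ominus \varphi(\lambda) = \varphi(s-\lambda)$, and since $s-\lambda$ is cofinally below each $\tfrac{1}{n}$, the filtered infimum of the $\varphi(s-\lambda)$ is dominated by $\bigwedge_n \varphi(\tfrac{1}{n}) = 0$, giving $\varphi(s) = b$.

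For item~\ref{prop:add-into-nsea-2}, the relation $\varphi(\tfrac{1}{N})\commu \varphi(\tfrac{1}{N})$ is automatic, and repeated applications of S5 propagate this to $p\varphi(\tfrac{1}{N})\commu q\varphi(\tfrac{1}{N})$ for integers $p,q\leq N$, which by additivity says $\varphi(\tfrac{p}{N})\commu \varphi(\tfrac{q}{N})$. For arbitrary $\lambda,\mu\in [0,1]$, approximate each from below by rationals, turn $\varphi(\lambda)$ and $\varphi(\mu)$ into directed suprema via item~\ref{prop:add-into-nsea-1}, and use S6 twice to lift commutativity from the rational pairs to the suprema.

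For item~\ref{prop:add-into-nsea-5}, item~\ref{prop:add-into-nsea-2} places the image of $\varphi$ inside the commutative sub-normal SEA $S''$, where $S := \varphi([0,1])$; by Example~\ref{prop:commutativemonoidissequential} this is a directed-complete commutative effect monoid, which Theorem~\ref{thm:first} decomposes as $[0,1]_{C(X)}\oplus B$ with $X$ extremally disconnected and $B$ a complete Boolean algebra. In the Boolean summand, $\varphi(\tfrac{1}{n})$ is self-summable for every $n\geq 2$ (since $n\varphi(\tfrac{1}{n})=\varphi(1)$ exists), which in a Boolean algebra forces $\varphi(\tfrac{1}{n}) = 0$; additivity then gives $\varphi(q) = 0$ for every rational $q<1$, and item~\ref{prop:add-into-nsea-1} extends this to all $\lambda<1$. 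In the $C(X)$-summand, additivity on rationals and item~\ref{prop:add-into-nsea-1} force $\varphi(\lambda) = \lambda\cdot f$ where $f$ is the image of $\varphi(1)$, so $\lambda f\leq \lambda<1$ pointwise, no non-zero clopen characteristic function lies below it, and the floor vanishes. The floor in $S''$ coincides with that in $E$ because $S''$ is closed under the filtered infima used to define it.

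For item~\ref{prop:add-into-nsea-4}, suppose $\varphi(\lambda) = \psi(\lambda) =: a$ with $\lambda\in (0,1)$. By item~\ref{prop:add-into-nsea-5} we have $\floor{a} = 0$, so Lemma~\ref{lem:uniquedivision} uniquely determines the $n$-fold division of $a$; since $n\varphi(\tfrac{\lambda}{n}) = a = n\psi(\tfrac{\lambda}{n})$ by additivity, $\varphi$ and $\psi$ agree on $\tfrac{\lambda}{n}$ for every $n$, hence on all of $\lambda\mathbb{Q}\cap [0,1]$ by additivity, and item~\ref{prop:add-into-nsea-1} extends this to $[0,1]$ by density and normality. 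I expect the main obstacle to be item~\ref{prop:add-into-nsea-5}: setting up the spectral decomposition inside the bicommutant, verifying that the Boolean summand must vanish on $[0,1)$, and identifying the floor in $S''$ with that in $E$ all require care; the remaining items are Archimedean-style arguments plus density extensions.
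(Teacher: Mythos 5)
Your argument is correct, and for items~\ref{prop:add-into-nsea-1}, \ref{prop:add-into-nsea-2} and~\ref{prop:add-into-nsea-4} it is essentially the paper's proof: the same Archimedean argument via Lemma~\ref{lem:archemedeanomegadirectedcomplete} (you route it through the infimum $\bigwedge_n\varphi(\frac1n)=0$ rather than applying the lemma directly to $\varphi(\bigvee D)\ominus\bigvee_\lambda\varphi(\lambda)$, a cosmetic difference), the same S5/S6 bootstrap from $\varphi(\frac{1}{nm})\commu\varphi(\frac{1}{nm})$ to all pairs, and the same use of Lemma~\ref{lem:uniquedivision} plus density for uniqueness. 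Where you genuinely diverge is item~\ref{prop:add-into-nsea-5}. The paper argues directly and elementarily: for an idempotent $p\leq\varphi(1-\frac1n)$ one gets $\varphi(\frac1n)\leq p^\perp$, hence $\varphi(1)=n\varphi(\frac1n)\leq p^\perp$ by Lemma~\ref{lem:summableunderidempotent}, so $p\leq\varphi(1)^\perp$ and $p\leq\varphi(1)$ force $p$ to be self-summable and therefore zero. You instead pass to the bicommutant of the image and invoke Theorem~\ref{thm:first}, killing the Boolean summand by self-summability of $\varphi(\frac1n)$ and computing the floor of $\lambda f$ explicitly in $[0,1]_{C(X)}$. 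Both are valid (the representation machinery is already available at this point, so there is no circularity, and you correctly flag the one point needing care, namely that filtered infima in $S''$ agree with those in $E$); the paper's route is lighter and self-contained, while yours buys a concrete picture of the whole map $\varphi$ at the cost of the heavier decomposition theorem.
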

\begin{proof}
\ref{prop:add-into-nsea-1}.
Since
$\bigvee_{\lambda\in D} \varphi(\lambda)
\leq 
\varphi(\bigvee D)$
the difference
$\varphi(\bigvee D)
 \,\ominus\, \bigvee_{\lambda\in D} \varphi(\lambda)$ 
exists;
we must show that it is zero.
Let a natural number~$n>0$ be given,
    and pick~$\mu\in D$ with $\bigvee D- \mu \leq \frac{1}{n}$. 
  Since
\begin{equation*}
    \textstyle
\varphi(\bigvee D)
 \,\ominus\, \bigvee_{\lambda\in D} \varphi(\lambda)
\ \leq\ 
\varphi(\bigvee D)
    \,\ominus\, \varphi(\mu)
\ =\ 
\varphi(\bigvee D-\mu)
    \ \leq \ \varphi(\frac{1}{n}),
\end{equation*}
using that~$\varphi(\mu)\leq \bigvee_{\lambda\in D} \varphi(\lambda)$
in the first inequality,
we see then that
$\varphi(\bigvee D)
 \,\ominus\, \bigvee_{\lambda\in D} \varphi(\lambda)$
    has an $n$-fold sum for all $n$,
    and hence must therefore be zero by Lemma~\ref{lem:archemedeanomegadirectedcomplete}.

\ref{prop:add-into-nsea-2}.
Let~$n,m>0$ be natural numbers.
    Since~$\varphi(\frac{1}{nm})$
    commutes with itself,
    it commutes with $n\varphi(\frac{1}{nm}) = \varphi(\frac{1}{m})$
    by~S5. But then $m\varphi(\frac{1}{nm}) = \varphi(\frac{1}{n})$
    commutes with~$\varphi(\frac{1}{m})$ too,
    again by~S5.
Going on like this
we see that~$\varphi(\frac{k}{n})\commu \varphi(\frac{\ell}{m})$ 
for all natural numbers $k\leq n$ and~$\ell \leq m$.
Whence~$\varphi(q)\commu \varphi(r)$ for all \emph{rational}
$q,r\in [0,1]$.
Now let~$x,y\in [0,1]$ be arbitrary,
    and pick directed sets~$C,D\subseteq[0,1]$  of rational numbers with $\bigvee C=x$
and $\bigvee D = y$.
Then~$\varphi(c)\commu \varphi(d)$ for all~$c\in C$ and~$d\in D$,
and so~$\varphi(x)=\bigvee_{c\in C} \varphi(c)\commu \varphi(d)$
for all~$d\in D$, by~S6, and the fact that~$\varphi$ is normal.
But then~$\varphi(x)\commu \bigvee_{d\in D}\varphi(d)=\varphi(y)$ too.

% \ref{prop:add-into-nsea-3}.
% Fix $\kappa,\mu\in[0,1]$ and
% define $\beta\colon [0,1]\times [0,1]\to E$
%     by~$\beta(x,y):= \varphi(\kappa x)\mult\varphi(y\mu)$.
% We must show that~$\beta(x,1)=\beta(1,x)$ for all~$x\in [0,1]$.
% By the previous point $\beta$ is additive in both coordinates and hence by~S6 and point~\ref{prop:add-into-nsea-1}
% it suffices to show that~$\beta(r,1)=\beta(1,r)$ for all rational~$r\in [0,1]$.
% So let such~$r\equiv\frac{m}{n}$ be given, for some
% natural numbers $n>0$ and~$m\leq n$.
% Note that~$\beta$ is additive in both coordinates
% (using the previous point for the left coordinate).
% Then~$\beta(\frac{m}{n},1)=nm\beta(\frac{1}{n},\frac{1}{n})
% = \beta(1,\frac{m}{n})$, as required.

\ref{prop:add-into-nsea-5}.
Pick a natural number~$n >0$ with~$\lambda \leq 1 - \frac{1}{n}$.
Then~$\floor{\varphi(\lambda)} \leq \floor{\varphi(1 - \frac{1}{n})}$
    and so it suffices to show that~$\floor{\varphi(1-\frac{1}{n})}=0$.
To this end let~$p$ be an idempotent with~$p \leq \varphi(1 - \frac{1}{n})$.
We have to show that~$p=0$.
Since $p\leq \varphi(1-\frac{1}{n})=\varphi(1)\ominus \varphi(\frac{1}{n})
\leq 1\ominus \varphi(\frac{1}{n})$,
and so~$\varphi(\frac{1}{n})\leq p^\perp$,
we have~$\varphi(1) = n\varphi(\frac{1}{n})\leq p^\perp$
by Lemma~\ref{lem:summableunderidempotent},
and whence~$p\leq \varphi(1)^\perp$.
On the other hand $p\leq \varphi(1-\frac{1}{n})\leq \varphi(1)$,
so $p$, being below both $\varphi(1)$ and $\varphi(1)^\perp$,
is summable with itself.  
Since~$p$ is an idempotent,
we get $p\ovee p\leq p$ by Lemma~\ref{lem:summableunderidempotent},
and so~$p=0$, as desired.

\ref{prop:add-into-nsea-4}.
Since~$\varphi(\lambda)=\psi(\lambda)$,
we have $\varphi(\frac{\lambda}{n})=\psi(\frac{\lambda}{n})$
for all natural numbers~$n>0$,
by Lemma~\ref{lem:uniquedivision},
using here that~$\floor{\varphi(\lambda)} = \floor{\psi(\lambda)}=0$
by point~\ref{prop:add-into-nsea-5}.
Since numbers of the form~$\frac{m\lambda}{n}$
lie dense in~$[0,1]$,
point~\ref{prop:add-into-nsea-1}
entails that~$\varphi=\psi$.
\end{proof}

\begin{definition}
    Let $E$ be an effect algebra. We call an element $h\in E$ a \Define{half} when $h\ovee h = 1$.
\end{definition}
\begin{proposition}\label{prophalvecentral}
Let~$E$ be a normal SEA.
A half is central iff it is unique.
\end{proposition}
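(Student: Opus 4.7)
The plan is to prove both directions by reducing to the structure theorem for commutative normal SEAs (Theorem~\ref{thm:first}), with a short independent argument handling the idempotent subcase of the backward direction.

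For the forward direction, I would take any second half $h'\in E$, note that centrality of $h$ forces $h\commu h'$, and form the commutative normal sub-SEA $\{h,h'\}''$ via Proposition~\ref{prop:doublecommutant}. By Example~\ref{prop:commutativemonoidissequential} it is a directed-complete effect monoid, so by Theorem~\ref{thm:first} it decomposes as $[0,1]_{C(X)}\oplus B$ for a complete Boolean $B$ and extremally disconnected compact Hausdorff $X$. In a nontrivial Boolean algebra, $b\ovee b=1$ forces $b\perp b$ and hence $b\wedge b=0$, so $b=0$ and then $1=0$; thus $B$ must be trivial. Inside $[0,1]_{C(X)}$ the only half of $1$ is the constant function $\tfrac{1}{2}$, so $h=h'$.

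For the backward direction, assume $h$ is the unique half; I will show $a\commu h$ for arbitrary $a\in E$ by splitting on whether $a$ is idempotent. If $a$ is not idempotent, Corollary~\ref{seaspectral} gives $\{a\}''\cong [0,1]_{C(Y)}\oplus B'$, and $a\in\{a\}''$ not being idempotent forces $[0,1]_{C(Y)}$ to be nontrivial. Then the element matching $(\tfrac12,0)$ is a half of $1$ in $\{a\}''$, hence a half of $1$ in $E$, hence equal to $h$ by uniqueness. So $h\in\{a\}''\subseteq\{a\}'$, giving $a\commu h$.

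If instead $a$ is idempotent, then so is $a^\perp$ by Proposition~\ref{prop:SEAbasicproperties}\ref{prop:SEAbasicproperties-6}. Since $a\mult h\leq a$ and $a^\perp\mult h\leq a^\perp$, applying Proposition~\ref{prop:SEAbasicproperties}\ref{prop:SEAbasicproperties-5} with $p=a$ and with $p=a^\perp$ yields $a\mult(a\mult h)=(a\mult h)\mult a=a\mult h$ and $a\mult(a^\perp\mult h)=(a^\perp\mult h)\mult a=0$, so $a$ commutes with both $a\mult h$ and $a^\perp\mult h$. These two elements are summable (being below the complementary idempotents $a,a^\perp$), and axiom S1 gives $2(a\mult h)=a\mult(h\ovee h)=a\mult 1=a$, and likewise $2(a^\perp\mult h)=a^\perp$; therefore $(a\mult h)\ovee(a^\perp\mult h)$ is a half of $1$, equal to $h$ by uniqueness. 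Axiom S5 then delivers $a\commu h$, completing the proof.

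The main obstacle is this final idempotent subcase: the asymmetry in S1 (distributivity only in the second argument) precludes a head-on comparison of $h\mult a$ with $a\mult h$, so the crucial trick is to analyse $a\mult h$ and $a^\perp\mult h$ separately (where the idempotency of $a$ and $a^\perp$ makes them automatically commute with $a$ by Proposition~\ref{prop:SEAbasicproperties}\ref{prop:SEAbasicproperties-5}) and then invoke uniqueness of halves together with S5 to glue them into $h$.
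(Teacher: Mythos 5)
Your backward direction has a genuine gap in the non-idempotent case. From Corollary~\ref{seaspectral} you obtain $\{a\}''\cong[0,1]_{C(Y)}\oplus B'$, and non-idempotency of $a$ does force $C(Y)$ to be nontrivial; but the element corresponding to $(\tfrac12\mathbbm{1},0)$ satisfies $(\tfrac12\mathbbm{1},0)\ovee(\tfrac12\mathbbm{1},0)=(\mathbbm{1},0)$, which is the idempotent cutting out the convex corner, \emph{not} the unit $(\mathbbm{1},1_{B'})$. It is therefore a half of that idempotent, not of $1$, unless $B'$ is trivial---and nothing in your argument rules out a nontrivial Boolean corner. The inference ``convex part nontrivial $\Rightarrow$ $\{a\}''$ contains a half of $1$'' is invalid in general: for $a=(\tfrac12,1)$ in $[0,1]\oplus\{0,1\}$ the bicommutant is the whole (commutative) algebra and contains no half of $1$ whatsoever. (That particular $E$ has no half at all, so it does not contradict the proposition, but it shows your step does not follow from what precedes it; and excluding a Boolean corner of $\{a\}''$ under the mere hypothesis of a unique half appears to require essentially the centrality you are trying to establish, so I do not see how to patch this case along these lines. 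Appealing to Theorem~\ref{thm-a-convex-thm} is not an option either, since its proof cites this proposition.)

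The good news is that your idempotent case already contains the entire proof, and it is exactly the paper's argument: the only thing you used idempotency for was to obtain $a\commu a\mult h$ and $a\commu a^\perp\mult h$, and both commutations hold for \emph{arbitrary} $a$, because $a\mult h$ commutes with itself and hence, by S5, with $(a\mult h)\ovee(a\mult h)=a\mult(h\ovee h)=a$, while $a^\perp\mult h$ likewise commutes with $a^\perp$ and hence with $a$ by S4. With those in hand, your gluing step---$(a\mult h)\ovee(a^\perp\mult h)$ is a half of $1$, hence equals $h$ by uniqueness, hence $a\commu h$ by S5---goes through verbatim for every $a$, making the case split (and the broken case) unnecessary. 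Your forward direction is correct but heavier than needed: rather than invoking Proposition~\ref{prop:doublecommutant} and Theorem~\ref{thm:first}, note that any two commuting halves $h,g$ satisfy $h=h\mult(g\ovee g)=(h\mult g)\ovee(h\mult g)=(g\mult h)\ovee(g\mult h)=g\mult(h\ovee h)=g$.
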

\begin{proof}
Given elements~$h$ and~$g$  of~$E$
with $h\ovee h=1=g\ovee g$,
and~$h\mult g = g\mult h$,
we have
\[h=h\mult 1
  = h\mult (g\ovee g)
  = h\mult g\ovee h\mult g
  = g\mult h\ovee g\mult h
  = g\mult(h\ovee h) = g,\]
so commuting halves are equal.
In particular,
a half is unique when it is central.

For the converse,
suppose that~$h$ is the only half in~$E$,
and let~$a$ be an element of~$E$.
We must show that~$a$ and~$h$ commute.
To this end,
note that~$a \mult h \ovee a^\perp \mult h$
is a half,
and so~$h=a\mult h\ovee a^\perp\mult h$,
 by uniqueness of~$h$.
Since~$a\mult h$ commutes with $a\mult h \ovee a\mult h =  a$,
and, similarly, $a^\perp \mult h$ commutes with~$a^\perp$,
and thus with~$a$ too, 
  we see that~$a$ commutes with~$a\mult h\ovee a^\perp \mult h = h$.
Whence~$h$ is central.
\end{proof}

\begin{proposition}
\label{prop:a-convex-from-phi}
Let~$E$ be a normal SEA.
Any unital, additive map $\varphi\colon [0,1]\to E$
gives an a-convex action~$\cdot_\varphi$
on~$E$ via $\lambda\cdot_\varphi a = a\mult \varphi(\lambda)$.
Moreover,
for an a-convex action~$\cdot$, 
an a-convex action~$\cdot$
is of this form iff
  $\lambda \cdot a = a\mult (\lambda\cdot 1)$
  for all~$a\in E$ and~$\lambda\in [0,1]$.
\end{proposition}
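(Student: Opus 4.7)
My plan is to verify the three axioms of an a-convex action for $\cdot_\varphi$ and then handle the characterization. The unitality axiom $1\cdot_\varphi a = a\mult\varphi(1) = a\mult 1 = a$ is immediate from $\varphi(1)=1$, and additivity,
\[
\lambda\cdot_\varphi a \,\ovee\, \mu\cdot_\varphi a \ =\ a\mult(\varphi(\lambda)\ovee\varphi(\mu)) \ =\ a\mult\varphi(\lambda+\mu) \ =\ (\lambda+\mu)\cdot_\varphi a\qquad(\lambda+\mu\leq 1),
\]
follows from S1 and the additivity of $\varphi$. The main obstacle is the scalar-associativity axiom $\lambda\cdot_\varphi(\mu\cdot_\varphi a) = (\lambda\mu)\cdot_\varphi a$, which unfolds to
\[
(a\mult\varphi(\mu))\mult\varphi(\lambda) \ =\ a\mult\varphi(\lambda\mu).
\]
This resembles an associativity statement, but S4 does not apply directly since $a$ need not commute with $\varphi(\mu)$.

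My plan for this identity is to fix $\mu\in[0,1)$ and compare the two maps $\chi_1(\lambda) := (a\mult\varphi(\mu))\mult\varphi(\lambda)$ and $\chi_2(\lambda) := a\mult\varphi(\lambda\mu)$ as additive functions $[0,1]\to E$: both are additive in $\lambda$ by S1 and the additivity of $\varphi$, and they coincide at $\lambda=1$ with common value $a\mult\varphi(\mu)$. Once the key claim $\floor{a\mult\varphi(\mu)}=0$ is established, Lemma~\ref{lem:uniquedivision} applied to $n\chi_1(\tfrac{1}{n}) = a\mult\varphi(\mu) = n\chi_2(\tfrac{1}{n})$ will force $\chi_1(\tfrac{1}{n}) = \chi_2(\tfrac{1}{n})$ for every $n$; additivity extends equality to all rationals in $[0,1]$, and normality (part~\ref{prop:add-into-nsea-1} of Proposition~\ref{prop:add-into-nsea}) extends it to all of $[0,1]$; the boundary case $\mu=1$ is trivial since $\varphi(1)=1$. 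I expect the key claim to be the main hurdle, and plan to argue it as follows: given an idempotent $p\leq a\mult\varphi(\mu)$, we have $p\leq a$ (because $a\mult\varphi(\mu)\leq a$), so $p\commu a$ and $p\mult a = p$ by Proposition~\ref{prop:SEAbasicproperties}. Axiom~S4 applied to the commuting pair $p,a$ then yields $p\mult(a\mult\varphi(\mu)) = (p\mult a)\mult\varphi(\mu) = p\mult\varphi(\mu)$, while $p\leq a\mult\varphi(\mu)$ gives $p\mult(a\mult\varphi(\mu))=p$; combining, $p = p\mult\varphi(\mu)$. Expanding $p = p\mult(\varphi(\mu)\ovee\varphi(1-\mu))$ via S1 and cancelling leaves $p\mult\varphi(1-\mu) = 0$, whence $\varphi(1-\mu)\mult p = 0$ by S3. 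By Proposition~\ref{prop:SEAbasicproperties} this gives $\varphi(1-\mu)\leq p^\perp$, equivalently $p\leq\varphi(\mu)$, which forces $p=0$ since $\floor{\varphi(\mu)}=0$ by part~\ref{prop:add-into-nsea-5} of Proposition~\ref{prop:add-into-nsea}.

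The final characterization is routine: if $\cdot = \cdot_\varphi$, then $\lambda\cdot 1 = 1\mult\varphi(\lambda) = \varphi(\lambda)$, and hence $\lambda\cdot a = a\mult(\lambda\cdot 1)$; conversely, if $\lambda\cdot a = a\mult(\lambda\cdot 1)$ holds for all $a$ and $\lambda$, one defines $\varphi(\lambda):=\lambda\cdot 1$ and reads off unitality $\varphi(1)=1\cdot 1=1$ and additivity $\varphi(\lambda+\mu)=\varphi(\lambda)\ovee\varphi(\mu)$ (for $\lambda+\mu\leq 1$) directly from the corresponding axioms of the a-convex action $\cdot$, after which $\cdot = \cdot_\varphi$ by the first part.
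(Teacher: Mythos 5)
Your proposal is correct and follows essentially the same route as the paper: reduce the scalar-associativity axiom to the equality of two additive maps in the outer scalar, and settle it via Lemma~\ref{lem:uniquedivision} together with the fact that the relevant element has zero floor. The only (harmless) differences are that you establish $\floor{a\mult\varphi(\mu)}=0$ by a hands-on idempotent argument where the paper simply applies part~\ref{prop:add-into-nsea-5} of Proposition~\ref{prop:add-into-nsea} to the additive map $\nu\mapsto a\mult\varphi(\nu)$, and that you inline the density-plus-normality extension (agreement at each $\tfrac{1}{n}$, then at rationals, then everywhere) instead of invoking part~\ref{prop:add-into-nsea-4} after checking agreement at $\tfrac12$ via uniqueness of halves.
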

\begin{proof}
\emph{($\cdot_\varphi$ is an a-convex action)}
    Clearly~$1 \cdot_\varphi a = a$
        and~$(\lambda + \mu)\cdot_\varphi a 
        = \lambda \cdot_\varphi a \ovee \mu \cdot_\varphi a$
            for~$\lambda + \mu \leq 1$ and~$a\in E$.
The only difficulty here is in establishing the last remaining
  condition, that~$\mu \cdot_\varphi (\lambda \cdot_\varphi a) =
    (\mu\lambda)\cdot_\varphi a$
    given~$\mu,\lambda\in[0,1]$ and~$a\in E$.
  Since for fixed~$\lambda$ and~$a$ 
  both~$\mu \mapsto \mu \cdot_\varphi (\lambda \cdot_\varphi a)$
and~$\mu \mapsto (\mu \lambda) \cdot_\varphi a$
give additive maps~$[0,1] \to E$,
it suffices 
by point~\ref{prop:add-into-nsea-4} of Proposition~\ref{prop:add-into-nsea}
to show
  that~$\frac12 \cdot_\varphi (\lambda \cdot_\varphi a) 
  = (\frac12 \lambda) \cdot_\varphi a$.
For~$\lambda=1$, this is obvious enough,
and for~$\lambda<1$
this follows immediately from the fact that~$\lambda\cdot_\varphi a$
has a unique half
by Lemma~\ref{lem:uniquedivision},
because $\floor{\lambda \cdot_\varphi a} =0$
by point~\ref{prop:add-into-nsea-5} of Proposition~\ref{prop:add-into-nsea}.
Note that we could not 
  prove~$\frac12 \cdot_\varphi (\lambda \cdot_\varphi a) 
  = (\frac12 \lambda) \cdot_\varphi a$
  here
  by simply
  applying point~\ref{prop:add-into-nsea-4} of
  Proposition~\ref{prop:add-into-nsea} again
  as it's a priori not clear that 
  $\lambda \mapsto \frac12\cdot_\varphi(\lambda\cdot_\varphi a)$ is additive.

\emph{(The condition $\lambda \cdot a = a \mult (\lambda \cdot 1)$)}
  Note that~$a\mult (\lambda\cdot_\varphi 1)
  = a\mult (1\mult \varphi(\lambda)) = a\mult \varphi(\lambda)=
  \lambda\cdot_\varphi a$ for any unital, additive~$\varphi$.
  Conversely, for an a-convex action $\cdot$ 
with $\lambda\cdot a = a\mult(\lambda \cdot 1)$
  for all~$\lambda\in[0,1]$ and~$a\in E$,
  we immediately get~$\cdot = \cdot_\varphi$ where
  $\varphi(\lambda)=\lambda\cdot 1$.
\end{proof}
One might think that the process of constructing an a-convex action from a unital additive map, and retrieving a unital additive map from an a-convex action are each others inverse, i.e.~that we always have the condition $\lambda\cdot a = a\mult(\lambda\cdot 1)$ stated in the proposition above. Example~\ref{ex-aconvex-not-determined-by-scalars} below
shows that this is not the case.
For convex actions, however,
the equation $\lambda\cdot a=a\mult(\lambda\cdot 1)$ 
does hold
(because there is only one a-convex action
in the presence of a convex action, see Theorem~\ref{thm-a-convex-thm}).
\begin{example}
  \label{ex-aconvex-not-determined-by-scalars}
    Let $H=HS([0,1],[0,1])$ be the horizontal sum of $[0,1]$ with itself from Example~\ref{ex:horizontal-interval}, equipped with the a-convex action determined by $\lambda\cdot 1 = \lambda_L$.
    Now let $E=HS(H\oplus H, [0,1])$, the horizontal sum of $H\oplus H$ and $[0,1]$. We define a sequential product by case distinction: Let $(a,b),(a',b')\in H\oplus H$ and take $\mu,\mu' \in[0,1]$. We set $\mu\mult \mu' = \mu\mu'$ and $\mu\mult (a,b) = \frac12 (\mu a + \mu b)$, where we interpret $a$ and $b$ as elements of $[0,1]$. We set $(a,b)\mult (a',b') = (a\mult a', b\mult b')$ and $(a,b)\mult \mu = (a\mult \mu_R,b\mult \mu_R)$ where we interpret $\mu$ as $\mu_R$ in $H$. So in particular $(1,0)\mult \mu = (\mu_R,0)$.

    Now we equip $E$ with an a-convex action. For $(a,b)\in H\oplus H$ with $(a,b)\neq (1,1)$ we set $\lambda\cdot (a,b) = (\lambda\cdot a,\lambda\cdot b)$, and for $\mu\in[0,1]$ with $\mu\neq 1$ we set $\lambda\cdot \mu = \lambda\mu$. Finally, for $1\in E$ we set $\lambda\cdot 1 = \lambda_R$, i.e.~an element of $[0,1]$. We now see that $\lambda\cdot (1,0) = (\lambda\cdot 1,0) = (\lambda_L,0)$, while $(1,0)\mult (\lambda\cdot 1) = (1,0)\mult \lambda = (1\mult \lambda_R,0) = (\lambda_R,0)$.
\end{example}

\begin{theorem}\label{thm-a-convex-thm}
For a normal SEA~$E$ the following are equivalent.
\begin{enumerate}
\item
\label{thm-convex-convex}
There is a convex action on~$E$.
\item
\label{thm-convex-unique-a-convex}
There is precisely one a-convex action on~$E$.
\item
\label{thm-convex-central-half}
There is a central~$h\in E$ with $h\ovee h=1$.
\item
\label{thm-convex-unique-half}
There is precisely one $h\in E$ with $h\ovee h = 1$.
\item
\label{thm-convex-unique-halves}
For each~$a\in E$ there is precisely one~$b\in E$
with $b\ovee b = a$.
\item
\label{thm-convex-restriction-to-center}
There is an a-convex action on~$E$,
and any such a-convex action~$\cdot$ 
can be restricted to~$Z(E)$
in the sense that  $\lambda\cdot a$ is central
  for all~$\lambda\in[0,1]$ and~$a\in Z(E)$.
\item
\label{thm-convex-center-convex}
$Z(E)$ is convex.
\end{enumerate}
Moreover, in that case
  $\lambda\cdot a = a\mult(\lambda\cdot 1)$
for the unique a-convex action~$\cdot$ on~$E$,
and all~$a\in E$ and~$\lambda\in [0,1]$.
\end{theorem}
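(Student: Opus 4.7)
The plan is to organize the seven conditions around the structure of the center $Z(E)$.  By Proposition~\ref{prop:doublecommutant}, $Z(E)=E'$ is a sub-normal-SEA of~$E$; it is commutative because its members commute with everything in~$E$, in particular with each other.  Hence $Z(E)$ is a directed-complete commutative effect monoid (Example~\ref{prop:commutativemonoidissequential}), and by the representation theorem (Theorem~\ref{thm:first}) decomposes as $Z(E)\cong [0,1]_{C(X)}\oplus B$ with $X$ extremally disconnected compact Hausdorff and $B$ a complete Boolean algebra.  The key observation driving the whole proof is that a Boolean algebra admits an element $h$ with $h\ovee h=1$ only when $0=1$: in a Boolean algebra $h\ovee h$ requires $h\wedge h=0$, forcing $h=0$.

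The equivalence $(3)\Leftrightarrow(4)$ is immediate from Proposition~\ref{prophalvecentral}.  A central half in $E$ is precisely a half in $Z(E)$, and by the observation above $Z(E)$ admits a half exactly when its Boolean summand $B$ is trivial, i.e.~when $Z(E)\cong [0,1]_{C(X)}$ is convex; this yields $(3)\Leftrightarrow(7)$.  For $(7)\Rightarrow(1)$, the convex structure on $Z(E)$ supplies a unital additive map $\varphi\colon[0,1]\to Z(E)\subseteq E$, and Proposition~\ref{prop:a-convex-from-phi} makes $\lambda\cdot a:=a\mult\varphi(\lambda)$ into an a-convex action; because $\varphi(\lambda)$ is central we have $(a\ovee b)\mult\varphi(\lambda)=\varphi(\lambda)\mult(a\ovee b)=a\mult\varphi(\lambda)\,\ovee\,b\mult\varphi(\lambda)$, so this action is in fact convex.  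For $(1)\Rightarrow(5)$ we invoke Gudder's representation of a convex effect algebra as $[0,u]_V$ inside an ordered real vector space (cited just after Definition~\ref{def:convex}): in $V$ the equation $2b=2c=a$ forces $b=c$.  Trivially $(5)\Rightarrow(4)$, closing the main cycle.

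The connections with $(6)$ and $(2)$, and the moreover statement, all follow from one uniqueness-of-halves argument.  Fix any a-convex action $\cdot'$ on $E$ and any $a\in E$; set $h:=\tfrac12\cdot' 1$.  The maps $\psi(\lambda):=\lambda\cdot' a$ and $\psi'(\lambda):=a\mult(\lambda\cdot' 1)$ are both additive $[0,1]\to E$ (the second by axiom~S1 applied to the additive map $\lambda\mapsto\lambda\cdot' 1$), and at $\lambda=\tfrac12$ both $\psi(\tfrac12)$ and $\psi'(\tfrac12)=a\mult h$ are halves of~$a$, hence coincide by~$(5)$.  Point~\ref{prop:add-into-nsea-4} of Proposition~\ref{prop:add-into-nsea} then yields $\psi=\psi'$, i.e.\ the moreover statement.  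Combined with~$(4)$, a further appeal to that point pins down $\lambda\cdot' 1$, and hence $\cdot'$, completely, giving uniqueness~$(2)$.  For~$(6)$: the action constructed in $(7)\Rightarrow(1)$ plainly restricts to $Z(E)$ since $\varphi(\lambda)\in Z(E)$ and $Z(E)$ is closed under $\mult$; conversely $(6)\Rightarrow(3)$ is immediate because $\tfrac12\cdot 1$ is then a central half.

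The main obstacle is the direction $(2)\Rightarrow(4)$, for it requires fabricating a second a-convex action from a stray half.  Let $\cdot$ be the unique a-convex action, $h:=\tfrac12\cdot 1$, and suppose $g\neq h$ also satisfies $g\ovee g=1$.  I first observe $\floor{g}=0$: any idempotent $p\leq g$ satisfies $p\leq g\leq g^\perp\leq p^\perp$, so $p$ is summable with itself, whence $p\ovee p\leq p$ by Lemma~\ref{lem:summableunderidempotent} and consequently $p=0$.  By Lemma~\ref{lem:uniquedivision} the half of $g$, the half of that half, and so on, are uniquely defined --- each of the resulting $\varphi'(1/2^n)$ lies below $g$ and so inherits $\floor{\,\cdot\,}=0$ --- and extending additively over dyadic rationals and then to $[0,1]$ by directed suprema (using Proposition~\ref{prop:add-into-nsea}) yields a unital additive $\varphi'\colon[0,1]\to E$ with $\varphi'(1/2)=g$.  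Proposition~\ref{prop:a-convex-from-phi} then produces a second a-convex action $\cdot_{\varphi'}$ with $\tfrac12\cdot_{\varphi'}\!1=g\neq h$, contradicting~$(2)$.  The delicate technical step is verifying that the dyadic extension remains additive on limit points of $[0,1]$, which follows from a routine continuity argument using Lemma~\ref{lem:addition-normal} and the normality of $\mult$.
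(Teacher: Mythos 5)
Your proof is correct, and the overall skeleton (a cycle of implications organised around halves, Proposition~\ref{prop:add-into-nsea}.\ref{prop:add-into-nsea-4} to upgrade agreement at $\tfrac12$ to equality of actions, and the fabrication of a competing a-convex action to get \ref{thm-convex-unique-a-convex}$\Rightarrow$\ref{thm-convex-unique-half}) matches the paper's. The genuine differences are in three places. First, you make the center decomposition $Z(E)\cong[0,1]_{C(X)}\oplus B$ the hub and derive \ref{thm-convex-central-half}$\Leftrightarrow$\ref{thm-convex-center-convex} from it, whereas the paper treats \ref{thm-convex-center-convex} as an afterthought and instead proves \ref{thm-convex-central-half}$\Rightarrow$\ref{thm-convex-unique-halves} by the one-line identity $b=b\mult(h\ovee h)=h\mult(b\ovee b)=h\mult a$; your route through Gudder's vector-space representation for \ref{thm-convex-convex}$\Rightarrow$\ref{thm-convex-unique-halves} is fine but only addresses uniqueness — existence of the half $\tfrac12\cdot a$ should be said out loud, trivial as it is. Second, for \ref{thm-convex-unique-a-convex}$\Rightarrow$\ref{thm-convex-unique-half} the paper simply reads a unital additive $\varphi$ with $\varphi(\tfrac12)=g$ off the representation $\{g\}''\cong[0,1]_{C(X)}\oplus B$ (the Boolean summand dies because a nontrivial Boolean algebra has no half of $1$); your hand-built dyadic construction via iterated Lemma~\ref{lem:uniquedivision} reaches the same $\varphi'$ but pushes the real work into the ``routine continuity argument'' for additivity at non-dyadic $\lambda$, which you assert rather than prove — it does go through with Lemmas~\ref{lem:addition-normal} and~\ref{lem:archemedeanomegadirectedcomplete}, but it is exactly the step the paper's choice of $\{g\}''$ avoids, and since Lemma~\ref{lem:uniquedivision} already rests on the representation theorem you gain no independence by avoiding it here. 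Third, for condition~\ref{thm-convex-restriction-to-center} you verify only that the \emph{constructed} action restricts to $Z(E)$; the statement quantifies over \emph{any} a-convex action, so you need to invoke the uniqueness~\ref{thm-convex-unique-a-convex} you have just established to close that gap. None of these is a real error, but the second and third deserve a line each in a final write-up.
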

\begin{proof}
We begin by proving that
  \ref{thm-convex-convex}--\ref{thm-convex-unique-halves}
  are equivalent by producing the loop of implications
\ref{thm-convex-convex}$\Rightarrow$%
\ref{thm-convex-unique-half}$\Rightarrow$%
\ref{thm-convex-central-half}$\Rightarrow$%
\ref{thm-convex-unique-halves}$\Rightarrow$%
\ref{thm-convex-unique-a-convex}$\Rightarrow$%
\ref{thm-convex-convex}.
Along the way, we establish 
\ref{thm-convex-unique-a-convex}$\Rightarrow$\ref{thm-convex-unique-half}
as it is needed to prove
  \ref{thm-convex-unique-a-convex}$\Rightarrow$\ref{thm-convex-convex}.

  \emph{(\ref{thm-convex-convex}$\Rightarrow$\ref{thm-convex-unique-half})}
Let~$\cdot$ be a convex action on~$E$.
Then~$1$ has a unique half:
 if~$h \ovee h = 1$ for some~$h\in E$,
    then~$h = (\frac{1}{2}\ovee\frac{1}{2})\cdot h
    =\frac{1}{2} \cdot h \ovee \frac{1}{2}\cdot h
        = \frac{1}{2}\cdot (h \ovee h) = \frac{1}{2} \cdot 1$.

\emph{(\ref{thm-convex-unique-half}$\Rightarrow$\ref{thm-convex-central-half})}
Follows immediately from Proposition~\ref{prophalvecentral}.

\emph{(\ref{thm-convex-central-half}$\Rightarrow%
$\ref{thm-convex-unique-halves})}
Let~$a\in E$ be given; we must show that~$a$ has a unique half.
Let~$h$ be a central element such that~$h\ovee h=1$.
Since~$b= b\mult (h\ovee h)
  = (b\mult h) \ovee (b\mult h)
  = (h\mult b) \ovee (h\mult b)
  = h \mult (b\ovee b) = h\mult a$
  for any~$b\in E$ with~$a=b\ovee b$,
  we see that~$a$ has a unique half.

\emph{(\ref{thm-convex-unique-halves}$\Rightarrow%
$\ref{thm-convex-unique-a-convex})}
Assume each~$a\in E$ has a unique half.
For uniqueness,
let~$\cdot_1$ and~$\cdot_2$ be a-convex actions on~$E$.
To prove that~$\cdot_1=\cdot_2$,
we must show that~$(\,\cdot\,)\cdot_1 a
= (\,\cdot\,)\cdot_2 a$ for given~$a\in E$,
and for this it suffices to
show that~$\frac{1}{2} \cdot_1 a = \frac{1}{2}\cdot_2 a$,
by Proposition~\ref{prop:add-into-nsea}.\ref{prop:add-into-nsea-4}.
But since both $\frac{1}{2}\cdot_1 a$ and~$\frac{1}{2}\cdot_2 a$
are halves of~$a$, this follows by assumption.

It remains to be shown that there is at least one a-convex action on~$E$.
To this end, let~$h$ be a half of~$1$, and note that~$\{h\}''$
being a directed-complete effect monoid with a half
is isomorphic to $[0,1]_{C(X)}$ for some extremally-disconnected
compact Hausdorff space~$X$, via some isomorphism
  $\Phi\colon [0,1]_{C(X)}\to \{h\}''$.
The assignment $\lambda\mapsto \Phi(\lambda\mathbbm{1})$,
  where~$\mathbbm{1}$ is the function on~$X$ that is constant one,
  gives a unital, additive map~$\varphi\colon [0,1]\to E$,
  and so~$E$ has a a-convex action given by~$\lambda\cdot a
  = a\mult \varphi(\lambda)$ by
  Proposition~\ref{prop:a-convex-from-phi}.

\emph{(\ref{thm-convex-unique-a-convex}$\Rightarrow%
$\ref{thm-convex-unique-half})}
Suppose that~$E$ has a unique a-convex action~$\cdot$.
Then clearly~$1$ has a half given by~$\frac{1}{2}\cdot 1$.
Concerning uniqueness, let~$h\in E$ with~$h\ovee h=1$ be given.
Considering the directed-complete effect monoid~$\{h\}''$
  we can find a unital, additive map~$\varphi\colon [0,1]\to E$
  with~$\varphi(\frac{1}{2}) = h$,
  which yields an a-convex action~$\cdot_\varphi$ on~$E$
  given by $\lambda\cdot_\varphi a= a\mult\varphi(\lambda)$ by Proposition~\ref{prop:a-convex-from-phi}.
Since there is only one a-convex action on~$E$,
we get~$\cdot=\cdot_\varphi$, and 
thus $\frac{1}{2}\cdot 1 = \frac{1}{2}\cdot_\varphi 1 =\varphi(\frac12)= h$.

\emph{(\ref{thm-convex-unique-a-convex}$\Rightarrow%
$\ref{thm-convex-convex})}
Let~$\cdot$ be the unique a-convex action on~$E$.
Given summable $a,b\in E$ we
must show that
$\lambda\cdot a\ovee \lambda\cdot b = \lambda\cdot(a\ovee b)$
for all~$\lambda\in [0,1]$.
Note that by Proposition~\ref{prop:add-into-nsea}.\ref{prop:add-into-nsea-4}
it suffices to show that
$\frac12 \cdot a \ovee \frac12 \cdot b = \frac12\cdot (a\ovee b)$.
Since both sides of this equation
are clearly halves of~$a\ovee b$,
we are done if halves are unique---which indeed they are,
since we have already established that
\ref{thm-convex-unique-a-convex}$\Rightarrow$%
\ref{thm-convex-unique-half}$\Rightarrow$%
\ref{thm-convex-central-half}$\Rightarrow$%
\ref{thm-convex-unique-halves}.

\emph{Whence~
\ref{thm-convex-convex},
\ref{thm-convex-unique-a-convex},
\ref{thm-convex-central-half},
\ref{thm-convex-unique-half}, and 
\ref{thm-convex-unique-halves}
are equivalent.}
\noindent
We continue by showing that
\ref{thm-convex-restriction-to-center}
is equivalent to
\ref{thm-convex-convex}--\ref{thm-convex-unique-halves}.

\emph{(\ref{thm-convex-restriction-to-center}%
$\Rightarrow$%
\ref{thm-convex-central-half})}
Let~$\cdot$ be an a-convex action on~$E$.
Then, clearly, $h:=\frac{1}{2}\cdot 1$ 
is a central element
that obeys~$h\ovee h=1$.

\emph{(\ref{thm-convex-convex}--\ref{thm-convex-unique-halves}%
$\Rightarrow$\ref{thm-convex-restriction-to-center})}
Let~$\cdot$ be the unique a-convex action on~$E$ 
from~\ref{thm-convex-unique-a-convex}.
We must show that~$\cdot$ can be restricted to~$Z(E)$.
Since $\lambda\cdot'a=a \mult(\lambda \cdot a)$ defines
an a-convex action~$\cdot'$ 
on~$E$, and~$\cdot$ is unique, we have $\cdot=\cdot'$,
and so~$\lambda\cdot a = a\mult(\lambda\cdot 1)$
for all~$a\in E$ and~$\lambda\in [0,1]$.
Thus to prove that~$\cdot$ restricts to~$Z(E)$,
that is, $\lambda\cdot a= a\mult (\lambda \cdot 1 )\in Z(E)$
for all~$a\in Z(E)$ and~$\lambda\in [0,1]$,
it suffices to show that~$\lambda\cdot 1$ is central
for all~$\lambda\in [0,1]$.
By~\ref{thm-convex-central-half} and~\ref{thm-convex-unique-half}
we already know that~$\frac{1}{2}\cdot 1$ is central.
Note that~$\{\frac{1}{2}\cdot 1\}'=E$
 (because~$\frac12\cdot 1$ is central), and
so~$\{\frac{1}{2}\cdot 1\}''=Z(E)$.
It thus suffices to show that~$\lambda\cdot 1 \in \{\frac12\cdot 1\}''$.
From before we know that~$\{\frac12\cdot 1\}''$
is a directed-complete effect monoid and hence has its own set
of scalars that can be parametrised by a unital, additive
map~$\varphi\colon [0,1]\to E$ with~$\varphi(\frac12)=\frac12\cdot 1$.
Since~$\lambda\mapsto \lambda\cdot 1$ is an additive map too
that coincides with~$\varphi$ on~$\frac12$,
we get~$\varphi=(\,\cdot\,)\cdot 1$
by Proposition~\ref{prop:add-into-nsea}.\ref{prop:add-into-nsea-4},
and so~$\lambda\cdot 1\in Z(E)$ for all~$\lambda\in [0,1]$.

Finally,
\emph{(\ref{thm-convex-restriction-to-center}\,\emph{\&}\ref{thm-convex-convex}%
$\Rightarrow$\ref{thm-convex-center-convex})} 
and 
\emph{(\ref{thm-convex-center-convex}%
$\Rightarrow$\ref{thm-convex-convex})} are obvious,
so
\ref{thm-convex-convex}--\ref{thm-convex-center-convex}
are equivalent.
\end{proof}

% On commutative normal SEAs, the convex structure is unique.  In fact,
%     we will prove the following stronger result.

% \begin{lemma}\label{lem:uniqueconvexstructure}
%     Let $f,g: [0,1] \rightarrow E$ be additive maps
%         into a commutative normal SEA with~$f(1)=g(1)$.
%         Then: $f=g$.
% \end{lemma}
% \begin{proof}
% If~$f(1)=g(1)=0$, then the result is clearly true, so assume~$f(1)=g(1) \neq 0$.
%     Since $E$ is a commutative normal SEA, it is a directed-complete
%     effect monoid and hence by Thm.~\ref{thm:first} there is an idempotent $p$ making~$p\mult
%     E$ convex and~$p^\perp \mult E$ Boolean.
%     As $f(\frac12) \ovee f(\frac12) = f(1)$
%         and~$f(\frac12) \neq 0$,
%         we must have~$f(\frac12) \leq p$
% 	and so for all $\lambda\in[0,1]$, $f(\lambda)\leq f(1)\leq p$.
%     Thus without loss of generality, we may assume that~$E=[0,1]_{C(X)}$
%         for some compact Hausdorff space~$X$.
% 	So~$f(\frac12) = \frac12 f(1) = \frac12g(1) = g(\frac12)$
%         and similarly~$f(q) = g(q)$ for any fraction~$q$.
%     For any non-rational~$\lambda \in [0,1]$
%         we can find a converging
%             sequence~$q_1, q_2, \ldots$ such that~$q_n \leq \lambda \leq q_n+2^{-n}$.
%         Then~$\| f(\lambda) - f(q_n) \| \leq 2^{-n} f(1)$ for any~$n$
%             and so~$f(q_n) \to f(\lambda)$.
%             Similarly~$f(q_n) = g(q_n) \to g(\lambda)$.
%             Thus~$f = g$.
% \end{proof}

\begin{theorem}\label{thm:a-convexthm}
Let~$E$ be a normal SEA.
Let $S\subseteq E$ be the set of idempotents $p$ for which~$p \mult E$ 
  can be equipped with an a-convex action.
  Then $S$ has a maximal element $p_0$
  and this maximal element is central
        and~$p_0^\perp \mult E$ is Boolean.
\end{theorem}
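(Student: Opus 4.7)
The plan is to define $U := \{q \in E : q \text{ is an idempotent and } q \mult E \text{ is Boolean}\}$, show that $U$ has a maximum $q_*$, and take $p_0 := q_*^\perp$. Centrality of $p_0$ will then follow from Corollary~\ref{cor:booleans-central} applied to $q_*$, and $p_0^\perp \mult E = q_* \mult E$ is Boolean by definition of $U$. What remains is (i) $U$ has a maximum; (ii) $p_0 \in S$; (iii) $p_0$ is the maximum (in particular, maximal) element of $S$. The main obstacle will be (ii), constructing a half of $q_*^\perp$ in $E$.

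For (i), every $q \in U$ is central by Corollary~\ref{cor:booleans-central}, and $U$ is clearly downward closed. For orthogonal $q_1, q_2 \in U$ and $a \leq q_1 \ovee q_2$, Prop.~\ref{prop:SEAbasicproperties}.\ref{prop:SEAbasicproperties-5} together with S1 gives $a = a \mult (q_1 \ovee q_2) = a \mult q_1 \ovee a \mult q_2$, each summand being an idempotent below some $q_i \in U$, so $a$ is a sum of orthogonal idempotents and hence itself idempotent by Prop.~\ref{prop:SEAbasicproperties}.\ref{prop:SEAbasicproperties-7}; this shows $q_1 \ovee q_2 \in U$. The join of any two commuting $q_1, q_2 \in U$ then equals $q_1 \ovee (q_2 \ominus q_1 \mult q_2)$, an orthogonal sum of $U$-elements, so $U$ is directed. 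A parallel decomposition $a = \bigvee_{t \in T} a \mult t$ via S6, for directed $T \subseteq U$ and $a \leq \bigvee T$, combined with Prop.~\ref{prop:SEAbasicproperties}.\ref{prop:SEAbasicproperties-4} to see $a \mult (a \mult t) = a \mult t$, shows $U$ is closed under directed suprema; hence $q_* := \bigvee U \in U$.

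For (ii), I first observe that $F := q_*^\perp \mult E$ has no nonzero Boolean idempotent: if $r \leq q_*^\perp$ had $r \mult F$ Boolean, then $r \mult E = r \mult F$ is Boolean so $r \in U$, giving $r \leq q_*$, and $r = r \mult (q_* \ovee q_*^\perp) = r \ovee r$ forces $r = 0$. I then apply Zorn to the collection of pairwise-orthogonal families $\{(h_i, p_i)\}_{i \in I}$ in $F$ with $h_i \ovee h_i = p_i \neq 0$ and $p_i \perp p_j$ for $i \neq j$; chains have upper bounds via union. For a maximal family, $p := \bigovee_i p_i$ and $h := \bigovee_i h_i$ satisfy $h \ovee h = p$, via Lemma~\ref{lem:addition-normal} applied to finite subsums. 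The hard step is showing $p = q_*^\perp$. If not, $r := q_*^\perp \ominus p$ is a nonzero idempotent in $F$ and therefore not Boolean, so there is some $x \leq r$ in $F$ that is not idempotent. Corollary~\ref{seaspectral} applied to $\{x\}''$ yields a decomposition $[0,1]_{C(X)} \oplus B$ with continuous idempotent $f \in \{x\}''$, and since $x$ is not idempotent $f \mult x$ is nonzero. As $x \leq r$ gives $r \commu x$, so $r \in \{x\}'$ and $f \commu r$, Prop.~\ref{prop:SEAbasicproperties}.\ref{prop:SEAbasicproperties-3} gives $f \mult x \leq f \mult r$, whence $f \mult r \neq 0$. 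Taking a half $k$ of $f$ inside the convex algebra $f \mult \{x\}''$, commutativity of $k$ and $r$ with S1 yields $(k \mult r) \ovee (k \mult r) = f \mult r$, and $f \mult r \perp p$ (as $f \mult r \leq r \perp p$); adjoining $(k \mult r, f \mult r)$ extends the family, contradicting maximality.

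For (iii), suppose $p \in S$ with half $h \ovee h = p$. Since $q_*$ is central, $p \mult q_*$ is an idempotent with half $h \mult q_*$ (using S1 and $h \commu q_*$), so $p \mult q_* \in S$. Also $p \mult q_* \leq q_* \in U$, so $p \mult q_* \in U$ by downward closure. But $S \cap U = \{0\}$: any half $h$ of $r \in U$ satisfies $h \leq r$ and is therefore idempotent, and the relation $h \perp h$ inside the Boolean algebra $r \mult E$ forces $h = 0$, so $r = 0$. Therefore $p \mult q_* = 0$, i.e.\ $p \leq q_*^\perp = p_0$, proving $p_0$ is the maximum of $S$.
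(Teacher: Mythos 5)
Your strategy is sound and genuinely different from the paper's. The paper applies Zorn's lemma once, to the set of self-summable elements: for a maximal such element $a$ it shows that $p:=a\ovee a$ is an idempotent (via a short nilpotency argument and Lemma~\ref{lem:nonilpotents}), that $p^\perp$ is Boolean (for $s\le p^\perp$ the element $a\ovee s\mult s^\perp$ would be self-summable, contradicting maximality unless $s\mult s^\perp=0$), and it reads off the a-convex action on $p\mult E$ from the convex directed-complete effect monoid $\{a\}''$. You instead build the Boolean summand first, as the maximum $q_*$ of the set $U$ of Boolean idempotents (using directedness of $U$ and closure under directed suprema rather than Zorn), and then manufacture a half of $q_*^\perp$ by an exhaustion argument, using the spectral theorem (Corollary~\ref{seaspectral}) to peel a non-trivially halvable idempotent off any leftover piece. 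Your route is longer and leans on heavier machinery, but it isolates the Boolean part more transparently, and your observation that $S\cap U=\{0\}$ gives a clean proof that $p_0$ is the maximum of $S$.

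Two points need repair. First, in the Zorn argument of step (ii) you must require each $p_i$ to be an idempotent (as all the $p_i$ you actually construct are: $f\mult r$ is a product of commuting idempotents). As written, with only pairwise summability $p_i\perp p_j$, the joint sum $\bigovee_i p_i$ need not exist (three copies of $0.4$ in $[0,1]$ are pairwise but not jointly summable), and $r=q_*^\perp\ominus p$ need not be an idempotent, which your appeal to Proposition~\ref{prop:SEAbasicproperties}.\ref{prop:SEAbasicproperties-5} (deducing $r\commu x$ from $x\le r$) requires. With idempotent $p_i$, pairwise summability coincides with orthogonality, finite sums are idempotents by Proposition~\ref{prop:SEAbasicproperties}.\ref{prop:SEAbasicproperties-7}, and the directed supremum $p$ is again an idempotent, so the argument goes through. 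Second, you stop after constructing the half $h$ of $q_*^\perp$; to conclude $p_0\in S$ you still need to convert this half into an a-convex action on $q_*^\perp\mult E$. This is the step used in the proof of Theorem~\ref{thm-a-convex-thm}: $\{h\}''$ is a directed-complete effect monoid whose unit has a half, so its Boolean summand in Theorem~\ref{thm:first} vanishes, it is isomorphic to $[0,1]_{C(X)}$, and the resulting unital additive map $[0,1]\to q_*^\perp\mult E$ yields the action via Proposition~\ref{prop:a-convex-from-phi}. Neither issue is fatal, but both must be addressed.
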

\begin{proof}
We begin by showing that~$E$ has a maximal self-summable element
(cf.~\cite[Lemma~56]{first}), using Zorn's lemma.
For this we must show 
that every chain~$D$ in~$A :=\{ a \in E;\  a \perp a \}$
has an upper bound.
If~$D$ is empty, then~$0 \in A$ is clearly an upper bound, so
we may assume~$D$ is not empty.
Being a non-empty chain, $D$ is directed,
and we may define~$u := \bigvee D$.
We will show that~$u$ is self-summable, 
making it the upper bound of~$D$ we were looking for.

To see that~$u$ is self-summable,
first note that any $d,d'\in D$ are summable,
that is, $d'\leq d^\perp$.
Indeed, either  $d'\leq d$ and thus~$d'\leq d\leq d^\perp$ --- using
here that~$d$ is self-summable --- or 
$d\leq d'$ and thus~$d'\leq (d')^\perp \leq d^\perp$.
Thus as~$d'\leq d^\perp$ for all~$d,d'\in D$,
  we get $u=\bigvee D \leq \bigwedge_{d\in D} d^\perp
  = (\bigvee D)^\perp= u^\perp$,
and so~$u$ is self-summable.

Write~$a$ for the maximal self-summable element,
and define~$p :=  a \ovee a$.
To see that~$p$ is an idempotent,
it suffices to show that~$p^\perp \mult p=0$.
To this end we will prove that $(p^\perp)^2\mult p = 0$,
because then $(p^\perp\mult p)^2 = ((p^\perp)^2\mult p)\mult p = 0$,
which is enough, by Lemma~\ref{lem:nonilpotents}.
To prove $(p^\perp)^2\mult p =0$ we show that~$(p^\perp)^2\mult p \ovee a$ is self-summable,
since then~$(p^\perp)^2\mult p \ovee a \leq a$
(by maximality of~$a$,)
and hence~$(p^\perp)^2 \mult p = 0$.
Now, as $p^\perp\mult p$ is self-summable we have~$2(p^\perp \mult p)\leq 1$,
  and so by multiplying on the left with $p^\perp$ we get $2((p^\perp)^2\mult p) \leq p^\perp$
  and thus $2((p^\perp)^2\mult p)$ is summable with~$p := 2a$,
  which implies~$(p^\perp)^2\mult p \ovee a$ is self-summable.
  Hence~$p$ is an idempotent.

Next we will show that~$p^\perp$ is Boolean,
        i.e.~that every~$s\leq p^\perp$ is idempotent.
Given such~$s\leq p^\perp$, note
  that~$s \mult s^\perp$ is summable with itself (cf.~Lemma~\ref{lem:selfsummable}),
        and so~$s \mult s^\perp \ovee s \mult s^\perp \leq p^\perp
            = (a \ovee a)^\perp$
            by Lemma~\ref{lem:summableunderidempotent}.
    Thus~$a \ovee a \perp s \mult s^\perp \ovee s \mult s^\perp$
    and so~$a \ovee s\mult s^\perp$ is summable with itself.
    By maximality of~$a$ we must have~$s \mult s^\perp = 0$,
    which shows that~$s$ is indeed an idempotent.
    Thus~$p^\perp$ is Boolean
        and by Corollary~\ref{cor:booleans-central}
        both~$p$ and~$p^\perp$ are central.

To find an a-convex action on  $p\mult E$, we consider~$\{a\}''\subseteq p\mult E$,
which is clearly a convex directed-complete effect monoid,
whose scalars are parametrised by a
  unital additive map~$[0,1] \to \{a\}'' \subseteq p \mult E$,
which yields an a-convex action on~$p \mult E$ 
by Proposition~\ref{prop:a-convex-from-phi}.

Finally, to prove that~$p$ is indeed maximal among idempotents with this property,
let~$q \in E$ be an idempotent for which~$q \mult E$ carries an a-convex action.
As $p^\perp$ is central we have~$p^\perp \mult q = q\mult p^\perp \leq q$, 
and so we may interpret~$\frac{1}{2}\cdot (p^\perp \mult q)$
using the a-convex action on~$q\mult E$.
Since $p^\perp \mult q \leq p^\perp$, 
we also have $\frac{1}{2} (p^\perp \mult q) \leq p^\perp$,
and thus~$\frac12 (p^\perp\mult q)$, 
being below the Boolean idempotent~$p^\perp$,
must be an idempotent. 
Since the only self-summable idempotent is zero,
$\frac{1}{2} (p^\perp \mult q)=0$.
    Thus~$p^\perp \mult q = 0$,
    and~$q \leq p$ as desired.
\end{proof}

\begin{theorem}\label{thm:SEAsplitupinconvexandsharp}
    Let $E$ be a normal SEA. Then $E\cong E_1\oplus E_2$, where $E_1$ is a-convex and $E_2$ is a complete Boolean algebra.
\end{theorem}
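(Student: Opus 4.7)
The plan is to assemble this result directly from Theorem \ref{thm:a-convexthm}, together with the direct-sum decomposition along central idempotents (Proposition \ref{prop:central-splits}) and the identification of Boolean normal SEAs with complete Boolean algebras (Proposition \ref{prop:SEAsharpisBoolean}).

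First I would apply Theorem \ref{thm:a-convexthm} to produce a central idempotent $p_0 \in E$ such that $p_0 \mult E$ carries an a-convex action and $p_0^\perp \mult E$ is Boolean. Since $p_0$ is central, Proposition \ref{prop:central-splits} then yields an SEA isomorphism $E \cong p_0 \mult E \oplus p_0^\perp \mult E$. Setting $E_1 := p_0 \mult E$ and $E_2 := p_0^\perp \mult E$, the factor $E_1$ is a-convex by construction.

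To finish, I would check that $E_2$ is a complete Boolean algebra. Since left corners of normal SEAs are themselves normal SEAs (by the left-corner proposition just before Proposition \ref{prop:central-splits}), $E_2$ inherits directed-completeness and the SEA structure from $E$; and by Theorem \ref{thm:a-convexthm} every element of $E_2$ is idempotent. Hence $E_2$ is a normal Boolean SEA, and Proposition \ref{prop:SEAsharpisBoolean} identifies it with a complete Boolean algebra.

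There is essentially no obstacle at this step: the theorem is a clean packaging of earlier work. All the difficulty sat in Theorem \ref{thm:a-convexthm} itself --- in particular, the Zorn's-lemma construction of a maximal self-summable element $a$, the verification that $p_0 := a \ovee a$ yields a central idempotent whose complement is Boolean, and the extraction of an a-convex action on $p_0 \mult E$ from the commutative subalgebra $\{a\}''$ via the representation theorem for directed-complete effect monoids.
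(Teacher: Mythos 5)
Your proposal is correct and follows exactly the paper's own proof: apply Theorem~\ref{thm:a-convexthm} to obtain the central idempotent, split via Proposition~\ref{prop:central-splits}, and identify the Boolean corner as a complete Boolean algebra via Proposition~\ref{prop:SEAsharpisBoolean}. Your extra observation that the corner inherits normality (needed for completeness of the Boolean algebra) is a point the paper leaves implicit, but otherwise the arguments coincide.
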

\begin{proof}
    By Theorem~\ref{thm:a-convexthm} there is a central idempotent~$p$
    such that~$p\mult
    E$ is a-convex, and $p^\perp\mult E$ is Boolean.
    By Proposition~\ref{prop:SEAsharpisBoolean}, $p^\perp\mult E$
    is then a complete Boolean algebra. Since $p$ is central
        we have by Proposition~\ref{prop:central-splits} that $E\cong p\mult E\oplus
    p^\perp \mult E$.
\end{proof}

\begin{corollary}\label{cor:finite-Boolean}
    Let $E$ be a SEA with a finite number of elements. Then $E$ is a Boolean algebra.
\end{corollary}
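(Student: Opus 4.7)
My plan is to reduce the claim to Theorem~\ref{thm:SEAsplitupinconvexandsharp} and then rule out a nontrivial a-convex factor using finiteness. The first step is to observe that any finite SEA is automatically normal, so that the theorem applies. Indeed, a directed subset $S$ of a finite $E$ necessarily contains a maximum, which is then its supremum; axiom~S6 holds for free, since by point~\ref{prop:SEAbasicproperties-3} of Proposition~\ref{prop:SEAbasicproperties} the product is monotone in the second argument (so $\bigvee_{s\in S} a\mult s = a\mult\max S$), and $a\commu \max S$ whenever $a$ commutes with every element of $S$.

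Applying Theorem~\ref{thm:SEAsplitupinconvexandsharp} then yields a decomposition $E\cong E_1 \oplus E_2$ with $E_1$ a-convex and $E_2$ a complete Boolean algebra. The task reduces to showing $E_1 = \{0\}$, for then $E \cong E_2$ is a Boolean algebra.

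To prove $E_1 = \{0\}$, I would fix an a-convex action $\cdot$ on $E_1$ and set $a_n := (1/2^n)\cdot 1_{E_1}$ for $n\geq 0$. The a-convex axioms give $a_{n+1}\ovee a_{n+1} = a_n$, so $(a_n)$ is a decreasing sequence inside the finite set $E_1$ and must therefore stabilise: $a_n = a_{n+1}$ for some $n$. Substituting into $a_n = a_{n+1}\ovee a_{n+1}$ yields $a_n \ovee 0 = a_n = a_n\ovee a_n$, so cancellativity of the effect-algebra sum forces $a_n = 0$. A downward induction using $a_{k-1} = a_k\ovee a_k$ then propagates this to $1_{E_1} = a_0 = 0$, so $E_1$ is trivial, as required.

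I do not expect a real obstacle: the heavy lifting is all in Theorem~\ref{thm:SEAsplitupinconvexandsharp}, and the only mild subtlety is noticing that finiteness automatically entails normality so that the theorem may be invoked. Once that is in place, the bisection argument above immediately kills any nontrivial a-convex factor, leaving only the Boolean summand.
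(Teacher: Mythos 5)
Your proof is correct and follows essentially the same route as the paper: observe that finiteness forces normality (every directed set contains its maximum), invoke Theorem~\ref{thm:SEAsplitupinconvexandsharp}, and then use finiteness again to kill the a-convex summand. The only difference is cosmetic: where the paper dismisses a nontrivial $E_1$ by noting it would contain a continuum of elements, you give an explicit halving-and-stabilisation argument, which is a slightly more self-contained way of making the same point.
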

\begin{proof}
Any directed subset $S\subseteq E$ is also finite, and hence
    contains its supremum. Thus $E$ is normal and
    hence splits up as $E\cong E_1\oplus E_2$ where $E_1$ is a-convex
    and $E_2$ is a Boolean algebra. If $E_1 \neq \{0\}$ then it
    would necessarily contains a continuum of elements contradicting
    the finiteness of~$E$. Thus~$E \cong E_2$ is a Boolean algebra.
\end{proof}

\section{Pure a-convexity}\label{sec:purea-convex}
In this section we will show that
  any a-convex normal SEA
  factors as a convex normal SEA and a \emph{purely a-convex} normal SEA,
  which combined with Theorem~\ref{thm:SEAsplitupinconvexandsharp} gives our main characterisation theorem for normal SEAs.
	Intuitively, a purely a-convex SEA is an a-convex SEA that does not contain any convex `parts'. The factorisation result of this section, Proposition~\ref{prop:a-convexsplitinconvexandsharp}, makes this more precise, stating that an a-convex SEA can be split up into its convex part and its purely a-convex part.
\begin{definition}
We say that a a-convex SEA~$E$
is \Define{purely a-convex} when $Z(E)$ is Boolean,
and
an \Define{a-convex factor} if $Z(E) = \{0,1\}$.
\end{definition}
Recall that a normal SEA~$E$ is convex iff $Z(E)$ is convex
(see Theorem~\ref{thm-a-convex-thm}).
\begin{proposition}\label{prop:a-convexsplitinconvexandsharp}
In a normal a-convex SEA~$E$
there is a central idempotent~$p$ such that
$p\mult E$ is convex
and $p^\perp \mult E$ is purely a-convex.
\end{proposition}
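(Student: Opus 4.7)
The strategy is to find the required central idempotent inside the center~$Z(E)$ by applying the representation theorem for directed-complete effect monoids. First observe that $Z(E)$, being a commutative sub-SEA of~$E$ closed under directed suprema by axiom~S6, is a directed-complete commutative normal SEA, and hence a directed-complete effect monoid via Example~\ref{prop:commutativemonoidissequential}. Theorem~\ref{thm:first} applied to~$Z(E)$ then yields an idempotent $p \in Z(E)$ such that $p \mult Z(E) \cong [0,1]_{C(X)}$ is convex for some extremally-disconnected compact Hausdorff~$X$, while $p^\perp \mult Z(E)$ is Boolean. Since~$p$ is central, Proposition~\ref{prop:central-splits} gives $E \cong p \mult E \oplus p^\perp \mult E$, so it remains to analyse the two summands.

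For the convex factor, the isomorphism $p \mult Z(E) \cong [0,1]_{C(X)}$ supplies an element~$h$ (the image of the constant function~$1/2$) satisfying $h \ovee h = p$. Since $h \in Z(E)$ it is certainly central in the normal SEA~$p \mult E$, whose unit is~$p$. The implication from a central half to the existence of a convex action in Theorem~\ref{thm-a-convex-thm} then immediately gives convexity of~$p \mult E$.

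For the purely a-convex factor I would proceed in two steps. First, to equip $p^\perp \mult E$ with an a-convex action, take any a-convex action~$\cdot$ on~$E$ and define $\psi \colon [0,1] \to p^\perp \mult E$ by $\psi(\lambda) := p^\perp \mult (\lambda \cdot 1)$; this is plainly unital and additive, so Proposition~\ref{prop:a-convex-from-phi} yields the required a-convex action. Second---and this is the main obstacle---one must show that $Z(p^\perp \mult E)$ is Boolean. I would do this by verifying $Z(p^\perp \mult E) \subseteq Z(E)$: given $a \in Z(p^\perp \mult E)$ and arbitrary $b \in E$, decompose $b = p \mult b \ovee p^\perp \mult b$; the element~$a$ commutes with $p^\perp \mult b$ by assumption, while from $a \leq p^\perp$ we get $a \mult p = 0$, so S4 (via $a \commu p$) gives $a \mult (p \mult b) = (a \mult p) \mult b = 0$, and S3 then yields $(p \mult b) \mult a = 0$, whence $a \commu p \mult b$; axiom~S5 finally produces $a \commu b$. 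Thus $a \in Z(E) \cap (p^\perp \mult E) = p^\perp \mult Z(E)$, which is Boolean, so $a$ is idempotent and $Z(p^\perp \mult E)$ is Boolean as required.

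The delicacy lies entirely in the chain of S3, S4 and S5 applications needed to lift centrality in the corner~$p^\perp \mult E$ up to centrality in the whole of~$E$; centrality of~$p^\perp$ is precisely what makes this transfer possible, and without it one could not reduce $Z(p^\perp \mult E)$ to a subset of the already-identified Boolean piece $p^\perp \mult Z(E)$. Once that identification is in place, everything else falls out of the representation theorem applied to the center.
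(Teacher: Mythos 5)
Your proof is correct and follows essentially the same route as the paper: apply the representation theorem for directed-complete effect monoids to $Z(E)$ to obtain the central idempotent~$p$, deduce convexity of $p\mult E$ from the central half of~$p$ via Theorem~\ref{thm-a-convex-thm}, and identify $Z(p^\perp\mult E)$ with the Boolean piece $p^\perp\mult Z(E)$. The only difference is that you spell out the inclusion $Z(p^\perp\mult E)\subseteq Z(E)$ (via S3--S5 and the centrality of $p^\perp$), which the paper asserts without proof; that is a welcome elaboration rather than a new approach.
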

\begin{proof}
Obviously $Z(E)$ is a commutative normal SEA, and thus also a directed-complete effect monoid.
There is then an idempotent $p\in Z(E)$ such that $p\mult Z(E)$ is convex, 
while $p^\perp\mult Z(E)$ is a Boolean algebra. 
Since $Z(p^\perp \mult E) = p^\perp \mult Z(E)$ is a Boolean algebra,
  we immediately see that~$p^\perp \mult E$ is purely a-convex.
Further, since~$Z(p\mult E)=p\mult Z(E)$ is convex,
  $p$ has a half in~$Z(p\mult E)$,
 which is a central half in~$p\mult E$,
  and so~$p\mult E$ is convex
  by Theorem~\ref{thm-a-convex-thm}.
\end{proof}

\begin{corollary}
\label{cor-aconvex}
For a normal a-convex SEA~$E$ the following are equivalent.
\begin{enumerate}
\item
  \label{cor-aconvex-boolean}
    $E$ is purely a-convex,
    that is,
  $Z(E)$ is Boolean.
\item
  \label{cor-aconvex-corner}
The only central idempotent~$p$ in~$E$
for which~$p\mult E$ is convex is~$p=0$.
\end{enumerate}
\end{corollary}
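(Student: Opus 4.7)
The statement is a direct corollary of the just-proved Proposition~\ref{prop:a-convexsplitinconvexandsharp}, together with a small observation about Boolean halves, so I expect the proof to be short. My plan is to treat the two implications separately.

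For the implication $(\ref{cor-aconvex-boolean}) \Rightarrow (\ref{cor-aconvex-corner})$, I would let $p$ be a central idempotent with $p\mult E$ convex and show $p = 0$. The key identity is $Z(p\mult E) = p\mult Z(E)$: the inclusion $\subseteq$ uses that every $c\in E$ decomposes via centrality of $p$ as $p\mult c \ovee p^\perp \mult c$, and any $b \leq p$ commutes with $p^\perp$ hence, by S5, with $p^\perp \mult c$ as well, so a central element of $p\mult E$ commutes with all of $E$; the converse inclusion is immediate. Since corners of Boolean algebras are Boolean, $Z(p\mult E)$ is then a Boolean subalgebra. By Theorem~\ref{thm-a-convex-thm}, convexity of $p\mult E$ produces a half $h \in Z(p\mult E)$ of the unit $p$ of $p\mult E$, i.e.\ $h \ovee h = p$. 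But $h$ is then an idempotent summable with itself, and Lemma~\ref{lem:summableunderidempotent} applied to $h\leq h$ yields $h\ovee h \leq h$, forcing $h = 0$ and hence $p = 0$.

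For the implication $(\ref{cor-aconvex-corner}) \Rightarrow (\ref{cor-aconvex-boolean})$, I would simply invoke Proposition~\ref{prop:a-convexsplitinconvexandsharp}: it produces a central idempotent $p$ with $p\mult E$ convex and $p^\perp \mult E$ purely a-convex. The hypothesis forces $p = 0$, so $E \cong p^\perp \mult E$ is purely a-convex, i.e.\ $Z(E)$ is Boolean.

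The only genuinely delicate point is verifying $Z(p\mult E) = p\mult Z(E)$ in the first implication; everything else is a direct invocation of previous results. No real obstacle is expected.
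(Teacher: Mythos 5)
Your proposal follows essentially the same route as the paper: the forward direction identifies the central half of $p$ in $p\mult E$ as a self-summable idempotent (hence zero), and the backward direction is the same one-line appeal to Proposition~\ref{prop:a-convexsplitinconvexandsharp}. The paper is terser, simply asserting $Z(p\mult E)\subseteq Z(E)$, whereas you verify it; your verification is essentially right but the cited justification is off at one point. To conclude that $b\in Z(p\mult E)$ commutes with $p^\perp\mult c$, axiom S5 would require $b\commu c$, which is precisely what you are trying to establish; the correct argument is that $b\leq p$ and $p^\perp\mult c\leq p^\perp$ give $b\mult(p^\perp\mult c)\leq b\mult p^\perp=0$, whence the two elements are orthogonal and commute by S3. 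With that one-line repair the proof is complete and matches the paper's.
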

\begin{proof}
\emph{(\ref{cor-aconvex-boolean}$\Rightarrow$%
\ref{cor-aconvex-corner})}
Let~$p$ be a central idempotent for which~$p\mult E$ is convex.
Then the half~$\frac{1}{2}\cdot p$ of~$p$
in~$p\mult E$
  being an element of~$Z(p\mult E)\subseteq Z(E)$
  is idempotent (since~$Z(E)$ is Boolean),
  and so~$p=0$.

\emph{(\ref{cor-aconvex-corner}$\Rightarrow$%
\ref{cor-aconvex-boolean})}
Let~$p$ be the central idempotent from 
Proposition~\ref{prop:a-convexsplitinconvexandsharp}
with~$p\mult E$ convex and~$p^\perp\mult E$ Boolean.
Then~$p=0$ by assumption,
and so~$E=p^\perp\mult E$ is purely a-convex.
\end{proof}

Combining previous results, we get the following representation theorem.
\begin{theorem}\label{thm:maintheorem}
Let~$E$ be a normal SEA.
There is a complete Boolean algebra~$B$,
    convex normal SEA~$E_c$
    and purely a-convex SEA~$E_{ac}$
    with~$E \cong B \oplus E_c \oplus E_{ac}$.
\end{theorem}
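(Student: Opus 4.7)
The plan is to assemble the theorem by stringing together the two major decomposition results already established in this section. No new technical work is required; this is the clean summary of what has been proved.

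First I would invoke Theorem~\ref{thm:SEAsplitupinconvexandsharp} applied to the given normal SEA~$E$. This yields an isomorphism $E \cong E_1 \oplus B$ where $E_1$ is an a-convex normal SEA and $B$ is a complete Boolean algebra. To use this as the Boolean summand in the statement of the theorem, one only needs to observe that the decomposition is obtained via a central idempotent and that each summand inherits the normal SEA structure (as established in the propositions on corners and central splittings in Section~\ref{sec:prelims}), so in particular $E_1$ is still normal and a-convex.

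Next I would apply Proposition~\ref{prop:a-convexsplitinconvexandsharp} to the a-convex normal SEA~$E_1$. That proposition produces a central idempotent $p \in E_1$ with $p \mult E_1$ convex and $p^\perp \mult E_1$ purely a-convex. By Proposition~\ref{prop:central-splits}, this gives $E_1 \cong E_c \oplus E_{ac}$ where $E_c := p\mult E_1$ is a convex normal SEA and $E_{ac} := p^\perp \mult E_1$ is a purely a-convex normal SEA.

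Finally, I would combine the two decompositions to conclude $E \cong B \oplus E_c \oplus E_{ac}$, using that the direct sum of normal SEAs is associative up to isomorphism (which follows straight from the component-wise definition of operations on a product effect algebra from Example~\ref{eaprod}). There is no real obstacle here; the entire content of the theorem is packaged in the two preceding structural results, and the only thing to watch is that the class labels (Boolean, convex, purely a-convex) are preserved by the two successive splittings, which they are by construction.
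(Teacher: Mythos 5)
Your proposal is correct and matches the paper exactly: the paper gives no separate argument for this theorem beyond ``combining previous results,'' meaning precisely the chain of Theorem~\ref{thm:SEAsplitupinconvexandsharp} followed by Proposition~\ref{prop:a-convexsplitinconvexandsharp} applied to the a-convex summand, as you describe. Your added remarks about normality being inherited by corners and the central-idempotent splitting are the right (implicit) justifications.
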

Boolean algebras are obviously very well-studied and understood,
and convex normal SEAs also seem to be quite well-behaved (they are for instance order-isomorphic to the unit interval
    of some directed-complete strongly Archimedean homogeneous\footnote{
    Homogeneous in this context means that the group of order isomorphisms of the space acts transitively on the interior of the positive cone. Homogeneous ordered spaces have several nice properties~\cite{vinberg1967theory} and are closely related to Euclidean Jordan algebras by the Koecher-Vinberg theorem~\cite{koecher1957positivitatsbereiche}.
    }
    ordered vector space~\cite{wetering2018characterisation}).

The purely a-convex normal SEAs do not seem to have an easily visible structure or classification however.
For instance, we can take an a-convex normal SEA and 
take the horizontal sum with itself to get a new a-convex normal SEA. 
We can then take any amount of such a-convex horizontal sums 
and take their direct sum. Any collection of such direct sums can then
again be combined into a horizontal sum. In this way we can create 
arbitrarily deeply nested sequential effect algebras.

The following proposition shows that much of the structure of a-convex normal SEAs reduces to that of convex normal SEAs:

\begin{proposition}
	Let $E$ be an a-convex normal SEA. Then $E$ can be written as the (possibly non-disjoint) union of convex normal sub-SEAs.
\end{proposition}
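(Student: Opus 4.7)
Given $a \in E$, the plan is to exhibit a convex normal sub-SEA $F_a \subseteq E$ containing $a$; then $E = \bigcup_{a\in E} F_a$ follows at once. Fix an a-convex action $\cdot$ on $E$. First I would apply the spectral theorem (Corollary~\ref{seaspectral}) to $\{a\}''$, obtaining $\{a\}'' \cong [0,1]_{C(X)} \oplus B$ together with a corresponding central idempotent $p \in \{a\}''$, and decompose $a = a_c \ovee a_b$ where $a_c := p\mult a$ lies in the convex corner $p\mult\{a\}''$ and $a_b := p^\perp\mult a$ lies in the Boolean corner $p^\perp\mult\{a\}''$. Since $a_b$ sits in the Boolean corner it is idempotent, as is its complement $q := p^\perp \ominus a_b$ within that corner.

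The idea is to produce a half of~$1$ in~$E$ commuting with~$a$ by assembling an internal half of the convex corner with a carefully built half of~$p^\perp$. Inside the convex $p\mult\{a\}''$ let $h_C$ be the half of~$p$ (which exists since that corner is convex), so $h_C \in \{a\}''$ and hence $h_C \commu a$. Using the a-convex action of~$E$, set
\[h_B \;:=\; \tfrac{1}{2}\cdot a_b \,\ovee\, \tfrac{1}{2}\cdot q, \qquad h \;:=\; h_C \ovee h_B.\]
Here $\tfrac12\cdot a_b\leq a_b$ and $\tfrac12\cdot q\leq q$; since $a_b\perp q$ with $a_b\ovee q=p^\perp$, the sum defining $h_B$ is legal and $h_B\ovee h_B=a_b\ovee q=p^\perp$. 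As $h_C\leq p$ and $h_B\leq p^\perp$, we also get $h\ovee h=p\ovee p^\perp=1$, making $h$ a half of~$1$.

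The main obstacle is checking $h_B\commu a$. Commutation of $h_B$ with $a_c$ is immediate from $h_B\leq p^\perp$, $a_c\leq p$ and $p\mult p^\perp=0$. For $h_B\commu a_b$ I would first invoke Proposition~\ref{prop:SEAbasicproperties}.\ref{prop:SEAbasicproperties-5} (an element below an idempotent commutes with it) to get $\tfrac12\cdot a_b\commu a_b$ and $\tfrac12\cdot q\commu q$. Combining $\tfrac12\cdot a_b\commu a_b$ with the orthogonality $a_b\mult q=0$ via S4 yields
\[\tfrac12\cdot a_b\mult q \;=\; (\tfrac12\cdot a_b\mult a_b)\mult q \;=\; \tfrac12\cdot a_b\mult(a_b\mult q) \;=\; 0,\]
so $\tfrac12\cdot a_b\commu q$; the symmetric computation gives $\tfrac12\cdot q\commu a_b$, and monotonicity of $\mult$ in the second argument (Proposition~\ref{prop:SEAbasicproperties}.\ref{prop:SEAbasicproperties-3}) then forces $\tfrac12\cdot a_b\commu\tfrac12\cdot q$. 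Two applications of S5 now deliver $h_B\commu a_b$ and hence $h_B\commu a_c\ovee a_b = a$. Together with $h_C\commu a$ and $h_C\perp h_B$, a final S5 gives $h\commu a$.

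Finally set $F_a:=\{a,h\}''$. Since $\{a,h\}$ is a commuting pair, Proposition~\ref{prop:doublecommutant} makes $F_a$ a commutative normal sub-SEA of~$E$; by Example~\ref{prop:commutativemonoidissequential} it is a directed-complete effect monoid, and by Theorem~\ref{thm:first} it splits as a convex summand plus a Boolean summand. Since $h\in F_a$ with $h\ovee h=1$ and a non-trivial Boolean algebra admits no half of~$1$, the Boolean summand must vanish, so $F_a$ is convex. Because $a\in F_a$, we conclude $E=\bigcup_{a\in E}F_a$, as required.
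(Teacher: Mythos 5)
Your proof is correct and follows essentially the same strategy as the paper's: construct a half of $1$ that commutes with $a$ by combining the spectral decomposition of a bicommutant with the a-convex action, then take the bicommutant of $a$ together with that half, which must be convex because a non-trivial Boolean summand cannot contain a half of its unit. The only difference is in the bookkeeping: the paper works with $\{\tfrac{1}{2}\cdot a\}''$, so that the Boolean component of $\tfrac{1}{2}\cdot a$ vanishes automatically and only the complementary idempotent needs to be halved, whereas you work with $\{a\}''$ and halve the Boolean component $a_b$ and its relative complement $q$ separately.
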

\begin{proof}
	Let $a\in E$ be arbitrary. We will construct a convex normal sub-SEA that contains $a$.
	Consider the sub-algebra $\{\frac12\cdot a\}''$. Let $q_c$ be the idempotent that defines its convex part. Obviously $\frac12\cdot a$ belongs to the convex part and hence $\frac12\cdot a\leq q_c$ so that $\frac12\cdot a$ is orthogonal to $q_c^\perp$. 
	But then $\frac12\cdot a$ is also orthogonal to $\frac12\cdot q_c^\perp$ so that they commute. 
	Since there is necessarily an element $b\in \{\frac12\cdot a\}''$ such that $b\ovee b = q_c$, which is for the same reasons as before orthogonal to $\frac12\cdot q_c^\perp$, this element also commutes with both $a$ and $\frac12\cdot q_c^\perp$. 
	Let $D := \{\frac12\cdot a, \frac12\cdot q_c^\perp, b\}''$. Then $D$ is a normal sub-SEA, and $1=2(\frac12\cdot q_c^\perp \ovee b)$, so that $1$ belongs to the convex part, and hence $D$ is also convex.
\end{proof}

We suggest the following axiom to be added to those already present in a SEA in order to remove the kind of pathology introduced by a-convex algebras.

\begin{definition}
    We say a sequential effect algebra has \Define{commuting halves} when $a\commu b$ and $b=c\ovee c$ implies that $a\commu c$.
\end{definition}

\begin{proposition}\label{prop:a-convexcommutewithhalves}
A normal a-convex SEA~$E$ is convex iff it has commuting halves.
\end{proposition}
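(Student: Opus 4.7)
The plan is to prove each direction as a short application of the characterisation of convexity in Theorem~\ref{thm-a-convex-thm}, so that essentially no new technical work is required.

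For the forward direction, assume $E$ is convex, and suppose $a\commu b$ with $b=c\ovee c$. By Theorem~\ref{thm-a-convex-thm} the unique a-convex action on $E$ satisfies $\lambda\cdot x = x\mult(\lambda\cdot 1)$ for all $x\in E$ and $\lambda\in[0,1]$, and $\frac12\cdot 1$ is central. Moreover, halves are unique in $E$, so $c=\frac12\cdot b = b\mult(\frac12\cdot 1)$. Now $a\commu b$ by hypothesis, and $a\commu \frac12\cdot 1$ because $\frac12\cdot 1$ is central, so axiom S5 yields $a\commu b\mult(\frac12\cdot 1) = c$, as required.

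For the backward direction, assume $E$ has commuting halves. Since $E$ is a-convex, fix an a-convex action and set $h := \frac12\cdot 1$, which satisfies $h\ovee h=1$. For any $a\in E$ we have $a\commu 1$ trivially. Applying the commuting halves property with $b=1$ and $c=h$ gives $a\commu h$. Hence $h$ is a central half of $1$, so $E$ is convex by the equivalence of conditions~\ref{thm-convex-convex} and~\ref{thm-convex-central-half} of Theorem~\ref{thm-a-convex-thm}.

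There is no substantial obstacle here; both directions reduce immediately to properties of the (a-)convex action exposed in Theorem~\ref{thm-a-convex-thm} together with axiom S5. The only subtlety worth flagging is that in the backward direction one must be careful to invoke the existence of an a-convex action (guaranteed by a-convexity) before speaking of $\frac12\cdot 1$; once that is done, centrality of this specific half follows at once from the commuting-halves hypothesis applied to the trivially true relation $a\commu 1$.
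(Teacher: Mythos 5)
Your proof is correct and follows essentially the same route as the paper: in the forward direction both identify $c$ with $b\mult(\frac12\cdot 1)$ via uniqueness of halves and then use centrality of $\frac12\cdot 1$ together with S5, and in the backward direction both apply the commuting-halves hypothesis to $a\commu 1$ to conclude that $\frac12\cdot 1$ is central and invoke Theorem~\ref{thm-a-convex-thm}. No gaps.
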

\begin{proof}
Suppose that $E$ is convex,
and let $a,b,c \in E$ with $a\commu b$ and $b=c\ovee c$
be given.
We must show that~$a \commu c$.
As halfs in normal convex SEAs are unique (cf.~Theorem~\ref{thm-a-convex-thm}), we have
 $c = b\mult(\frac{1}{2} \cdot 1)$.
As~$\frac{1}{2}\cdot 1$ is central (because $E$ is convex), it commutes with $a$, and since $a$ also commutes with $b$, we see that $a$ commutes with~$b\mult(\frac12 \cdot 1) = c$.

Now suppose that $E$ is a-convex 
and has commuting halves. 
Then~$\frac{1}{2}\cdot 1$ commutes with any~$a\in E$,
because~$a\commu 1$.
  Since~$\frac{1}{2}\cdot 1$ is therefore central,
  $E$ is convex by Theorem~\ref{thm-a-convex-thm}.
\end{proof}

By this proposition we see that for normal SEAs the property of having commuting halves is equivalent to having unique halves, i.e.~that $a\ovee a = b\ovee b$ implies $a=b$. This property might be seen as more operationally meaningful than having commuting halves. The reason we chose our axiom to be about commutation is so that it fits in better with the other axioms of Definition~\ref{defn:sea} that require certain elements to commute.

Combining this proposition with Theorem~\ref{thm:maintheorem} then easily gives the following.

\begin{theorem}\label{thm:commuting-halves}
    Let~$E$ be a normal sequential effect algebra with commuting halves. Then $E= B\oplus E_c$ where $B$ is a complete
    Boolean algebra and $E_c$ is a convex normal SEA.
\end{theorem}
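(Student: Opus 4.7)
The plan is to combine the main decomposition theorem, Theorem~\ref{thm:maintheorem}, with Proposition~\ref{prop:a-convexcommutewithhalves} to show that the purely a-convex summand $E_{ac}$ must vanish under the commuting halves hypothesis. Concretely, by Theorem~\ref{thm:maintheorem} we may write $E \cong B \oplus E_c \oplus E_{ac}$ with $B$ a complete Boolean algebra, $E_c$ a convex normal SEA, and $E_{ac}$ a purely a-convex normal SEA. The goal then reduces to proving that $E_{ac} \cong \{0\}$, in which case we can absorb the first two summands into $B$ and $E_c$.

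First I would observe that the property of having commuting halves is inherited by direct summands: if $E\cong E_1\oplus E_2$ has commuting halves and $a,b,c\in E_1$ with $a\commu b$ and $b=c\ovee c$, then the corresponding equalities $(a,0)\commu (b,0)$ and $(b,0)=(c,0)\ovee(c,0)$ in $E$ yield $(a,0)\commu (c,0)$, hence $a\commu c$ in $E_1$. In particular, the a-convex summand $E_{ac}$ inherits commuting halves from~$E$.

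Next, by Proposition~\ref{prop:a-convexcommutewithhalves}, the a-convex normal SEA~$E_{ac}$ with commuting halves is in fact convex. Combined with purity, this will force $E_{ac}$ to be trivial. Indeed, purity means $Z(E_{ac})$ is Boolean, while convexity (by Theorem~\ref{thm-a-convex-thm}.\ref{thm-convex-center-convex}) means $Z(E_{ac})$ is convex. But a convex Boolean algebra must collapse to $\{0\}$: its half $h=\tfrac12\cdot 1$ lies in the Boolean center so is idempotent, yet $h\ovee h=1$ forces $h\wedge h=0$ in the Boolean structure (Example~\ref{ex:orthomodularlattice}), i.e.~$h=0$, hence $1=0$ and $E_{ac}=\{0\}$.

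The only step that requires any real care is verifying that commuting halves passes to direct summands, and this is immediate componentwise. Everything else is a direct appeal to the already-established results Theorem~\ref{thm:maintheorem}, Proposition~\ref{prop:a-convexcommutewithhalves}, and Theorem~\ref{thm-a-convex-thm}. Writing $E \cong B \oplus E_c \oplus \{0\} \cong B \oplus E_c$ finishes the proof.
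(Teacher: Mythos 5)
Your proof is correct and follows essentially the same route as the paper, which gives no written-out argument beyond the remark that the theorem follows by combining Proposition~\ref{prop:a-convexcommutewithhalves} with Theorem~\ref{thm:maintheorem}; you have simply filled in the (easy) details, namely that commuting halves passes to direct summands and that a purely a-convex summand which is also convex must be trivial.
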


\section{Associative sequential products}\label{sec:assoc}

In this section we will go on a tangent, and explore some of the
consequences 
of our representation theorem for normal SEAs (Theorem~\ref{thm:maintheorem})
when the sequential product is associative.

In~\cite{gudder2001sequential} it was noted that it seems
reasonable to expect the sequential product 
to be associative,
that is,
to satisfy $a\mult(b\mult c) = (a\mult b)\mult c$ 
for all $a,b$ and~$c$
(and not just for commuting~$a$ and~$b$).
This is however not the case in quantum theory, 
where the expression $(a\mult b)\mult c$ 
does not correspond to 
any particular physical quantity when $a$ and $b$ do not commute. 
The sequential product in classical probability theory is, of course,
associative. 
This raises the question of whether associativity 
is somehow connected to classicality
of the sequential product.

One might, for example, surmise that
an associative sequential product is necessarily commutative.
For normal SEAs with commuting halves,
this turns out to be correct, see Proposition~\ref{prop-assoc-comm-halves}.
In the presence
of pure a-convexity, on the other hand,
the connection breaks down:
the non-commutative (normal) SEAs in
Examples~\ref{ex:assoc-SEA} and~\ref{ex:horizontal-interval}
have associative sequential products.

\begin{proposition}\label{prop:assoc-is-commutative}
Let $E$ be an associative SEA, and let $p\in E$ be idempotent. 
Then $p$ is central.
\end{proposition}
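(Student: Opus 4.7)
The plan is to prove directly that $p\mult a = a\mult p$ for an arbitrary $a\in E$, by evaluating the triple product $p\mult a\mult p$ in two different ways using associativity and Proposition~\ref{prop:SEAbasicproperties}.\ref{prop:SEAbasicproperties-5} (the characterisation: for idempotent~$p$, an element~$b$ satisfies $b\leq p$ iff $p\mult b = b$ iff $b\mult p = b$).

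First I would exploit associativity to show that \emph{both} $a\mult p$ and $p\mult a$ lie below~$p$. For the former, $(a\mult p)\mult p = a\mult(p\mult p) = a\mult p$, so Proposition~\ref{prop:SEAbasicproperties}.\ref{prop:SEAbasicproperties-5} yields $a\mult p \leq p$. Symmetrically, $p\mult(p\mult a) = (p\mult p)\mult a = p\mult a$, and Proposition~\ref{prop:SEAbasicproperties}.\ref{prop:SEAbasicproperties-5} again gives $p\mult a \leq p$. Note that in a general (non-associative) SEA one only gets $a\mult p\leq a$ and $p\mult a\leq a$ from Proposition~\ref{prop:SEAbasicproperties}.\ref{prop:SEAbasicproperties-2}, so associativity is used essentially here.

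Next I would compute $p\mult(a\mult p)$ in two ways. On the one hand, by associativity, $p\mult(a\mult p) = (p\mult a)\mult p$; and since $p\mult a\leq p$, Proposition~\ref{prop:SEAbasicproperties}.\ref{prop:SEAbasicproperties-5} yields $(p\mult a)\mult p = p\mult a$. On the other hand, since $a\mult p\leq p$, the same proposition yields $p\mult(a\mult p) = a\mult p$. Equating the two expressions gives $p\mult a = a\mult p$, so $p\commu a$, and since $a$ was arbitrary, $p$ is central.

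There is no serious obstacle: the argument is just a few lines once one sees that associativity allows collapsing $(a\mult p)\mult p$ to $a\mult p$, which is what upgrades the standard inequality $a\mult p\leq a$ to the stronger $a\mult p\leq p$ not available in general SEAs. The only bit of care needed is in recognising that evaluating $p\mult a\mult p$ via the two legal parenthesizations is what forces commutation.
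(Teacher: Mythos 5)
Your proof is correct. Every step checks out: from $(a\mult p)\mult p = a\mult(p\mult p) = a\mult p$ and Proposition~\ref{prop:SEAbasicproperties}.\ref{prop:SEAbasicproperties-5} you correctly get $a\mult p\leq p$, symmetrically $p\mult a\leq p$, and then the two evaluations of $p\mult(a\mult p)$ — as $(p\mult a)\mult p = p\mult a$ on one side and as $a\mult p$ on the other — do force $p\mult a = a\mult p$. The route is genuinely different from the paper's. The paper instead starts from $p^\perp\mult a\leq p^\perp$, uses a single application of associativity to get $0=(p^\perp\mult a)\mult p = p^\perp\mult(a\mult p)$, concludes via S3 and S4 that $p\commu a\mult p$ and likewise $p\commu a\mult p^\perp$, and then invokes S5 to commute $p$ with the sum $a\mult p\ovee a\mult p^\perp = a$. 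So the paper leans on the closure axioms S3--S5 for the commutation relation, whereas you obtain the identity $p\mult a = a\mult p$ by a direct order-theoretic computation with the triple product, using associativity three times but only the elementary facts about idempotents from Proposition~\ref{prop:SEAbasicproperties}. Your version is arguably more transparent about \emph{where} associativity is doing work (upgrading $a\mult p\leq a$ to $a\mult p\leq p$); the paper's is slightly more economical in its use of associativity. Both are valid.
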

\begin{proof}
Let~$a\in E$ be given;
we must show that~$p$ commutes with~$a$.
Since~$p^\perp\mult a\leq p^\perp$ is orthogonal to $p$,
we have $0 = (p^\perp\mult a)\mult p = p^\perp\mult (a\mult p)$---using
  associativity here.
  But then $p^\perp \commu a\mult p$, and hence $p\commu a\mult p$. 
Since similarly $p\commu a\mult p^\perp$,
  we get that $p$ commutes with $a\mult p^\perp \ovee a\mult p = a$. 
\end{proof}

This gives some motivation for the lack of associativity in the
sequential product of quantum theory: if it were associative, while
still satisfying all the other axioms of a SEA, then every sharp
measurement (those consisting of idempotent effects, i.e.~projections) had to be classical, in the sense that it is non-disturbing for any other measurement.

\begin{proposition}
  \label{prop-assoc-comm-halves}
An associative normal SEA $E$ with commuting halves is commutative.
\end{proposition}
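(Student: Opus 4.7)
The plan is to reduce to the convex case and then decompose arbitrary elements as directed suprema of central ``step functions''.

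First, I would invoke Theorem~\ref{thm:commuting-halves} to split $E \cong B \oplus E_c$ into a complete Boolean algebra $B$ (trivially commutative) and a convex normal SEA~$E_c$. Associativity passes to the direct summand~$E_c$, so it suffices to prove that any associative convex normal SEA is commutative. Next, I would assemble two families of central elements of~$E_c$: by Proposition~\ref{prop:assoc-is-commutative} every idempotent is central (this is the only place associativity is used after the reduction); and by Theorem~\ref{thm-a-convex-thm} every scalar $\lambda \cdot 1$ is central (this uses convexity). Applying axiom~S5 twice---once for sequential products and once for summable sums---any finite ``step function'' $s = \bigovee_i (\lambda_i \cdot 1) \mult p_i$ with $p_1,\ldots,p_n \in E_c$ mutually orthogonal idempotents and $\lambda_i \in [0,1]$ is then also central in~$E_c$.

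To conclude, given $a, b \in E_c$, I would apply Corollary~\ref{seaspectral} to write $\{a\}'' \cong [0,1]_{C(X)} \oplus B'$ for some extremally-disconnected compact Hausdorff space~$X$ and complete Boolean algebra~$B'$. Using the spectral structure of this commutative directed-complete effect monoid, I would exhibit~$a$ as a directed supremum of step functions of the form above built from idempotents and scalars of $\{a\}'' \subseteq E_c$. By the previous paragraph each such step function is central in~$E_c$, and so commutes with~$b$; the commutativity clause of axiom~S6 then yields $a \commu b$. Since $a, b \in E_c$ were arbitrary, $E_c$ is commutative, and combining with the (commutative) Boolean summand $B$ finishes the proof.

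The main obstacle is the spectral approximation step: verifying that every element of $\{a\}''$ is a directed supremum of step functions built from its scalars and idempotents. This is standard for $[0,1]_{C(X)}$ with $X$ extremally disconnected---one can approximate $f$ from below by dyadic step functions $\bigovee_{k=0}^{2^n-1} \tfrac{k}{2^n} \cdot \chi_{U^n_k}$, where the clopen sets $U^n_k$ arise as closures of the corresponding open level sets of~$f$---and trivial on the Boolean summand~$B'$, since every element there is itself idempotent. Invoking or briefly recalling this approximation (for instance from the representation theory developed in~\cite{first}) is the only non-routine piece of the argument.
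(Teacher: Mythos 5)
Your proof is correct and follows essentially the same route as the paper's: reduce to the convex summand via Theorem~\ref{thm:commuting-halves}, use Proposition~\ref{prop:assoc-is-commutative} and Theorem~\ref{thm-a-convex-thm} to make idempotents and scalar multiples of idempotents central, form central simple (step) elements via S5, and exhaust an arbitrary element as a directed supremum of these via the spectral theorem (Corollary~\ref{seaspectral}) and S6. Your explicit dyadic approximation in $[0,1]_{C(X)}$ just fills in the step the paper leaves implicit.
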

\begin{proof}
    By Theorem~\ref{thm:commuting-halves}, $E$ is a direct sum of a Boolean algebra and a convex normal SEA. As the Boolean algebra is always commutative we can without loss of generality assume that $E$ is convex.
    By Proposition~\ref{prop:assoc-is-commutative}, all idempotents
    are central, and hence by Theorem~\ref{thm-a-convex-thm},
    $\lambda\cdot p$ is central for any $\lambda\in[0,1]$ and $p$
    idempotent, and hence any $\bigovee_{i=1}^n \lambda_i \cdot p_i$ with
    the $p_i$ idempotent and orthogonal is central. Call such
    elements \Define{simple}. By the spectral theorem (Corollary~\ref{seaspectral}), any
    element~$a$ can be written as $a=\bigvee_n a_n$ where $a_1\leq a_2\leq\ldots
    $ is an increasing sequence of simple elements,
    and hence central.  Thus~$E=Z(E)$ is commutative.
\end{proof}

At the start of this section it was noted that the horizontal sum of unit intervals Example~\ref{ex:horizontal-interval} is an example of a non-commutative associative normal SEA.
As Proposition~\ref{prop:assoc-a-convex} will show, this is in a sense the most general possible example, when restricting to a-convex factors.

\begin{lemma}
  \label{lem:multfloor}
  Let $E$ be a normal SEA with $a,b\in E$.
  Then~$a\mult b=a$ implies $a\mult \floor{b}=a$.
\end{lemma}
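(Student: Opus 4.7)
My plan is to reduce the claim to the assertion that $a\mult b^n=a$ for every $n$, and then pass to the infimum via complementation so that axiom S6 (which is stated for directed suprema) can be applied.

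First I would observe that the hypothesis $a\mult b=a$ already forces $a\commu b$. Indeed, using S1 together with $b\ovee b^\perp=1$, we have
\[
a \ =\ a\mult 1 \ =\ a\mult b \,\ovee\, a\mult b^\perp \ =\ a\,\ovee\, a\mult b^\perp,
\]
so by cancellation $a\mult b^\perp = 0$. Then S3 gives $b^\perp \mult a=0$, so $a\commu b^\perp$, and S4 yields $a\commu b$.

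Next, with commutativity secured, the associativity part of S4 lets me prove by induction that $a\mult b^n=a$ for every $n\geq 1$: the base case is the hypothesis, and for the step
\[
a\mult b^{n+1} \ =\ a\mult (b\mult b^n)\ =\ (a\mult b)\mult b^n \ =\ a\mult b^n \ =\ a,
\]
using $a\commu b$ and S4. Then exactly the same cancellation argument as in the first paragraph (applied with $b^n$ in place of $b$) shows $a\mult (b^n)^\perp = 0$ for every $n$.

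Finally, I would pass to the floor. Since each $b^{n+1}=b\mult b^n\leq b^n$ by Proposition~\ref{prop:SEAbasicproperties}.\ref{prop:SEAbasicproperties-2}, the sequence $(b^n)_n$ is decreasing, so $((b^n)^\perp)_n$ is an increasing (hence directed) family with supremum $\bigvee_n (b^n)^\perp = (\bigwedge_n b^n)^\perp = \floor{b}^\perp$ (order-reversing property of $(\,)^\perp$, cf.~Remark~\ref{remark:infima}). Applying S6,
\[
a\mult \floor{b}^\perp \ =\ a\mult \bigvee_n (b^n)^\perp \ =\ \bigvee_n a\mult (b^n)^\perp \ =\ 0.
\]
One last application of S1 then gives $a = a\mult \floor{b}\,\ovee\, a\mult \floor{b}^\perp = a\mult \floor{b}$, as desired. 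No step looks like a serious obstacle; the only subtle point is remembering that S6 addresses suprema, which is why I dualize through the complement rather than trying to push the product through $\bigwedge_n b^n$ directly.
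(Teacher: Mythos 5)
Your proof is correct and follows essentially the same route as the paper's: establish $a\commu b$, deduce $a\mult b^n=a$ for all $n$ by induction, and pass to the limit to reach $\floor{b}$. Your dualization through complements to apply S6 to the increasing family $((b^n)^\perp)_n$ is a slightly more careful justification of the final step than the paper's direct assertion that the product passes through the filtered infimum, but it is the same argument in substance.
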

\begin{proof}
Note that~$a\commu b$.
Indeed, since~$a\mult b=a$, we have~$a\mult b^\perp = 0$,
  so~$a\commu b^\perp$, and thus~$a \commu b$.
  Then~$a\mult b^2 = a\mult (b\mult b)=(a\mult b)\mult b = a\mult b= a$.
Applying the same reasoning to~$b^2$ in place of~$b$, we
get~$a\mult b^4=0$.
By induction we get~$a\mult b^{2^n}=0$ for all~$n$.
Since~$b^n\leq b^{2^n}$,
we get~$a\mult b^n=0$ for all~$n$ too,
and so~$a\mult\floor{b}=a\mult\bigwedge_n b^n
  =\bigwedge_n a\mult b^n=\bigwedge_n a =a$.
\end{proof}

\begin{lemma}
    Let $E$ be a normal SEA. If the only idempotents in $E$ are $0$ and $1$, then $a\mult b = 0$ implies $a=0$ or $b=0$.
\end{lemma}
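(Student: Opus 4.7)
The strategy is to pass to the bicommutant $\{a,b\}''$ and invoke the structure theorem for directed-complete effect monoids. Assume $a\mult b = 0$. Axiom~S3 yields $b\mult a = 0$ as well, so $a\commu b$. By Proposition~\ref{prop:doublecommutant}, the bicommutant $\{a,b\}''$ is a commutative normal SEA, and by Example~\ref{prop:commutativemonoidissequential} it is a directed-complete commutative effect monoid. Theorem~\ref{thm:first} then provides an isomorphism $\{a,b\}'' \cong [0,1]_{C(X)} \oplus B$ for some extremally disconnected compact Hausdorff space $X$ and some complete Boolean algebra $B$.

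Every idempotent in $\{a,b\}''$ is an idempotent in $E$, since the sequential product is inherited, so by hypothesis the only idempotents of $\{a,b\}''$ are $0$ and $1$. The idempotents of the direct sum are all pairs $(p,q)$ in which $p$ is the characteristic function of some clopen $U\subseteq X$ (idempotents in $[0,1]_{C(X)}$ being exactly such characteristic functions) and $q\in B$ is arbitrary, every element of a Boolean algebra being idempotent. For this set to reduce to just $0$ and $1$, one of the two summands must be trivial: otherwise the element $(1_{C(X)}, 0_B)$ gives a third idempotent.

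There are thus two cases. If $X$ is empty then $\{a,b\}'' = B$, and $B$ itself can have no non-trivial idempotents, so $B = \{0,1\}$; then $a,b\in\{0,1\}$ and $a\mult b = 0$ immediately forces $a=0$ or $b=0$. If instead $B$ is trivial, then $\{a,b\}'' = [0,1]_{C(X)}$, and the hypothesis forces $X$ to have no proper non-empty clopen subsets, that is, $X$ is connected. The key geometric step---and the only real obstacle in the argument---is that a connected compact Hausdorff extremally disconnected space has at most one point: regularity together with extremal disconnectedness lets one separate any two distinct points by disjoint clopen neighbourhoods, producing a non-trivial clopen set that contradicts connectedness. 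Hence $X$ is a single point, $\{a,b\}''\cong [0,1]$ with ordinary multiplication, and $ab=0$ once more gives $a=0$ or $b=0$.
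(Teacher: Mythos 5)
Your proof is correct, but it takes a genuinely different route from the paper's. The paper argues directly and locally: from $a\mult b=0$ it gets $a\mult b^\perp=a$, hence $a\mult\floor{b^\perp}=a$ by Lemma~\ref{lem:multfloor}; since $\floor{b^\perp}$ is an idempotent it must be $0$ or $1$ by hypothesis, the first case giving $a=0$ and the second giving $b^\perp=1$, i.e.\ $b=0$. You instead pass to the commutative bicommutant $\{a,b\}''$ and invoke the full representation theorem (Theorem~\ref{thm:first}), reducing to the observation that the absence of non-trivial idempotents kills the Boolean summand and forces $X$ to be connected, hence a single point, so that $\{a,b\}''\cong[0,1]$. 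All the steps check out: $a\mult b=0=b\mult a$ does give $a\commu b$; idempotents of the sub-SEA $\{a,b\}''$ are idempotents of $E$; and your topological step (a connected, extremally disconnected, compact Hausdorff space has at most one point, via the clopen closure of one of two separating open sets) is sound, as are the degenerate cases where a summand is trivial. What the paper's argument buys is economy --- it needs only the floor construction and uses the hypothesis through the single idempotent $\floor{b^\perp}$ --- whereas your argument buys a stronger structural conclusion along the way, namely that $\{a,b\}''\cong[0,1]$, which is essentially the form in which this fact is exploited later in Proposition~\ref{prop:assoc-a-convex}.
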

\begin{proof}
Since~$a\mult b=0$,
we have~$a \mult b^\perp = a$,
  and so~$a\mult\floor{b^\perp} = a$ by Lemma~\ref{lem:multfloor}.
Thus~$a=0$ when~$\floor{b^\perp}=0$.
  Otherwise, $\floor{b^\perp}=1$
  (since~$0$ and~$1$ are the only idempotents in~$E$),
  so~$1=\floor{b^\perp}\leq b^\perp \leq 1$,
  which implies that~$b^\perp=1$,
  and thus~$b=0$.
\end{proof}

\begin{proposition}\label{prop:assoc-a-convex}
    Let $E$ be a normal a-convex factor with an associative sequential product. Then there is some index set $I$ such that $E$ is isomorphic to the horizontal sum of $\# I$ copies of the real unit interval $[0,1]$.
\end{proposition}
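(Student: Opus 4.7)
The plan is to realise $E$ as a horizontal sum of copies of $[0,1]$ indexed by the equivalence classes of the commutation relation. First I would observe that the only idempotents of $E$ are $0$ and $1$: since $E$ is a factor, $Z(E) = \{0,1\}$, and by Proposition~\ref{prop:assoc-is-commutative} every idempotent is central. The preceding lemma then gives that $a\mult b = 0$ implies $a = 0$ or $b = 0$. Next, for any non-trivial $a\in E$, I would show $\{a\}' \cong [0,1]$. The commutant is a normal sub-SEA with only trivial idempotents; Theorem~\ref{thm:SEAsplitupinconvexandsharp} and Proposition~\ref{prop:a-convexsplitinconvexandsharp} force $\{a\}'$ to be convex (the Boolean summand vanishes, and the purely a-convex summand is excluded because $a$ is a non-idempotent element of $Z(\{a\}')$, making $Z(\{a\}')$ non-Boolean). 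Being convex and associative, $\{a\}'$ has commuting halves (Proposition~\ref{prop:a-convexcommutewithhalves}) and is hence commutative by Proposition~\ref{prop-assoc-comm-halves}. The spectral theorem then yields $\{a\}'\cong [0,1]_{C(X)}$ for an extremally disconnected compact Hausdorff $X$, and triviality of idempotents forces $X$ to be a point, giving $\{a\}'\cong [0,1]$.

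The relation $a \sim b \iff a \commu b$ on $E\setminus\{0,1\}$ is then an equivalence, with transitivity coming from the commutativity of each $\{b\}'$. Let $I$ index the classes and for each $i \in I$ set $C_i$ to be the $i$-th class together with $\{0,1\}$, so that $C_i \cong [0,1]$ (it equals $\{a\}'$ for any representative $a$). By construction $E = \bigcup_{i\in I} C_i$ and $C_i \cap C_j = \{0,1\}$ for $i \neq j$. To identify $E$ with $HS_{i\in I}([0,1])$ it remains to check that the effect algebra and sequential product operations of $E$ match those of the horizontal sum.

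The crux is the claim $(\ast)$: any two summable non-trivial elements of $E$ lie in the same block. Granted $(\ast)$, the order interval $[0,b]_E$ for non-trivial $b\in C_i$ is contained in $C_i$ (for $x\leq b$ write $x\ovee(b\ominus x) = b$ and apply $(\ast)$), so $b\mult c \in C_i$ for every $c\in E$. Fixing isomorphisms $\phi_k\colon [0,1]\to C_k$, the normal additive map $[0,1]\to C_i$ sending $\mu$ to $b\mult\phi_j(\mu)$ takes the value $b$ at $\mu=1$, and at $\mu = 1/2$ it is a half of $b$ lying in the convex $C_i$, hence equal to $\phi_i(\phi_i^{-1}(b)/2)$ by uniqueness of halves in a convex effect algebra. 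By Proposition~\ref{prop:add-into-nsea}.\ref{prop:add-into-nsea-4} it therefore coincides with $\mu\mapsto\phi_i(\phi_i^{-1}(b)\mu)$, giving $\phi_i(\lambda)\mult\phi_j(\mu) = \phi_i(\lambda\mu)$, exactly matching $\lambda_i\mult\mu_j = (\lambda\mu)_i$. Complements and sums match by construction of the blocks.

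The main obstacle is proving $(\ast)$. Assume for contradiction that non-trivial $b\in C_i$, $c\in C_j$ with $i\neq j$ satisfy $b\perp c$, and set $d := b\ovee c$. Then $d\notin\{0,1\}$ (indeed $d = 1$ would give $c = b^\perp$, hence $c\commu b$), and $\{b,c\}' = \{0,1\}$ (any non-trivial common commutant would, via transitivity, force $b\commu c$), so $d$ cannot commute with both $b$ and $c$. The plan is to combine the identities $b\mult d = b^2\ovee b\mult c$ and $c\mult d = c\mult b\ovee c^2$ (from S1), the associativity of $\mult$, and the absence of zero divisors to show that nevertheless $d$ must commute with both $b$ and $c$, delivering the desired contradiction.
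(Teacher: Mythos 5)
Your structural setup is sound and in fact takes a cleaner route than the paper's: by applying Theorem~\ref{thm:SEAsplitupinconvexandsharp}, Proposition~\ref{prop:a-convexsplitinconvexandsharp}, Proposition~\ref{prop:a-convexcommutewithhalves} and Proposition~\ref{prop-assoc-comm-halves} to the \emph{commutant} $\{a\}'$ (rather than the bicommutant $S''$, which is what the paper uses together with a Zorn's-lemma choice of maximal mutually non-commuting elements), you get $\{a\}'\cong[0,1]$ directly, and transitivity of commutation on $E\setminus\{0,1\}$ comes for free. Your multiplicativity argument via uniqueness of halves is also fine, and it does not actually need $(\ast)$: since $2^n\bigl(b\mult\phi_j(2^{-n})\bigr)=b$, S5 already gives $b\mult\phi_j(2^{-n})\commu b$, hence $b\mult\phi_j(2^{-n})\in\{b\}'=C_i$, and Proposition~\ref{prop:add-into-nsea}.\ref{prop:add-into-nsea-4} does the rest. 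So multiplicativity can, and should, be established \emph{before} $(\ast)$.

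The genuine gap is $(\ast)$ itself, which you correctly identify as the crux but leave as an unexecuted ``plan''. This claim is exactly the order-embedding half of the theorem and is where the paper spends the final third of its proof; it cannot be waved through. Moreover, the ingredients you propose are very unlikely to suffice: the identities $b\mult d=b^2\ovee b\mult c$ and $c\mult d=c\mult b\ovee c^2$, associativity, and the absence of zero divisors are all statements about the multiplicative (``block-diagonal'') structure, and one can check that writing $b=\phi_i(\beta)$, $c=\phi_j(\gamma)$, $d=\phi_k(\delta)$ and using multiplicativity, every such identity reduces to a scalar identity that is consistent with $i\neq j\neq k$ (one only obtains $\delta=\beta+\gamma$, and $d\commu b$ iff $k=i$, with nothing forcing $k=i$). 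The obstruction to cross-block summability lives in the complement/order structure, not in the product. A working argument (essentially the paper's) goes: from $b\perp c$ descend to $\phi_i(2^{-n})\perp\phi_j(2^{-n})$ for $2^{-n}\leq\beta,\gamma$; the sum is $\phi_m(2^{1-n})$ for some block $m$ by the scalar computation; adding $2^{n-1}$ copies yields $\phi_i(\tfrac12)\ovee\phi_j(\tfrac12)=1$, whence $\phi_j(\tfrac12)=\phi_i(\tfrac12)^\perp$ commutes with $\phi_i(\tfrac12)$ and so $i=j$. Without some such step that genuinely uses the complement, your proof is incomplete at its central point.
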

\begin{proof}
    By Proposition~\ref{prop:assoc-is-commutative} any idempotent is central. 
    However, by assumption $Z(E) = \{0,1\}$
    so that the only idempotents are $0$ and $1$. 
    Hence, by the previous lemma, 
    $E$ does not have any non-trivial zero divisors.%
\footnote{Here a \emph{zero divisor} is an
  element~$a\in E$ for which there is~$b\in E$ with $a\mult b=0$---or (by S3) equivalently, $b\mult a=0$.}

    Let $S\subseteq E$ be any non-empty subset of mutually commuting elements,
    and let $q_c$ be the idempotent
    (from Theorem~\ref{thm:first})
        with~$q_c  \mult S^{\prime\prime}$ convex and
        with~$q_c^\perp \mult S^{\prime\prime}$ Boolean.
 Then $q_c = 1$ or $q_c = 0$. If $q_c = 0$,
    then $S''$ is Boolean and hence $S''=\{0,1\}$. As $S\subseteq
    S''$, the only elements of $S$ can then be $0$ and $1$. 
    But since~$1$ and~$0$ are central we would then have $\{0,1\} = S''=E$,
    so that~$E$ is not a-convex. Hence $q_c = 1$, and so~$S''$ is convex.
    As $E$ has no non-trivial zero divisors, the same is true for $S''$. But then
    $S''$ is a convex directed-complete effect monoid without
    non-trivial zero divisors so that 
    by~\cite[Theorem~71]{first} 
    we have $S''\cong [0,1]$.
    As $S\subseteq S''$ there is then a unique additive 
    unital map~$\varphi_S \colon [0,1] \to E$
        such that for each~$s \in S$,
        there is a~$\lambda_s \in [0,1]$
        with~$\varphi_S(\lambda_s) = s$.
    If~$T \supseteq S$ is a larger set of mutually commuting
        elements, and there is some~$s \in S$ with~$s \notin \{0,1\}$,
            then it follows from part~\ref{prop:add-into-nsea-4}
                of Proposition~\ref{prop:add-into-nsea}
                that~$\varphi_S = \varphi_T$.

    Let $I$ be a maximal collection of non-commuting elements
        of~$E \setminus \{0,1\}$ (which exists by Zorn's lemma).
    We claim that $E$ is
    isomorphic to the horizontal sum of $\# I$ copies of $[0,1]$,
        where the sequential product is defined by~$(\lambda, a) \mult (\mu, b) := (\lambda \mu, a)$.
    To show this, we will construct an isomorphism~$\Theta\colon F := \HS([0,1]_{a \in I}, I) \to E$,
    where $\HS([0,1]_{a\in I}, I)$ is as in 
    Definition~\ref{def:horizontal-sum}.
    For~$a \in I$, write~$\varphi_a := \varphi_{\{a\}}$.
    Define~$\Theta(\lambda, a) := \varphi_a (\lambda)$.
    Note that for $a,b \in I$ with~$a \neq b$ 
    we have~$\varphi_a(\lambda) \neq \varphi_b(\mu)$ for all $\lambda,\mu\in(0,1)$
    and hence $\Theta$ is injective.
    
    It is easy to see that~$\Theta$ is unital and additive.
    It then remains to show that $\Theta$ is multiplicative, surjective, and an order embedding, i.e.~that $\Theta(\lambda,a)\leq \Theta(\mu,b)$ implies $(\lambda,a)\leq (\mu,b)$.

To show it is multiplicative, suppose~$(\lambda,a), (\mu,b) \in F$ 
  are given with $\lambda,\mu \in (0,1)$. 
We need to show that $\varphi_a(\lambda)\mult \varphi_b(\mu) 
  = \varphi_a(\lambda\mu)$.
For the moment assume that~$\mu = 2^{-n}$ for some~$n \in \N$.
    Note that~$2^n ( \varphi_a(\lambda) \mult \varphi_b(2^{-n}) )
        = \varphi_a(\lambda)$ by additivity
            and so~$\varphi_a(\lambda) \mult \varphi_b(2^{-n})$ commutes
                with~$\varphi_a(\lambda)$.
        Thus by the earlier
        analysis~$\varphi_a(\lambda) \mult \varphi_b(2^{-n}) = \varphi_a(\zeta)$
            for some~$\zeta$.
            As~$2^n \varphi_a(\zeta) = \varphi_a(\lambda)$
                we must have~$\zeta = 2^{-n}\lambda$,
                and hence
                \[\Theta(\lambda, a) \mult
                            \Theta(2^{-n}, b)
                    := \varphi_a(\lambda) \mult \varphi_b(2^{-n})
                    = \varphi_a(\lambda 2^{-n})
                    =: \Theta((\lambda, a) \mult (2^{-n}, b)).\]
            Using additivity of $\Theta$ we can show the same result for $\mu = m2^{-n}$ and by taking joins we get the result for arbitrary $\mu$ so that indeed~$\Theta(\lambda, a) \mult \Theta(\mu, b)
                    = \Theta((\lambda, a) \mult (\mu, b))$
                for arbitrary~$\mu$.
                Thus~$\Theta$ is multiplicative.
        
To show~$\Theta$ is surjective, suppose~$a \in E$ is given.
As~$\Theta$ is additive and unital, the cases~$a \in \{0,1\}$
    are already covered, so assume~$a \notin \{0,1\}$.
There must be some~$b \in I$ with which~$a$ commutes,
    for otherwise~$I$ would not be maximal.
But then~$a = \varphi_b(\lambda)
        = \Theta(\lambda, b)$ for some~$\lambda$ as desired.

To show~$\Theta$ is an isomorphism, it remains to be shown
    that~$\Theta$ is an embedding, and as $x\leq y \iff x\perp y^\perp$
    for any $x,y$ in an effect algebra
    it hence suffices to show that for any
    $(\lambda,a),(\mu, b) \in F$
    with~$\Theta(\lambda, a) \perp \Theta(\mu, b)$ we have 
    $(\lambda,a) \perp (\mu, b)$.

So let~$(\lambda,a),(\mu, b) \in F$
    be given with~$\Theta(\lambda, a) \perp \Theta(\mu, b)$,
    so that the sum $\varphi_a(\lambda)\ovee \varphi_b(\mu)$ exists.
    When $\lambda\in \{0,1\}$, the desired result is trivial, so assume $\lambda,\mu \notin \{0,1\}$.
    Then in order to show that $(\lambda,a) \perp (\mu, b)$ 
    we need $a=b$ and $\lambda+\mu \leq 1$.

By surjectivity there is some~$c \in I$ and~$\zeta \in [0,1]$
    with~$\varphi_a(\lambda) \ovee \varphi_b(\mu) = \varphi_c(\zeta)$.
By multiplicativity of $\Theta$ we have for any~$\alpha \in (0,1)$:
    \[
         \varphi_c(\alpha \lambda) \ovee
          \varphi_c(\alpha \mu)
          =
    \varphi_c(\alpha) \mult (\varphi_a(\lambda) \ovee \varphi_b(\mu))
    = \varphi_c(\alpha) \mult \varphi_c (\zeta)
    = \varphi_c(\alpha\zeta).\]
    But then~$\alpha\lambda + \alpha\mu  = \alpha\zeta$ 
    and hence~$\alpha\lambda + \alpha\mu \leq 1$.
As~$\alpha \in (0,1)$ was arbitrary,
    we get~$\lambda + \mu = \zeta$ and $\lambda+\mu\leq 1$.
It remains to show that $a=b$.
Let $n\in \N$ be such that $2^{-n} \leq \lambda, \mu$.
We then also have $\varphi_a(2^{-n})\perp \varphi_b(2^{-n})$ 
and hence by the previous argument there is a $c$ such that
$\varphi_a(2^{-n})\ovee \varphi_b(2^{-n}) = \varphi_c(2^{1-n})$.
Adding these sums to themselves $2^{n-1}$ times and using additivity
of the $\varphi$'s we then get 
$\varphi_a(\frac12) \ovee \varphi_b(\frac12) = \varphi_c(\frac12 \ovee \frac12) = 1$.
Hence $\varphi_b(\frac12) = \varphi_a(\frac12)^\perp$ so that $\varphi_a(\frac12)$
and $\varphi_b(\frac12)$ commute. We must then have $a=b$.
\qedhere
\end{proof}

\section{Conclusions and outlook}\label{sec:conclusion}

We have shown that any normal sequential effect algebra can be decomposed into a direct sum of a complete Boolean algebra, a convex normal sequential effect algebra and a new kind of normal sequential effect algebra that we have dubbed `purely a-convex'.

Our proofs relied in a few crucial ways on the assumption of directed completeness of the sequential effect algebra. It would be interesting to see
if a similar characterization or decomposition is possible 
in the presence of just $\omega$-completeness,
that is,
for what Gudder and Greechie dubbed 
$\sigma$-SEAs~\cite{gudder2002sequential}. 
The characterisation of $\omega$-complete effect monoids 
of~\cite{first} imposes considerable restrictions
on the commutative subalgebras of such a 
$\sigma$-SEA, but it is not exactly clear how these can be brought
to bear on the global structure of the $\sigma$-SEA.

We introduced the notion of an a-convex factor for those a-convex algebras where the only central elements are $0$ and $1$. Examples of a-convex factors include any horizontal sum constructed out of a-convex SEAs. We do not know of any other examples of a-convex factors. This might point towards a characterisation of a-convex factors as horizontal sums of a-convex SEAs.

Many results and structures originally proven or defined for operators on Hilbert spaces generalise to C$^*$-algebras, and more generally to Jordan algebras, which were originally defined to model the space of self-adjoint operators on a Hilbert space, but also allow for, for instance, quaternionic Hilbert spaces~\cite{jordan1933}.
It is then not surprising that while the sequential product operation was originally defined for Hilbert space operators, it also works for \emph{JB-algebras}~\cite{hanche1984jordan}, the `Jordan equivalent' of a C$^*$-algebra~\cite{wetering2019commutativity}.
In fact, the only known examples of convex normal SEAs are unit intervals of JB-algebras.
This raises the question whether \emph{all} convex normal SEAs come from JB-algebras in this way.

What is known is that convex normal SEAs are order-isomorphic to the unit interval of a complete strongly Archimedean ordered vector space that furthermore has a homogeneous positive cone~\cite{wetering2018characterisation}, which already greatly restricts the possibilities for such algebras.
Furthermore, when the space is finite-dimensional, and the sequential product is norm-continuous in the first argument, the space must indeed be a JB-algebra~\cite{wetering2018sequential}. What is also known is that when the normal sequential product satisfies for all idempotents $p$ and $q$ the identity $(p\mult q)^2 = p\mult (q\mult p)$ and the implication $\omega(q) = 1 \implies \omega(q\mult p) = \omega(p)$ for any state $\omega$, then the space must be a JB-algebra~\cite{wetering2018sequential}.
There are many elegant characterisations of Jordan algebras~\cite{koecher1957positivitatsbereiche,alfsen2012geometry,barnum2019strongly}
that in turn can be used to characterise quantum theory~\cite{barnum2014higher,wetering2018reconstruction}.
If the existence of a sequential product also characterises Jordan algebras this would give another pathway to understanding the fundamental structures of quantum mechanics.
Indeed, combined with the results of this paper this would give a way to reconstruct quantum theory
without even referring a priori to the concept of real numbers or probabilities.

\medskip 
\noindent\textbf{Acknowledgements}: JvdW is supported by a Rubicon fellowship financed by the Dutch Research Council (NWO). The authors wish to thank the anonymous reviewers for their valued input which has led to significant improvements to the presentation of the paper.

\bibliographystyle{plainnat}
\bibliography{main}
\end{document}